\newcommand{\pagenumbaa}{1}
\theoremstyle{plain}
\newtheorem{mythm}{Theorem}[section]
\newtheorem{myprop}[mythm]{Proposition}
\newtheorem{mycor}[mythm]{Corollary}
\newtheorem{mylem}[mythm]{Lemma}
\theoremstyle{definition}
\newtheorem{myex}[mythm]{Example}
\newtheorem{myremark}[mythm]{Remark}
\newtheorem{myass}[mythm]{Assumption}
\newcommand{\bracket}[1]{[#1]}
\newcommand{\UB}{\mathrm{UB}} 
\newcommand{\LB}{\mathrm{LB}} 
\newcommand{\Ent}{\mathrm{Ent}} 
\newcommand{\W}{\mathsf{W}} 
\def\setC{\mathsf{c}}
\newcommand{\introsection}[1]{\emph{#1}.---}
\newcommand{\I}[2]{I\!\left({#1},{#2}\right)} 
\newcommand{\D}[2]{D\!\left({#1}\right| \!\!\left|{#2}\right)} 
\newcommand{\Prob}[1]{\,{\mathds P} \!\left[#1\right]} %
\newcommand{\E}[1]{\,{\mathds E}\!\left[#1\right]} 
\newcommand{\Var}[1]{\,{\mathsf{Var}}\!\left[#1\right]} 
\DeclareMathOperator{\st}{s.t.}
\newcommand{\transp}{\ensuremath{^{\scriptscriptstyle{\top}}}}
\newcommand{\Wn}{\mathsf{W}^{(V,n)}}
\newcommand{\Borelsigalg}[1]{\ensuremath{\mathcal{B}\!\left(#1\right)}}
\newcommand{\R}{\ensuremath{\mathbb{R}}}
\newcommand{\Rp}{\ensuremath{\R_{\geq 0}}}
\newcommand{\Rsp}{\ensuremath{\R_{>0}}}
\newcommand{\Lp}[1]{\mathrm{L}^{#1}}
\newcommand{\brackett}[1]{ \left \lbrace \text{#1}\right \rbrace }
\newcommand{\ceil}[1]{\ensuremath{\lceil{#1}\rceil}}
\long\def\symbolfootnote[#1]#2{\begingroup\def\thefootnote{\fnsymbol{footnote}}\footnote[#1]{#2}\endgroup}
\begin{document}

\symbolfootnote[0]{
The material in this paper was presented in part at the 52nd Annual Allerton Conference on Communication, Control, and Computing, Monticello, IL, October 2014. $\vspace{1mm}$}

\title{Capacity of Random Channels with Large Alphabets}

 \author{Tobias Sutter}
 \email[]{$\brackett{sutter,\,lygeros}$@control.ee.ethz.ch} 
 \affiliation{Automatic Control Laboratory, ETH Zurich, Switzerland}
 
 \author{David Sutter}
 \email[]{suttedav@phys.ethz.ch}
 \affiliation{Institute for Theoretical Physics, ETH Zurich, Switzerland}

 \author{John Lygeros}
 \email[]{$\brackett{sutter,\,lygeros}$@control.ee.ethz.ch} 
 \affiliation{Automatic Control Laboratory, ETH Zurich, Switzerland}


\begin{abstract}

We consider discrete memoryless channels with input alphabet size $n$ and output alphabet size $m$, where $m=\ceil{\gamma n}$ for some constant $\gamma>0$. The channel transition matrix consists of entries that, before being normalized, are independent and identically distributed nonnegative random variables $V$ and such that $\E{(V \log V)^2}<\infty$. We prove that in the limit as $n\to \infty$ the capacity of such a channel converges to $\Ent(V) / \mathds{E}[V]$ almost surely and in $\Lp{2}$, where $\Ent(V):= \E{V\log V}-\E{V}\log \E{V}$ denotes the entropy of $V$. 
We further show that, under slightly different model assumptions, the capacity of these random channels converges to this asymptotic value exponentially in $n$. 
Finally, we present an application in the context of Bayesian optimal experiment design.

\end{abstract}

 \maketitle

 \setcounter{page}{\pagenumbaa}  
 \thispagestyle{plain}

\section{Introduction}

Since Shannon's seminal 1948 paper \cite{shannon48}, channel capacity has become a fundamental concept in information theory, specifying the asymptotic limit on the maximum rate at which information can be transmitted reliably over a channel. In this work, we restrict ourselves to discrete memoryless channels (DMCs), that comprise a finite input alphabet $\mathcal{X} = \{1,2,\hdots,n\}$, a finite output alphabet $\mathcal{Y} = \{1,2,\hdots,m\}$, and a conditional probability mass function expressing the probability of observing the output symbol $y$ given the input symbol $x$, denoted by $\W_{x,y}$. Any DMC can be represented by a stochastic matrix $\W=(\W_{x,y})_{x\in\mathcal{X},y\in\mathcal{Y}} \in [0,1]^{n\times m}$, whose rows are normalized, i.e., $\sum_{y\in \mathcal{Y}} \W_{x,y} =1$ for all $x\in \mathcal{X}$.
According to Shannon \cite{shannon48}, the channel capacity of a DMC $\W$ is given by
\begin{equation} \label{eq:capacity}
C(\W)=\max \limits_{p \in \Delta_n} \I{p}{\W}, 
\end{equation} 
where $\I{p}{\W}:=\sum_{x \in \mathcal{X}} p(x) \D{\W_{x,\cdot}}{(p\W)(\cdot)}$ denotes the mutual information and 
$\Delta_{n}:=\{  x\in\R^{n} | \sum_{i=1}^{n} x_{i}=1, x_{i}\geq 0 \text{ for all }i \}$ the $n$-simplex. The channel law is described by $\W_{x,y}=\Prob{Y=y|X=x}$, $(p\W)(\cdot)$ denotes the probability distribution of the channel output induced by $p$ and $\W$ which is given by $(p\W)(y):=\sum_{x \in \mathcal{X}} p(x) \W_{x,y}$ for $y \in \mathcal{Y}$ and $\D{\cdot}{\cdot}$ is the relative entropy.
The optimization problem \eqref{eq:capacity}, while being convex, in general does not admit a closed form solution. Therefore, channel capacities are usually approximated with numerical algorithms such as the Blahut-Arimoto algorithm \cite{blahut72,arimoto72}, whose computational complexity of finding an additive $\varepsilon$-close solution scales cubically in the alphabet size, and as such the computational cost required for an acceptable accuracy for channels with large input alphabets can be considerable (see \cite{ref:Sutter-15} for more details).

In this paper, we are interested in a particular class of DMCs which are characterized by the property that each entry of their channel matrix is an i.i.d.\ random variable before the rows are normalized.
Two different scenarios are considered; first we assume that each entry of the channel transition matrix is a nonnegative i.i.d.\ random variable $V$ before being normalized and that $m=\ceil{\gamma n}$ for some constant $\gamma>0$.
Using duality of convex optimization, we prove in Theorem~\ref{thm:main} that as $n \to \infty$ the capacity of such a (random) DMC converges to $\tfrac{\mu_2}{\mu_1}-\log \mu_1$ almost surely and in $\Lp{2}$, where $\mu_1:=\E{V}>0$ and $\mu_2:=\E{V \log V}$. 
Second, we consider a more general setup under slightly different model assumptions, where each entry $V_{x,y}$ of the channel transition matrix, before being normalized, is independent and distributed on the nonnegative real line such that for all $x\in\mathcal{X}$ and for all $y\in\mathcal{Y}$ we have $\mu_{1,n}:=\tfrac{1}{m} \sum_{y \in \mathcal{Y}} \E{V_{x,y}}=\tfrac{1}{n} \sum_{x \in \mathcal{X}} \E{V_{x,y}}$ and $\mu_{2,n}:=\tfrac{1}{m} \sum_{y \in \mathcal{Y}} \E{V_{x,y} \log V_{x,y}}$.
In Theorem~\ref{thm:finiteSize} we show that the capacity of such a random DMC converges exponentially in $n$ to its asymptotic value $\lim_{n\to\infty}\tfrac{\mu_{2,n}}{\mu_{1,n}}-\log \mu_{1,n}$ in probability. Therefore, for the considered class of random DMCs the capacity, as the alphabet sizes tend to infinity, admits a closed form expression. We will show that this favourable property can be exploited in applications in the context of Bayesian optimal experiment design.

In the literature there exists a variety of extensively studied channel models that are described by random constructions, where one observes that in the limit as the blocklength tends to infinity the capacity converges to a deterministic value. This is sometimes viewed as a manifestation of of diversity \cite{BPS98,TV14}. A common model studied in \cite{BPS98,TV14} is of the form $y=Gx +w$, where $x$ is an $n$-dimensional input vector and $y$ represents an $m$-dimensional output vector. $G$ is modeled as a random matrix (the simplest example is the one where $G$ has i.i.d.\ entries) and $w$ denotes additive noise.
To the best of our knowledge the random channel model that is considered in this article has never been addressed directly in the literature. 

Understanding the behavior of random channels is important from a theoretical viewpoint as random constructions can serve as a powerful tool in order to prove statements. For example in quantum information theory there was a long-standing conjecture that the \emph{Holevo capacity} of a quantum channel is additive~\cite{holevo_book}. A few years ago, Hastings showed that the conjecture is false~\cite{hastings09}, by constructing a random, high-dimensional channel whose Holevo capacity is not additive. Despite considerable effort, there is no deterministic, low-dimensional quantum channel known for which we can prove that the Holevo capacity is not additive. This example shows the power of random constructions as a proof technique.

%

\vspace{2mm}
\introsection{Notation}
The logarithm with basis 2 is denoted by $\log(\cdot)$ and the natural logarithm by $\ln(\cdot)$. We consider DMCs with an input alphabet $\mathcal{X}=\{ 1,2,\hdots,n \}=:[n]$ and an output alphabet $\mathcal{Y}=\{ 1,2,\hdots,m \}=:[m]$. The channel law is summarized in a stochastic matrix $\W\in\mathcal{M}_{n,m}$, where $\W_{x,y}:=\Prob{Y=y|X=x}$ and $\mathcal{M}_{n,m}$ denotes the set of all stochastic $n\times m$ matrices.
The input and output probability mass functions are denoted by the vectors $p\in \Delta_{n}$ and $q\in\Delta_{m}$, where we define the standard $n$-simplex as $\Delta_{n}:=\left\{  x\in\R^{n} | \sum_{i=1}^{n} x_{i}=1, x_{i}\geq 0 \text{ for all }i \right\}$. For a probability mass function $p \in \Delta_{n}$ we denote the Shannon entropy by $H(p):=-\sum_{i=1}^n p_i \log p_i$. It is convenient to introduce an additional variable for the conditional entropy of $Y$ given $X$ as $r\in\R^{n}$, where $r_{x}:=-\sum_{y=1}^{m}\W_{x,y}\log \W_{x,y}$.
We denote the maximum (resp.~minimum) between $a$ and $b$ by $a \vee b$ (resp. $a \wedge b$) and by $\ceil{\cdot}$ the ceiling function. Given a nonempty set $A\subset\R$, its Borel $\sigma$-algebra is denoted by $\Borelsigalg{A}$. The uniform distribution with support $A$ is denoted by $\mathcal{U}(A)$ and the exponential distribution with rate parameter $\lambda>0$ by $\mathcal{E}(\lambda)$. The Dirichlet distribution on the $n$-simplex with concentration parameter $\alpha\in\Rp^{n}$ is denoted by $\mathsf{Dir}(\alpha_{1}, \hdots, \alpha_{n})$ and the lognormal distribution with rate parameters $z\in\R$ and $\sigma>0$ by $\ln\mathcal{N}(z,\sigma)$. The Dirac delta distribution is denoted by $\delta(\cdot)$.
By convention when refering to sets or functions, \emph{measurable} means \emph{Borel-measurable}.
Let $U$ be a nonnegative real-valued integrable random variable. The entropy of $U$ is defined as
$\Ent(U):=\E{U\log U} - \E{U}\log\E{U}$.

\vspace{2mm}
\introsection{Structure} 
In Section~\ref{sec:main} the asymptotic capacity of random DMCs having the form explained above is determined. Section~\ref{sec:simulation} contains a numerical simulation of a random DMC whose rows are uniformly distributed over the $n$-simplex. An application of the asymptotic capacity in terms of optimal design of experiments is presented in Section~\ref{sec:application}. Section~\ref{sec:finiteSize} proves the exponential rate of convergence for the capacity of such random DMCs under slightly different model assumptions. 

\section{Asymptotic Capacity} \label{sec:main}
Consider a probability space $(\Omega, \mathcal{A}, \mathds{P})$ and let $(V_{x,y})_{x\in [n], y \in [m]}$ be a sequence of i.i.d.\ nonnegative random variables on $\Omega$. We define the channel transition matrix\footnote{In Assumption~\ref{ass:input:output:size} the output alphabet size is assumed to be a function of the input alphabet size and as such the index $m$ is suppressed in the notation of the channel matrix.} $\Wn:=(\Wn_{x,y})_{x\in [n],y\in[m]}$ by $\Wn_{x,y}=V_{x,y} / \sum_{y \in [m]}V_{x,y}$, that can be easily verified to be a stochastic matrix, i.e., $0\leq \Wn_{x,y}$ for all $x\in [n],y\in [m]$ and $\sum_{y\in [m]} \Wn_{x,y}=1$ for all $x\in [n]$. 
We impose the following assumption on the random variables $V_{x,y}$.
\begin{myass} \label{ass:finite:moments}
The random variables $V_{x,y}$ are such that $\E{V_{x,y}}>0$ and $\E{(V_{x,y} \log V_{x,y})^{2}}<\infty$.
\end{myass}
Note that Assumption~\ref{ass:finite:moments} implies that $\E{V_{x,y}^{2} }<\infty$. 
The following assumption provides a relation between the input and output alphabet size that is required for the main theorem.
\begin{myass}\label{ass:input:output:size}
There is a positive constant $\gamma\in\Rsp$ such that the output alphabet size is given by $m=\ceil{\gamma n}$.
\end{myass}
It can be easily shown that the capacity $C(\Wn)$ of such a (random) DMC as well as the optimal input distribution are random variables. In order to do so, note that the mapping $\Rp^{n\times m}\ni (V_{x,y})_{x\in [n],y\in [m]} \mapsto (\Wn_{x,y})_{x\in [n],y\in [m]} = V_{x,y}/\sum_{y \in [m]}V_{x,y}\in\mathcal{M}_{n,m}$ constructing the channel clearly is measurable and therefore, invoking Lemma~\ref{lem:measurability} (see Appendix~\ref{app:measurability} for details), the channel capacity $C(\Wn)$ is a function from $\Omega$ to $\Rp$ that is $(\mathcal{A}, \mathcal{B}(\Rp))$-measurable and hence a random variable. Therefore, we can state the main result as follows, where we define $\mu_1:=\E{V_{x,y}}$ and $\mu_2:=\E{V_{x,y}\log V_{x,y}}$.
\begin{mythm}[Asymptotic capacity] \label{thm:main}
Under Assumptions~\ref{ass:finite:moments} and \ref{ass:input:output:size}, as $n\to \infty$ the capacity $C(\Wn)$ converges to $\tfrac{\mu_2}{\mu_1} - \log \mu_1$ almost surely and in $\Lp{2}$. 
\end{mythm}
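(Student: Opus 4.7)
\medskip

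\noindent\textbf{Proof plan.}
The plan is to sandwich $C(\Wn)$ between a primal lower bound and a dual upper bound, both of which can be analyzed directly in terms of the i.i.d.\ summands $V_{x,y}$. For the upper bound I would use weak duality: starting from the minimax representation $C(W)=\min_{q\in\Delta_{m}}\max_{x\in[n]} D(W_{x,\cdot}\|q)$, substituting the uniform output law $q^\star:=(1/m)\mathbf 1_m$ and using $D(W_{x,\cdot}\|q^\star)=\log m-H(W_{x,\cdot})$ yields
\[
C(\Wn)\;\le\;\log m - \min_{x\in[n]} H(\Wn_{x,\cdot}).
\]
For the lower bound I would use the uniform input $p^\star:=(1/n)\mathbf 1_n$ as a feasible primal point, giving
\[
C(\Wn)\;\ge\;I(p^\star,\Wn)\;=\;H(p^\star\Wn)-\frac{1}{n}\sum_{x\in[n]}H(\Wn_{x,\cdot}).
\]
Writing $S_x:=\sum_{y\in[m]}V_{x,y}$ and $T_x:=\sum_{y\in[m]}V_{x,y}\log V_{x,y}$, each row entropy takes the explicit form $H(\Wn_{x,\cdot})=\log m+\log(S_x/m)-T_x/S_x$. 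Thus the whole analysis reduces to controlling the empirical averages $S_x/m$ and $T_x/m$ together with the output marginal $\bar W_y:=\frac1n\sum_{x}\Wn_{x,y}$.

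Next I would carry out three convergence steps, all driven by the strong law of large numbers together with the moment hypothesis $\E{(V\log V)^2}<\infty$ in Assumption~\ref{ass:finite:moments}. First, for each row, $S_x/m\to\mu_1$ and $T_x/m\to\mu_2$ almost surely as $m\to\infty$, so by continuous mapping $H(\Wn_{x,\cdot})-\log m\to\log\mu_1-\mu_2/\mu_1$, with $L^{2}$ rate $O(1/\sqrt{m})$ via the delta method. Because the $n$ rows are independent and identically distributed and $m=\lceil\gamma n\rceil$, a triangular-array LLN then yields
\[
\frac{1}{n}\sum_{x\in[n]}H(\Wn_{x,\cdot}) - \log m\;\longrightarrow\;\log\mu_1-\tfrac{\mu_2}{\mu_1}\quad\text{a.s.\ and in }L^{2}.
\]
Second, for the output marginal, exchangeability gives $\E{\bar W_y}=1/m$ and $\mathrm{Var}(\bar W_y)=O(1/(nm))$; a second-order expansion of $x\mapsto x\log x$ around $1/m$ combined with $H(\bar W)\le\log m$ yields $\log m-H(p^\star\Wn)\le\tfrac{m}{2}\sum_y(\bar W_y-1/m)^2+\text{(lower order)}$, whose expectation is $O(1/n)$, so $H(p^\star\Wn)-\log m\to 0$ a.s.\ and in $L^{2}$. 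These two facts already show that the primal lower bound converges to $\mu_2/\mu_1-\log\mu_1$.

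Third, and this is the main obstacle, I would prove the matching statement for the \emph{minimum} of the row entropies that appears in the dual upper bound. Since the $n$ rows are i.i.d.\ with per-row fluctuations of order $1/\sqrt{m}$, naive Chebyshev together with a union bound does not suffice to control $\min_{x}H(\Wn_{x,\cdot})$. I plan to circumvent this by first truncating $V$ at a slowly growing threshold so that $\log V$ becomes bounded on the truncated part, applying Bernstein/Hoeffding to the truncated contributions to obtain exponentially small tails on $|S_x/m-\mu_1|$ and $|T_x/m-\mu_2|$, and handling the tail of the truncation using $\E{(V\log V)^2}<\infty$ via Markov's inequality. Union-bounding the exponential tails over the $n$ rows then produces $\min_x H(\Wn_{x,\cdot})-\log m\to\log\mu_1-\mu_2/\mu_1$ a.s.\ and in $L^{2}$.

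Finally, I would combine the upper and lower bounds: both converge to $\mu_2/\mu_1-\log\mu_1$, and since $0\le C(\Wn)\le\log m$ is almost surely bounded, the squeeze gives almost-sure convergence, while the uniform $L^{2}$ bounds on the sandwich transfer to $C(\Wn)$ itself, completing the proof. Measurability of $C(\Wn)$ is already handled by Lemma~\ref{lem:measurability} in the appendix, so no additional regularity issues need to be addressed.
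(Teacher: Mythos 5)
Your sandwich is the same one the paper uses: the upper bound you obtain from the minimax form with the uniform output law is exactly the paper's Lagrange dual evaluated at $\lambda=0$, namely $C_{\UB}^{(\lambda=0)}(\W)=\log m-\min_{x}H(\Wn_{x,\cdot})$, and your primal lower bound is the paper's $C_{\LB}^{(p\sim\mathcal{U})}$. The packaging differs in two places. For the lower bound you write $I(p^\star,\Wn)=H(p^\star\Wn)-\tfrac1n\sum_xH(\Wn_{x,\cdot})$ and control the output entropy by a $\chi^2$-type expansion around the uniform law, whereas the paper keeps the cross term $\tfrac1n\sum_{x,y}\Wn_{x,y}\log\sum_k\Wn_{k,y}$ and shows via Lemma~\ref{lemma:normalization} that the column sums converge to $n/m\to 1/\gamma$; since $\sum_y\bar W_y\log(n\bar W_y)=\log n-H(p^\star\Wn)$ these are the same computation in different coordinates. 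More substantively, you treat $\tfrac1n\sum_xH(\Wn_{x,\cdot})$ as an average over independent rows (variance $O(1/(nm))$, hence summable Chebyshev tails), which cleanly avoids any extremum in the lower bound; the paper instead squeezes the average between $\min_x$ and $\max_x$ of the row statistics and invokes the single-row limit Lemma~\ref{lemma:ImpLimit} for the extremal row. You correctly identify that the extremum over $n$ growing rows is the delicate point: pointwise a.s.\ convergence of each row statistic does not by itself control $\min_xH(\Wn_{x,\cdot})$, since under only a second-moment hypothesis the per-row deviation probabilities decay like $O(1/m)=O(1/n)$ and a union bound is borderline. Your truncation--Bernstein--union-bound plan is a legitimate way to close this, and it is in the spirit of what the paper actually does in Section~\ref{sec:finiteSize} under the stronger Assumption~\ref{ass:convergence:rate}; the paper's proof of Theorem~\ref{thm:main} itself simply cites Lemma~\ref{lemma:ImpLimit} at the maximizing row, so on this point your argument is the more careful one.

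Two quantitative details in your sketch need tightening. First, $\mathrm{Var}(\bar W_y)=\tfrac1n\mathrm{Var}(W_{1,y})$, and the crude bound $\E{W_{1,y}^2}\le\E{W_{1,y}}=1/m$ only gives $m\sum_y\mathrm{Var}(\bar W_y)=O(m/n)=O(1)$; to get the deficit $\log m-H(p^\star\Wn)\to0$ you should instead argue $\E{\sum_yW_{1,y}^2}\to0$ (e.g.\ $\sum_yW_{1,y}^2\le\max_yW_{1,y}\le 1$ together with $\sum_yW_{1,y}^2=\tfrac{\tfrac1m\sum_yV_{1,y}^2}{m(\tfrac1m\sum_kV_{1,k})^2}\to0$ a.s.\ and dominated convergence). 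Second, "delta method" gives distributional rates, not $\Lp{2}$ rates: the functions $s\mapsto 1/s$ and $s\mapsto\log s$ are unbounded near $s=0$, so the $\Lp{2}$ statement for $H(\Wn_{x,\cdot})-\log m=\log(S_x/m)-\tfrac{T_x/m}{S_x/m}$ requires separate control of the lower tail of $S_x/m$ (the paper's Lemma~\ref{lemma:ImpLimit} has the same unaddressed issue in its one-line $\Lp{2}$ claim). Neither point changes the architecture of your proof.
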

\begin{proof}
See Section~\ref{sec:pfmain}.
\end{proof}
Under weaker assumptions on the channel matrix we can prove a weaker convergence statement for the asymptotic capacity.
\begin{mycor}[Asymptotic capacity] \label{cor:main}
Under Assumption~\ref{ass:input:output:size} and $\E{V_{x,y}}>0$ and $\E{V_{x,y} \log V_{x,y}}<\infty$, the capacity $C(\Wn)$ of the DMC, as $n\to \infty$, converges to $\tfrac{\mu_2}{\mu_1} - \log \mu_1$ almost surely. 
\end{mycor}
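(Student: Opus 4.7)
The plan is to observe that the second moment hypothesis $\E{(V_{x,y}\log V_{x,y})^2}<\infty$ in Theorem~\ref{thm:main} is used only to upgrade almost sure convergence to convergence in $\Lp{2}$; the pathwise argument for the almost sure limit rests on Kolmogorov's strong law alone, which needs only finite first moments of $V_{x,y}$ and $V_{x,y}\log V_{x,y}$. Accordingly, my proposal is to rerun the almost sure part of the proof of Theorem~\ref{thm:main} under the weaker hypothesis, keeping the duality-based sandwich bounds intact and replacing any invocations of $\Lp{2}$ machinery by the corresponding pointwise version.

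The key stochastic inputs to this argument are the row sums $S_x:=\sum_{y\in[m]} V_{x,y}$ and $T_x:=\sum_{y\in[m]} V_{x,y}\log V_{x,y}$ with $m=\ceil{\gamma n}$. The row entropies satisfy the identity
\begin{equation*}
H(\Wn_{x,\cdot})=\log S_x-T_x/S_x,
\end{equation*}
so the sandwich bounds used in the proof of Theorem~\ref{thm:main} to control $C(\Wn)$ depend continuously on $m^{-1}S_x$ and $m^{-1}T_x$. Kolmogorov's strong law yields $m^{-1}S_x\to \mu_1$ and $m^{-1}T_x\to\mu_2$ almost surely under $\E{V}<\infty$ and $\E{|V\log V|}<\infty$; and because nonnegativity of $V$ gives $(V\log V)^-\leq 1/(e\ln 2)$, the assumption $\E{V\log V}<\infty$ already entails $\E{|V\log V|}<\infty$. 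Substituting these limits into the sandwich bounds produces the desired a.s.\ convergence $C(\Wn)\to\mu_2/\mu_1-\log\mu_1$.

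The main obstacle would be if the proof of Theorem~\ref{thm:main} promotes pointwise SLLN to a \emph{uniform} in $x\in[n]$ estimate (for instance via a Chebyshev and Borel--Cantelli bound on $\max_{x\in[n]}|m^{-1}S_x-\mu_1|$), which would genuinely need a second moment. My remedy in that case is a truncation argument: for $N\in\mathbb{N}$ set $V_{x,y}^{(N)}:=V_{x,y}\wedge N$, apply Theorem~\ref{thm:main} to the channel generated by the truncated variables (which have moments of all orders) to obtain a.s.\ convergence of its capacity to $\mu_2^{(N)}/\mu_1^{(N)}-\log\mu_1^{(N)}$, and let $N\to\infty$. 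The limits $\mu_1^{(N)}\to\mu_1$ and $\mu_2^{(N)}\to\mu_2$ follow from monotone convergence applied separately to the positive and negative parts of $V\log V$, while a total variation continuity estimate for mutual information controls the residual $|C(\Wn)-C(W^{(V^{(N)},n)})|$ in terms of $m^{-1}\sum_{y}(V_{x,y}-V_{x,y}^{(N)})$. A diagonal selection $N=N(n)\nearrow\infty$ growing slowly with $n$, combined with a Borel--Cantelli estimate on the truncated tails, should then preserve the almost sure mode of convergence.
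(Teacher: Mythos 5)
Your primary plan is exactly the paper's proof: Corollary~\ref{cor:main} is obtained by rerunning only the almost-sure half of the argument for Theorem~\ref{thm:main} (the duality sandwich together with Lemmas~\ref{lemma:ImpLimit} and \ref{lemma:normalization}), noting that this half invokes only the strong law of large numbers and hence needs just the first moments $\E{V}<\infty$ and $\E{|V\log V|}<\infty$ (the latter following from $\E{V\log V}<\infty$ and boundedness of the negative part, as you observe), the second-moment hypothesis being used solely for the $\Lp{2}$-weak law. The uniformity-over-rows concern you raise does arise in the paper (the $\max_{x\in[n]}$ in $C_{\UB}^{(\lambda=0)}$ and the extremal rows in Lemma~\ref{lem:LBdav}), but the paper handles it with the same pointwise SLLN appeal rather than anything requiring second moments, so your truncation fallback is not needed to reproduce its argument.
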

\begin{proof}
Follows directly from the proof of Theorem~\ref{thm:main}.
\end{proof}

Let us discuss some implications of Theorem~\ref{thm:main} and provide a few examples.

\begin{myremark}[Connection to $\Phi$-entropy]\label{rmk:phi:entropy}
For any convex function $\Phi:\Rp\to\R$, the \emph{$\Phi$-entropy} of a nonnegative real-valued integrable random variable $U$ is defined by
\begin{equation*}
\Ent_{\Phi}(U):=\E{\Phi(U)} - \Phi(\!\E{U}),
\end{equation*}
see \cite[Chapter~14]{ref:boucheron-13} for a comprehensive study. Let us consider the function $\Phi(u)=u\log u$ and denote the resulting $\Phi$-entropy by $\Ent(U)$ that simplifies to
\begin{equation*} 
\Ent(U)=\E{U\log U} - \E{U}\log\E{U}.
\end{equation*}
Under Assumptions~\ref{ass:finite:moments} and \ref{ass:input:output:size}, using the $\Phi$-entropy, Theorem~\ref{thm:main} can be stated equivalently as
\begin{equation*}
\lim\limits_{n\to\infty} C(\Wn) = \frac{\Ent(V_{x,y})}{\E{V_{x,y}}}.
\end{equation*}
\end{myremark}

\begin{myremark}[Properties of the asymptotic capacity] \label{rmk:properties:asym:cap}
The asymptotic capacity described in Theorem~\ref{thm:main}
\begin{enumerate}[(i)]
\item is \emph{nonnegative} by Jensen's inequality, since $\Rp \ni xÊ\mapsto x \log x \in \R$ is a convex function. 
\item can be \emph{zero}. Consider random variables $V_{x,y}$ such that $\Prob{V_{x,y}=\alpha}=1$ for some $\alpha \in \Rsp$. This then gives $\mu_1=\E{V_{x,y}}=\alpha$ and $\mu_2=\E{V_{x,y} \log V_{x,y}}= \alpha \log \alpha$, which leads to $\tfrac{\mu_2}{\mu_1}-\log \mu_1 = 0$.
\item can be \emph{arbitrarily large}. Consider random variables $V_{x,y}$ such that for some $\varepsilon \in (0,1)$, $\Prob{V_{x,y}=0}=1-\varepsilon$ and  $\Prob{V_{x,y}=1}=\varepsilon$. This then gives $\mu_1=\E{V_{x,y}}=\varepsilon$ and $\mu_2=\E{V_{x,y} \log V_{x,y}}= 0$ and hence $\tfrac{\mu_2}{\mu_1}-\log \mu_1 = \log \tfrac{1}{\varepsilon}$ which tends to infinity as $\varepsilon \to 0$.
\item admits the \emph{homogeneity property} $\lim_{n\to\infty}C(\W^{(\alpha V,n)}) = \lim_{n\to\infty} C(\Wn)$ for any $\alpha>0$. This follows by Remark~\ref{rmk:phi:entropy}, as
\begin{equation*}
\lim_{n\to\infty}C(\mathsf{W}^{(\alpha V,n)}) = \frac{\Ent(\alpha V_{11})}{\E{\alpha V_{11}}} = \frac{\Ent(V_{11})}{\E{V_{11}}} = \lim_{n\to\infty} C(\Wn),
\end{equation*}
where the second equality uses \cite[Remark~3.3.1]{ref:Raginsky-13}
\begin{align*}
\Ent(\alpha V_{11}) 	&= \E{\alpha V_{11}\log(\alpha V_{11})} - \E{\alpha V_{11}}\log(\E{\alpha V_{11}}) \\
						&= \alpha \E{V_{11}\log V_{11}} - \alpha \E{V_{11}}\log \E{V_{11}} = \alpha \Ent(V_{11}). 
\end{align*}

\end{enumerate}
\end{myremark}


\begin{myex}[Exponential distribution] \label{ex:exp}
Consider a DMC as defined above using an exponential distribution with rate parameter $\lambda > 0$. Then for $n \to \infty$ its capacity converges to $\tfrac{1-\kappa}{\ln 2}$ almost surely and in $\Lp{2}$, where $\kappa$ denotes Euler's constant. This follows directly from Theorem~\ref{thm:main}, since for $V_{x,y}\sim\mathcal{E}(\lambda)$ we have $\mu_1=\E{V_{x,y}}=\tfrac{1}{\lambda}$ and $\mu_2=\E{V_{x,y}\log V_{x,y}}=\tfrac{1-\kappa-\ln \lambda}{\lambda \ln 2}$. The fact that the asymptotic capacity is constant (i.e., independent of $\lambda$) is a direct consequence of the homogeneity property in Remark~\ref{rmk:properties:asym:cap}, since $\alpha V_{x,y}\sim\mathcal{E}(\tfrac{\lambda}{\alpha})$ for any $\alpha>0$. 
\end{myex}

\begin{myex}[Symmetric Dirichlet distribution] \label{ex:uniform:simplex}
Consider a DMC that is described by an $n\times n$ channel transition matrix, whose rows $\Wn_{x,\cdot}$ are independent random variables on the $n$-simplex. More precisely, let the rows $\Wn_{x,\cdot}$ be i.i.d. random variables according to the symmetric Dirichlet distribution $\mathsf{Dir}(\lambda, \hdots, \lambda)$ with concentration parameter $\lambda>0$. It is known \cite[Theorem.~4.1, p.~594]{ref:Devroye-86} that for $n$ exponentially distributed i.i.d. random variables $V_{x,1},\hdots, V_{x,n} \sim\mathcal{E}(\lambda)$, the multivariate random variable $\Wn_{x,\cdot}:=V_{x,\cdot} / \sum_{y \in [n]}V_{x,y}$ admits a symmetric Dirichlet distribution $\mathsf{Dir}(\lambda, \hdots, \lambda)$, that is the uniform distribution over the $n$-simplex for $\lambda=1$. Hence, by Example~\ref{ex:exp} the capacity of a channel $\Wn$ with i.i.d. symmetric Dirichlet distributed rows converges to $\tfrac{1-\kappa}{\ln 2}$ almost surely and in $\Lp{2}$ as $n\to\infty$, where $\kappa$ denotes Euler's constant.
\end{myex}

\begin{myex}[Lognormal distribution] \label{ex:lognormal1}
Consider a DMC (with $n=m$) as defined above using a lognormal distribution $\exp \mathcal{N}(z,\sigma)$ with parameters $z\in\R$ and $\sigma>0$. Then for $n \to \infty$ its capacity converges to $\tfrac{\sigma^{2}}{2 \ln 2}$ almost surely and in $\Lp{2}$. This follows directly from Theorem~\ref{thm:main}, since for $V_{x,y}\sim\exp \mathcal{N}(z,\sigma)$ we have $\mu_1=\E{V_{x,y}}=\exp(z+\tfrac{\sigma^{2}}{2})$ and $\mu_2=\E{V_{x,y}\log V_{x,y}}=\tfrac{z+\sigma^{2}}{\ln 2}\exp(z+\tfrac{\sigma^{2}}{2})$. We note that $\alpha V_{x,y}\sim\exp \mathcal{N}(z + \ln \alpha,\sigma)$ for positive $\alpha$, which by the homogeneity property (cf.~Remark~\ref{rmk:properties:asym:cap}) implies that the asymptotic capacity does not depend on $z$.
\end{myex}
Four additional examples considering the uniform, gamma, chi-squared and beta distribution can be found in Appendix~\ref{app:additional:examples}.
Before we present a rigorous proof of Theorem~\ref{thm:main} in the next section let us sketch an informal motivation, that might provide some intuition about the proof.

Let us assume that the i.i.d.~random variables $V_{x,y}$ take values in a finite set $[k]$, for some $k\in\mathbb{N}$. Statistically as the input and output alphabet get larger (i.e., $ n,m\gg k$), the channel matrix $\Wn$ resembles a weakly symmetric channel (i.e., every row is a permutation of every other row and all the column sums are equal). It is known \cite[Theorem~7.2.1]{cover}, that the capacity of a weakly symmetric channel $\Wn$ is given by $\log m - H(\Wn_{x,.})$ for $x\in[n]$ and that the uniform input distribution is capacity achivieng, i.e., the optimal input distribution does not depend on the channel realization. We further note that the capacity of such channels only depends on the statistics of the channel entries.
In Section~\ref{sec:pfmain}, to prove Theorem~\ref{thm:main}, we derive an analytical upper and lower bound for the capacity and show that in the limit $n\to \infty$ they coincide at the value predicted by Theorem~\ref{thm:main}. The upper bound is shown to be $\log m - \max_{x\in[n]} H(\Wn_{x,.})$ and the lower bound $\I{\hat{p}}{\Wn}$, where $\hat{p}$ is the uniform distribution on $[n]$.

\subsection{Proof of Theorem~\ref{thm:main}} \label{sec:pfmain}
To keep the notation simple we denote the channel transition matrix $\Wn$ by $\W$.
We reformulate the problem~\eqref{eq:capacity} by introducing an additional decision variable $q \in \Delta_m$ representing the output distribution of the channel, together with the coupling constraint $\W\transp p = q$. Whereas the Lagrange dual problem to \eqref{eq:capacity} can only be implicitly expressed through the solution of a system of linear equations (as reported in \cite{chiang04, chiang05}), introducing the new decision variable $q$ allows us to derive an explicit and simple Lagrange dual problem.
It can be shown (see e.g. \cite[Lemma 1]{ref:Sutter-15}) that the optimization problem \eqref{eq:capacity} is equivalent to
\begin{equation} \label{opt:primal}
 	\text{(primal program):} \quad \left\{ \begin{array}{lll}
			&\max\limits_{p,q} 		&- r\transp p + H(q) \\
			&\st					& \W\transp p = q\\
			& 					& p\in \Delta_{n}, \ q\in\Delta_{m},
	\end{array} \right.
\end{equation}
where $r_{x}:=-\sum_{y=1}^{m}\W_{x,y}\log \W_{x,y}$.
The Lagrangian dual program to \eqref{opt:primal} is 
\begin{equation} \label{opt:dual}
\text{(dual program):} \quad  \min \limits_{\lambda} \left \lbrace G(\lambda) + F(\lambda) \,\,\, : \,\,\, \lambda \in \R^{m} \right \rbrace,
\end{equation}
where $G,F:\R^m \to \R$ are given by
\begin{align*} 
G(\lambda)= \left\{ \begin{array}{ll}
			\underset{p}{\max} 		&-r\transp p + \lambda\transp \W\transp p \\
			\text{s.t. } 						& p\in\Delta_{n}
	\end{array} \right.
	\quad \textnormal{and} \quad
	F(\lambda)= \left\{ \begin{array}{ll}
			\underset{q}{\max} 		&H(q)-\lambda\transp q \\
			\text{s.t. } 				& q\in\Delta_{m}.
	\end{array}\right. 
\end{align*}
Note that since the coupling constraint $ \W\transp p = q$ in the primal program \eqref{opt:primal} is affine, the set of optimal solutions to the dual program \eqref{opt:dual} is nonempty \cite[Proposition~5.3.1]{ref:Bertsekas-09} and as such the optimum is attained.
As shown in \cite[Section~2]{ref:Sutter-15}, $G$ and $F$ have analytical solutions given as
\begin{equation}
G(\lambda)= \max \limits_{i \in [n]} \left(  \W \lambda - r\right)_i \quad \textnormal{and} \quad F(\lambda)= \log \left( \sum_{i=1}^m 2^{-\lambda_i}\right). \label{eq:solDual}
\end{equation}

\begin{mylem}
Strong duality holds between \eqref{opt:primal} and \eqref{opt:dual}.
\end{mylem}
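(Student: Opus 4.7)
The plan is to verify that the primal program \eqref{opt:primal} satisfies a constraint qualification strong enough to yield zero duality gap. Observe first that \eqref{opt:primal} is a concave maximization problem: the objective $-r\transp p + H(q)$ is concave and continuous on the compact convex polytope $\Delta_n \times \Delta_m$, while the coupling constraint $\W\transp p = q$ is affine. Feasibility is immediate---take any $p \in \Delta_n$ and set $q := \W\transp p$, which lies in $\Delta_m$ because $\W$ is row-stochastic---so by compactness and continuity the primal optimum is finite and attained.

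Because every explicit constraint is affine and the abstract domain $\Delta_n \times \Delta_m$ is itself polyhedral, Slater's condition reduces to plain feasibility. One may therefore apply a standard convex-duality theorem for convex programs with affine equality constraints on a polyhedral domain (e.g.\ \cite[Proposition~5.3.1]{ref:Bertsekas-09}, the same proposition already used just above to guarantee that the dual optimum is attained) to conclude that the optimal values of \eqref{opt:primal} and \eqref{opt:dual} coincide.

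The proof contains essentially no analytical work; the only nontrivial decision is selecting the appropriate duality theorem to cite, and this is the one point where care is required, since some textbook statements of Slater's condition require \emph{strict} inequality constraints rather than the weaker hypothesis adequate here. If the preferred reference did not suffice, I would fall back on a direct argument via Sion's minimax theorem applied to the Lagrangian
\begin{equation*}
L(p,q,\lambda) := -r\transp p + H(q) + \lambda\transp\!\left(\W\transp p - q\right)
\end{equation*}
on $(\Delta_n \times \Delta_m) \times \R^m$. Since $L$ is concave in $(p,q)$, affine (hence convex) in $\lambda$, and jointly continuous, while the inner set $\Delta_n \times \Delta_m$ is convex and compact, one may interchange $\sup_{(p,q)}$ with $\inf_{\lambda}$; noting that by construction $\sup_{(p,q)\in\Delta_n\times\Delta_m} L(p,q,\lambda) = G(\lambda) + F(\lambda)$, this yields precisely the equality of optimal values between \eqref{opt:primal} and \eqref{opt:dual}.
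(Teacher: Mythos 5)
Your proposal is correct and takes essentially the same route as the paper, which likewise disposes of the lemma by citing the standard strong-duality result for convex programs with affine constraints \cite[Proposition~5.3.1]{ref:Bertsekas-09}. The extra verification of feasibility and the alternative argument via Sion's minimax theorem go beyond what the paper records, but the core justification is identical.
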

\begin{proof}
The proof follows by a standard strong duality result of convex optimization, see \citep[Proposition~5.3.1]{ref:Bertsekas-09}.
\end{proof}
Weak duality of convex programming implies that the dual always is an upper bound to the primal problem, i.e.,\ for every $pÊ\in \Delta_n$ and for every $\lambda \in \R^m$, $C_{\LB}^{(p)}(\W):=\I{p}{\W}\leq G(\lambda)+F(\lambda)=:C_{\UB}^{(\lambda)}(\W)$. By following the proof of Lemma~\ref{lem:measurability}, one can show that the mapping $\mathcal{M}_{n}\ni \W\mapsto C_{\UB}^{(\lambda)}(\W)\in \Rp$ is measurable for any $\lambda\in\R^{m}$ and as such $C_{\UB}^{(\lambda)}(\W)$ is a random variable. To prove Theorem~\ref{thm:main}, we consider the upper bound $C_{\UB}^{(\lambda=0)}(\W):=G(0)+F(0)$ which is the Lagrange dual function evaluated at $\lambda=0$. As a lower bound we consider the mutual information evaluated for a uniform input distribution, i.e.,\ $C_{\LB}^{(p\sim\mathcal{U})}(\W):=\I{p}{\W}$, where $p_i = \tfrac{1}{n}$ for all $i \in [n]$.
Note that by the measurability of the mutual information $C_{\LB}^{(p\sim\mathcal{U})}(\W)$ is a random variable.
We will show that $C_{\LB}^{(p\sim\mathcal{U})}(\W)$ and $C_{\UB}^{(\lambda=0)}(\W)$ converge to the asymptotic capacity predicted by Theorem~\ref{thm:main} in the limit $n \to \infty$ which then proves the assertion.

\begin{mylem} \label{lem:UBdav}
Under Assumptions~\ref{ass:finite:moments} and \ref{ass:input:output:size}, for $n\to \infty$, the random variable $C_{\UB}^{(\lambda=0)}(\W)$ converges to $\tfrac{\mu_2}{\mu_1} - \log \mu_1$ almost surely and in $\Lp{2}$.
\end{mylem}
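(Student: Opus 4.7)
Plugging $\lambda=0$ into the explicit formulas~\eqref{eq:solDual} gives $G(0)=-\min_{x\in[n]}r_x$ and $F(0)=\log m$, so that
\begin{equation*}
C_{\UB}^{(\lambda=0)}(\W) \;=\; \log m - \min_{x\in[n]} H(\W_{x,\cdot}) \;=\; \max_{x\in[n]} Z_x,
\end{equation*}
where $Z_x:=\log m-H(\W_{x,\cdot})$ is the relative entropy of the $x$th row of $\W$ to the uniform distribution on $[m]$. A direct calculation yields $Z_x=\phi(\bar V_x,\bar W_x)$ with
\begin{equation*}
\bar V_x := \tfrac{1}{m}\sum_{y\in[m]} V_{x,y}, \quad \bar W_x := \tfrac{1}{m}\sum_{y\in[m]} V_{x,y}\log V_{x,y}, \quad \phi(a,b) := -\log a + b/a,
\end{equation*}
and the target limit is $L := \phi(\mu_1,\mu_2) = \mu_2/\mu_1 - \log\mu_1$. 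Since $\mu_1>0$, $\phi$ is smooth, and in particular locally Lipschitz, on a neighborhood of $(\mu_1,\mu_2)$.

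The liminf direction is straightforward. I would apply the strong law of large numbers to the i.i.d.\ sequences $(V_{1,y})_{y\ge 1}$ and $(V_{1,y}\log V_{1,y})_{y\ge 1}$ --- the first moment of the latter is finite by Cauchy--Schwarz and Assumption~\ref{ass:finite:moments} --- to obtain $\bar V_1\to\mu_1$ and $\bar W_1\to\mu_2$ almost surely as $n\to\infty$, hence $Z_1\to L$ a.s.\ by continuity of $\phi$. Therefore $\liminf_{n\to\infty}\max_{x\in[n]}Z_x\ge\liminf_{n\to\infty} Z_1 = L$ almost surely.

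The matching upper bound $\limsup_{n\to\infty}\max_{x\in[n]} Z_x \le L$ is the real work. By the local Lipschitz property of $\phi$, it reduces to proving uniform-in-$x$ concentration of the row-wise empirical means,
\begin{equation*}
\max_{x\in[n]}|\bar V_x - \mu_1|\to 0 \quad\text{and}\quad \max_{x\in[n]}|\bar W_x - \mu_2|\to 0 \quad\text{a.s.}
\end{equation*}
A naive Chebyshev bound only yields $\Prob{|\bar V_x - \mu_1|>\epsilon}=O(1/m)$, so that a union bound over the $n=\Theta(m)$ rows gives only the useless estimate $O(1)$. To beat this, the plan is to truncate $V_{x,y}$ at a slowly growing threshold $t_n\to\infty$, apply a Bernstein-type inequality to the bounded truncated sample means to obtain deviation probabilities small enough to survive both the union bound over $x\in[n]$ and a subsequent Borel--Cantelli summation over $n$, and control the discarded heavy tail using the finite second moment of $V$ supplied by Assumption~\ref{ass:finite:moments}. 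The analogous statement for $\bar W_x$ uses the full strength of the $(V\log V)^2$ hypothesis in the same way.

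For $L^2$ convergence, the deterministic bound $0\le Z_x\le \log m$ gives $|\max_x Z_x - L|\le \log m + L$; writing $\delta_n := \Prob{|\max_x Z_x - L|>\epsilon}$, the split
\begin{equation*}
\E{(\max_x Z_x - L)^2} \;\le\; \epsilon^2 + (\log m + L)^2\,\delta_n
\end{equation*}
together with polynomial-in-$n$ decay of $\delta_n$ from the previous step makes the second term vanish, and letting $\epsilon\downarrow 0$ delivers the $L^2$ claim. The main obstacle is thus the concentration step for the upper bound: the $(V\log V)^2$ integrability in Assumption~\ref{ass:finite:moments} only \emph{barely} suffices --- via truncation plus a Bernstein-type bound --- to beat the $n$-fold union bound, and this tightness is the reason for the precise moment hypothesis in the assumption.
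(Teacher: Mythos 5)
Your algebraic reduction is exactly the paper's: $C_{\UB}^{(\lambda=0)}(\W)=\log m-\min_{x}H(\W_{x,\cdot})=\max_{x}Z_x$ with $Z_x=\bar W_x/\bar V_x-\log\bar V_x$, and your liminf step (strong law applied to a single row, plus continuity of $\phi$) is precisely the content of the paper's Lemma~\ref{lemma:ImpLimit}. Where you diverge is the limsup: the paper simply invokes Lemma~\ref{lemma:ImpLimit} ``for every $x\in[n]$'' and concludes that the maximum over the $n$ rows converges, i.e.\ it treats the maximizing (random, $n$-dependent) row index as if it were fixed and supplies no uniformity-over-rows argument. You are right that some uniform control over the $n$ rows is what is actually needed, so your plan to show $\max_{x}|\bar V_x-\mu_1|\to0$ and $\max_{x}|\bar W_x-\mu_2|\to0$ is the honest route.

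However, the mechanism you propose for that step does not close, and this is a genuine gap rather than a detail. For Bernstein applied to the truncated row means to survive both the union bound over the $n$ rows and a Borel--Cantelli summation over $n$, the exponent $m\epsilon^2/(K+t_n\epsilon)$ must exceed $(2+\delta)\log n$, which forces $t_n=O(n/\log n)$. At that truncation level the discarded tail $\max_{x}\frac1m\sum_{y}V_{x,y}\mathbf{1}\{V_{x,y}>t_n\}$ is out of control: under Assumption~\ref{ass:finite:moments} the only generic tail bound is $\Prob{V>t}\le\E{(V\log V)^2}/(t\log t)^2$, so the probability that some entry of the $n\times m$ block exceeds $t_n$ is of order $n^2/\bigl(t_n\log t_n\bigr)^2$, which for $t_n=n/\log n$ is $O(1)$ (not even vanishing), and Markov or Chebyshev on the per-row tail sum yields at best $O(1/(\log n)^2)$ after the union bound --- not summable. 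Pushing $t_n$ up to the level ($\gtrsim n$) at which exceedances a.s.\ stop occurring destroys the Bernstein exponent, which then becomes $O(1)$. So no single truncation level satisfies both requirements; one would need a multi-scale exceedance-counting argument, and it is not clear that $\E{(V\log V)^2}<\infty$ alone suffices for it. Since you explicitly defer this concentration step as ``the main obstacle,'' neither the a.s.\ claim nor (through your $\delta_n$ bound, which needs polynomial decay) the $\Lp{2}$ claim is established. It is telling that when the paper does run a Bernstein argument (Section~\ref{sec:finiteSize}), it imposes Assumption~\ref{ass:convergence:rate} with the full sub-exponential moment growth $\tfrac{q!}{2}KT^{q-2}$ --- exactly what your program would need and what Assumption~\ref{ass:finite:moments} does not provide.
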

\begin{proof}
According to \eqref{eq:solDual} we have
\begin{align}
C_{\UB}^{(\lambda=0)}(\W) &= G(0) + F(0) \nonumber \\
 &= \max\limits_{x \in [n]}\left\lbrace - r_x \right \rbrace + \log m \nonumber \\
 &= \max\limits_{x \in [n]}\left\lbrace \sum_{y=1}^m \W_{x,y} \log \W_{x,y} \right \rbrace + \log m.  \label{eq:defdUB}
\end{align}
According to Lemma~\ref{lemma:ImpLimit}, for every $x \in [n]$ as $n\to \infty$,  $\sum_{y=1}^m \W_{x,y} \log \W_{x,y} + \log m $ converges to $\tfrac{\mu_2}{\mu_1} - \log \mu_1$ almost surely and in $\Lp{2}$. This finally proves the assertion.
\end{proof}

\begin{mylem} \label{lem:LBdav}
Under Assumptions~\ref{ass:finite:moments} and \ref{ass:input:output:size}, for $n\to \infty$, the random variable $C_{\LB}^{(p\sim\mathcal{U})}(\W)$ converges to $\tfrac{\mu_2}{\mu_1} - \log \mu_1$ almost surely and in $\Lp{2}$.
\end{mylem}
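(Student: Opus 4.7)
The strategy is to decompose the uniform-input mutual information so that one piece matches the quantity already controlled in Lemma~\ref{lem:UBdav}, and the remaining piece can be shown to be negligible. Writing $q := p\W$ for the output distribution induced by the uniform input $p$ and $u_{[m]}$ for the uniform distribution on $[m]$, the identities $H(q) = \log m - \D{q}{u_{[m]}}$ and $H(Y|X) = \tfrac{1}{n}\sum_x r_x$ give
\begin{equation*}
C_{\LB}^{(p\sim\mathcal{U})}(\W) = \I{p}{\W} = \Hh{q} - \tfrac{1}{n}\textstyle\sum_{x\in[n]} r_x = \tfrac{1}{n}\sum_{x\in[n]} Z_x \;-\; \D{q}{u_{[m]}},
\end{equation*}
where $Z_x := \log m - r_x = \log m + \sum_{y} \W_{x,y}\log \W_{x,y}$ is exactly the quantity treated in Lemma~\ref{lem:UBdav}. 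It therefore suffices to show that $\tfrac{1}{n}\sum_{x} Z_x \to L := \tfrac{\mu_2}{\mu_1}-\log\mu_1$ and $\D{q}{u_{[m]}} \to 0$, each almost surely and in $\Lp{2}$.

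For the sample average $\tfrac{1}{n}\sum_x Z_x$, the key point is that the rows of $V$ are independent, so $Z_1,\dots,Z_n$ are i.i.d.\ for every fixed $n$. Lemma~\ref{lemma:ImpLimit} gives $Z_1 \to L$ almost surely and in $\Lp{2}$, which forces $\E{Z_1}\to L$ and $\Var{Z_1}\to 0$. A direct second-moment computation then yields
\begin{equation*}
\E{\left(\tfrac{1}{n}\textstyle\sum_x Z_x - L\right)^2} = \tfrac{\Var{Z_1}}{n} + (\E{Z_1}-L)^2 \longrightarrow 0,
\end{equation*}
establishing $\Lp{2}$ convergence; almost-sure convergence follows by a Chebyshev plus Borel--Cantelli argument, extracting a quantitative rate for $\Var{Z_1}$ from the SLLN on row sums that underlies Lemma~\ref{lemma:ImpLimit}.

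For the divergence term, symmetry of the construction gives $\E{m q_y}=1$ exactly, and independence of rows gives $\Var{m q_y} = \tfrac{1}{n}\Var{m\,\W_{1,y}}$. Writing $m\,\W_{1,y} = V_{1,y}/(S_1/m)$ with $S_1 := \sum_{y'=1}^m V_{1,y'}$ and using the SLLN $S_1/m\to\mu_1$ together with Assumption~\ref{ass:finite:moments} (via truncation on the high-probability event $\{S_1/m \geq \mu_1/2\}$ plus dominated convergence), $m\,\W_{1,y}$ converges to $V_{1,y}/\mu_1$ in $\Lp{2}$, so $\Var{m\,\W_{1,y}}$ is bounded and $\Var{m q_y}=O(1/n)$. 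Combining this with a second-order Taylor expansion of $u\mapsto u\log u$ around $u=1$ (equivalently, the standard bound $\D{q}{u_{[m]}} \le (\ln 2)^{-1}\chi^2(q,u_{[m]}) = (m\ln 2)^{-1}\sum_y(m q_y - 1)^2$) yields $\E{\D{q}{u_{[m]}}} = O(1/n)$ and hence convergence to zero in $\Lp{1}$ and in probability; the almost-sure statement follows from Markov plus Borel--Cantelli after a higher-moment refinement.

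\textbf{Main obstacle.} The genuine technical subtlety is in the second step: showing that $\Var{m\,\W_{1,y}}$ remains bounded as $n\to\infty$ despite $\W_{1,y}$ taking values on all of $[0,1]$, and then upgrading the in-probability/$\Lp{2}$ estimates for both the triangular-array sum $\tfrac{1}{n}\sum Z_x$ and for $\D{q}{u_{[m]}}$ to almost-sure convergence. Because the distributions of $Z_x$ and of $q$ depend on $n$ through $m=\ceil{\gamma n}$, classical SLLNs do not apply directly; one must extract sharp enough moment bounds from Assumption~\ref{ass:finite:moments} (using the row-wise concentration of $S_1/m$ around $\mu_1$) to push the Borel--Cantelli argument through.
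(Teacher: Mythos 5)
Your decomposition $C_{\LB}^{(p\sim\mathcal{U})}(\W)=\tfrac{1}{n}\sum_x Z_x-\D{q}{u_{[m]}}$ is correct and genuinely different from the paper's route: the paper squeezes $\tfrac{1}{n}\sum_x Z_x$ between $\min_x Z_x$ and $\max_x Z_x$ and handles the output term by showing each column sum $\sum_k\W_{k,y}$ converges to $1/\gamma$ (Lemma~\ref{lemma:normalization}), whereas you isolate the output term as a $\chi^2$-type divergence and control it by a variance computation. Your bias--variance identity for the $\Lp{2}$ convergence of $\tfrac{1}{n}\sum_x Z_x$ is clean and arguably tidier than the paper's squeeze. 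However, two steps do not close as written.

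First, the claim $\Var{m\,\W_{1,y}}=O(1)$ is the linchpin of your bound $\E{\D{q}{u_{[m]}}}=O(1/n)$, and the truncation argument you sketch fails: on the complement of $\{S_1/m\geq\mu_1/2\}$ the only pointwise bound available is $m\,\W_{1,y}\leq m$, and Assumption~\ref{ass:finite:moments} gives only $\Prob{S_1/m<\mu_1/2}=O(1/m)$ via Chebyshev, so the bad-event contribution to $\E{(m\,\W_{1,y})^2}$ is $O(m^2\cdot m^{-1})=O(m)$, not $O(1)$. The statement is in fact true, but one must exploit exchangeability: $\E{(m\,\W_{1,y})^2}=\E{\bigl(\tfrac{1}{m}\sum_{y'}V_{1,y'}^2\bigr)/\bigl(\tfrac{1}{m}\sum_{y'}V_{1,y'}\bigr)^2}$, and the ratio is bounded pointwise by $m$ (not $m^2$), which makes the truncation work; this step needs to be made explicit. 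Second, both almost-sure upgrades are deferred to Borel--Cantelli arguments whose summability is exactly the open issue. Chebyshev gives $\Prob{|\tfrac{1}{n}\sum_x Z_x-\E{Z_1}|\geq\varepsilon}\leq\Var{Z_1}/(n\varepsilon^2)$, which is summable only if $\Var{Z_1}$ decays polynomially in $n$; this does not follow from the qualitative $\Lp{2}$ statement of Lemma~\ref{lemma:ImpLimit} and requires a separate quantitative estimate (achievable using $0\leq Z_1\leq\log m$ and concentration of the row mean, giving roughly $\Var{Z_1}=O((\log m)^2/m)$, but this is where the real work sits and it is absent). Likewise, Markov applied to $\E{\D{q}{u_{[m]}}}=O(1/n)$ is not summable, and the "higher-moment refinement" you invoke would need fourth moments of $V\log V$, which Assumption~\ref{ass:finite:moments} does not supply; you would instead have to exploit structural bounds such as $\D{q}{u_{[m]}}\leq\log m$ together with a rate on a higher moment of $\chi^2(q,u_{[m]})$. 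The paper sidesteps all of this by reducing every almost-sure claim to the classical i.i.d.\ strong law along single rows and columns (of length $m=\ceil{\gamma n}$) combined with squeeze arguments, at the price of a less quantitative statement.
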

\begin{proof}
The mutual information for a uniform input distribution, i.e., $p_i = \frac{1}{n}$ for all $i \in [n]$ can be written as
\begin{align}
C_{\LB}^{(p\sim\mathcal{U})}(\W)  &= \frac{1}{n} \sum_{x\in [n],y\in[m]} \W_{x,y} \left(\log n + \log \W_{x,y}  - \log \sum_{k \in [n]} \W_{k,y}  \right) \nonumber \\
&=\frac{1}{n} \sum_{x\in [n],y\in [m]} \W_{x,y} \left(\log n + \log \W_{x,y} \right) - \frac{1}{n} \sum_{x,y \in [n]} \W_{x,y} \log \sum_{k\in[n]} \W_{k,y}.  \label{eq:dav}
\end{align}
According to Lemma~\ref{lemma:normalization}, for $n\to \infty$, $ \frac{1}{n} \sum_{x \in [n],y\in [m]} \W_{x,y} \log \sum_{k\in[n]} \W_{k,y}$ converges to $-\log \gamma$ almost surely and in $\Lp{2}
$. We can simplify the first part of \eqref{eq:dav} by making use of the fact that $\W_{x,y}$ is normalized, i.e., that $\sum_{y\in[m]} \W_{x,y}=1$ for all $x\in [n]$,
\begin{align}
\frac{1}{n} \sum_{x\in [n],y\in [m]} \W_{x,y} \left(\log n + \log \W_{x,y} \right) &= \frac{1}{n} \left(\log n \sum_{x\in[n]} \sum_{y \in [m]} \W_{x,y}+ \sum_{x \in [n],y\in [m]} \W_{x,y} \log \W_{x,y} \right) \nonumber \\
&= \log n + \frac{1}{n} \sum_{x \in [n],y\in [m]} \W_{x,y} \log \W_{x,y}. \label{eq:dav2}
\end{align}
Consider the upper bound 
\begin{align}
\log n + \frac{1}{n} \sum_{x \in [n],y\in [m]} \W_{x,y} \log \W_{x,y} &\leq \log n + \max \limits_{x\in [n]} \sum_{y \in [m]} \W_{x,y} \log \W_{x,y} \nonumber \\
 &= \log n + \sum_{y \in [m]} \W_{\tilde x,y} \log \W_{\tilde x,y} \nonumber \\
 &= \log m +  \sum_{y \in [m]} \W_{\tilde x,y} \log \W_{\tilde x,y} - \log\left( \gamma + \frac{\varepsilon_n}{n} \right) , \label{eq:rhs1}
\end{align}
for some $\tilde x \in [n]$, where $\varepsilon_{n}:= \ceil{\gamma n}-\gamma n \in[0,1)$ for all $n$. According to Lemma~\ref{lemma:ImpLimit}, the right hand side of \eqref{eq:rhs1} converges to $\tfrac{\mu_2}{\mu_1} - \log \mu_1-\log \gamma$ almost surely and in $\Lp{2}$ for $n\to \infty$. We can also bound the same term from below as 
\begin{align}
\log n + \frac{1}{n} \sum_{x\in [n], y \in [m]} \W_{x,y} \log \W_{x,y} &\geq\log n + \min \limits_{x\in [n]} \sum_{y \in [m]} \W_{x,y} \log \W_{x,y} \nonumber  \\
 &= \log n + \sum_{y \in [m]} \W_{\bar x,y} \log \W_{\bar x,y} \nonumber \\
 &=\log m + \sum_{y \in [m]} \W_{\bar x,y} \log \W_{\bar x,y} - \log\left( \gamma + \frac{\varepsilon_n}{n} \right), \label{eq:rhs2}
\end{align}
for some $\bar x \in [n]$, where $\varepsilon_{n}:=\ceil{\gamma n}-\gamma n \in[0,1)$ for all $n$. According to Lemma~\ref{lemma:ImpLimit}, the right hand side of \eqref{eq:rhs2} converges to $\tfrac{\mu_2}{\mu_1} - \log \mu_1-\log \gamma$ almost surely and in $\Lp{2}$ as $n\to \infty$.
Thus for $n\to \infty$, \eqref{eq:dav} converges to $\tfrac{\mu_2}{\mu_1} - \log \mu_1$ in $\Lp{2}$ which proves the assertion.
\end{proof}
Lemmas~\ref{lem:UBdav} and \ref{lem:LBdav} complete the proof of Theorem~\ref{thm:main} as $C_{\LB}^{(p\sim\mathcal{U})}(\W) \leq C(\W) \leq C_{\UB}^{(\lambda =0)}(\W)$.


\section{Simulation results} \label{sec:simulation}
In this section we compute the capacity of the DMCs introduced in Section~\ref{sec:main} for finite alphabet sizes. For the computation we use a recently introduced method \cite{ref:Sutter-15} which allows us to efficiently compute close upper and lower bounds to the capacity. Roughly speaking, the method \cite{ref:Sutter-15} is an iterative accelerated first-order method that exploits duality of convex programming together with the fact that entropy maximization problems admit closed-form solutions.

\begin{myex}[Exponential distribution] \label{ex:simu:exp}
We consider a channel that is given by the stochastic matrix  $\W=(\W_{x,y})_{x,y\in[n]}$ with $\W_{x,y}=V_{x,y} / \sum_{y \in [n]}V_{x,y}$, where $V_{x,y}$ are i.i.d. $\mathcal{E}(\lambda)$ random variables with $\lambda=\tfrac{1}{10}$ for all $x,y\in[n]$. 
As explained in Example~\ref{ex:uniform:simplex} with this channel construction the rows $\W_{x,\cdot}$ admit a symmetric Dirichlet distribution with concentration parameter $\lambda=\tfrac{1}{10}$ for all $x\in[n]$.
Figure~\ref{fig:plotExp} depicts the capacity of $\W$ for variable alphabet sizes. We perform five independent experiments for each value of $n$. On can observe that as $n \to \infty$ the capacity approaches the asymptotic limit as determined in Example~\ref{ex:exp}.  In addition one can see that the variance between the capacity of the two independently chosen channels is decreasing for increasing alphabet sizes.

\begin{figure}[!htb]
  \begin{tikzpicture}
	\begin{axis}[
		height=9cm,
		width=16cm,
		grid=major,
		xlabel=alphabet size $n$,
		ylabel=capacity \bracket{bits per channel use},
		xmin=1,
		xmax=1000,
		ymax=0.723,
		ymin=0.588,
		legend style={at={(0.79,0.965)},anchor=north,legend cell align=left} 
		]


	\addlegendimage{only marks,black,mark=square*,mark size=1pt} 	
	\addlegendimage{only marks,black,mark=*,mark size=1pt} 
	\addlegendimage{black,thick,smooth,dashed,mark size=1pt}

		\addplot[blue,mark=square*,only marks,mark options={fill=blue},mark size=1pt] coordinates {
		(10,0.6517)	
		(50,0.6911)
		(100,0.6636)
		(150,0.6667)
		(200,0.6804)	
		(250,0.6714)	
		(300,0.6682)	
		(350,0.6622)	
		(400,0.6603)	
		(450,0.6567)	
		(500,0.6646)	
		(550,0.6625)
		(600,0.6568)
		(650,0.6563)
		(700,0.6550)
		(750,0.6537)
		(800,0.6520)
		(850,0.6562)
		(900,0.6546)
		(950,0.6515)
		(1000,0.6480) 
	};

		\addplot[blue,mark=*,only marks,mark options={fill=blue},mark size=1pt] coordinates {
		(10,0.6514)	
		(50,0.6896)
		(100,0.6630)
		(150,0.6645)
		(200,0.6795)	
		(250,0.6699)	
		(300,0.6675)	
		(350,0.6593)	
		(400,0.6595)	
		(450,0.6550)	
		(500,0.6640)	
		(550,0.6611)
		(600,0.6556)
		(650,0.6541)
		(700,0.6540)
		(750,0.6521)
		(800,0.6512)
		(850,0.6549)
		(900,0.6540)
		(950,0.6505)
		(1000,0.6480) 
	};
		
	
			\addplot[red,mark=square*,only marks,mark options={fill=red},mark size=1pt] coordinates {
		(10,0.6187)	
		(50,0.6929)
		(100,0.6984)
		(150,0.6791)
		(200,0.6703)	
		(250,0.6799)	
		(300,0.6635)	
		(350,0.6656)	
		(400,0.6629)	
		(450,0.6682)	
		(500,0.6666)	
		(550,0.6643)
		(600,0.6583)
		(650,0.6552)
		(700,0.6551)
		(750,0.6542)
		(800,0.6519)
		(850,0.6562)
		(900,0.6540)
		(950,0.6515)
		(1000,0.6480) 
	};

		\addplot[red,mark=*,only marks,mark options={fill=red},mark size=1pt] coordinates {
		(10,0.6172)	
		(50,0.6916)
		(100,0.6979)
		(150,0.6775)
		(200,0.6695)	
		(250,0.6786)	
		(300,0.6625)	
		(350,0.6643)	
		(400,0.6623)	
		(450,0.6670)	
		(500,0.6658)	
		(550,0.6628)
		(600,0.6574)
		(650,0.6537)
		(700,0.6544)
		(750,0.6525)
		(800,0.6512)
		(850,0.6549)
		(900,0.6534)
		(950,0.6505)
		(1000,0.6480) 
	};
	
				\addplot[ForestGreen,mark=square*,only marks,mark options={fill=ForestGreen},mark size=1pt] coordinates {
		(10,0.6972)	
		(50,0.7089)
		(100,0.6864)
		(150,0.6786)
		(200,0.6892)	
		(250,0.6753)	
		(300,0.6690)	
		(350,0.6814)	
		(400,0.6696)	
		(450,0.6685)	
		(500,0.6672)	
		(550,0.6585)
		(600,0.6588)
		(650,0.6555)
		(700,0.6556)
		(750,0.6573)
		(800,0.6524)
		(850,0.6563)
		(900,0.6534)
		(950,0.6502)
		(1000,0.6485)																																			
	};
	
				\addplot[ForestGreen,mark=*,only marks,mark options={fill=ForestGreen},mark size=1pt] coordinates {
		(10,0.6967)	
		(50,0.7039)
		(100,0.6833)
		(150,0.6756)
		(200,0.6869)	
		(250,0.6739)	
		(300,0.6660)	
		(350,0.6791)	
		(400,0.6681)	
		(450,0.6674)	
		(500,0.6658)	
		(550,0.6573)
		(600,0.6576)
		(650,0.6542)
		(700,0.6543)
		(750,0.6561)
		(800,0.6510)
		(850,0.6549)
		(900,0.6520)
		(950,0.6493)
		(1000,0.6478)																																			
	};

				\addplot[YellowOrange,mark=square*,only marks,mark options={fill=YellowOrange},mark size=1pt] coordinates {
		(10,0.7224)	
		(50,0.6976)
		(100,0.6953)
		(150,0.6812)
		(200,0.6831)	
		(250,0.6848)	
		(300,0.6713)	
		(350,0.6719)	
		(400,0.6639)	
		(450,0.6618)	
		(500,0.6538)	
		(550,0.6592)
		(600,0.6584)
		(650,0.6595)
		(700,0.6552)
		(750,0.6564)
		(800,0.6549)
		(850,0.6529)
		(900,0.6529)
		(950,0.6496)
		(1000,0.6485)																																			
	};

				\addplot[YellowOrange,mark=*,only marks,mark options={fill=YellowOrange},mark size=1pt] coordinates {
		(10,0.7221)	
		(50,0.6919)
		(100,0.6905)
		(150,0.6788)
		(200,0.6802)	
		(250,0.6821)	
		(300,0.6695)	
		(350,0.6704)	
		(400,0.6622)	
		(450,0.6602)	
		(500,0.6530)	
		(550,0.6580)
		(600,0.6576)
		(650,0.6587)
		(700,0.6543)
		(750,0.6554)
		(800,0.6536)
		(850,0.6517)
		(900,0.6520)
		(950,0.6487)
		(1000,0.6478)																																			
	};

				\addplot[SkyBlue,mark=square*,only marks,mark options={fill=SkyBlue},mark size=1pt] coordinates {
		(10,0.5904)	
		(50,0.7190)
		(100,0.6727)
		(150,0.6765)
		(200,0.6859)	
		(250,0.6649)	
		(300,0.6739)	
		(350,0.6688)	
		(400,0.6679)	
		(450,0.6634)	
		(500,0.6613)	
		(550,0.6581)
		(600,0.6610)
		(650,0.6572)
		(700,0.6592)
		(750,0.6571)
		(800,0.6549)
		(850,0.6565)
		(900,0.6545)
		(950,0.6524)
		(1000,0)																																			
	};	

				\addplot[SkyBlue,mark=*,only marks,mark options={fill=SkyBlue},mark size=1pt] coordinates {
		(10,0.5887)	
		(50,0.7129)
		(100,0.6682)
		(150,0.6735)
		(200,0.6830)	
		(250,0.6625)	
		(300,0.6721)	
		(350,0.6668)	
		(400,0.6661)	
		(450,0.6626)	
		(500,0.6603)	
		(550,0.6572)
		(600,0.6600)
		(650,0.6564)
		(700,0.6581)
		(750,0.6559)
		(800,0.6540)
		(850,0.6552)
		(900,0.6537)
		(950,0.6516)
		(1000,0)																																			
	};	
	

	\addplot[black,thick,smooth,dashed] coordinates {
		(1,0.6099)	
		(1000,0.6099)
	};
		

\addlegendentry{ \, upper bound $i$th experiment}
\addlegendentry{ \, lower bound $i$th experiment}
\addlegendentry{ \, asymptotic capacity}

	\end{axis}
\end{tikzpicture}

\caption{For different alphabet sizes $n$ we plot the capacity of five random channels, constructed as explained in Example~\ref{ex:simu:exp}. The method introduced in \cite{ref:Sutter-15} is used to determine upper and lower bounds for the capacity for finite alphabet sizes $n$. The asymptotic capacity (for $n\to \infty$) is depticted by the dashed line.}
\label{fig:plotExp}
\end{figure}
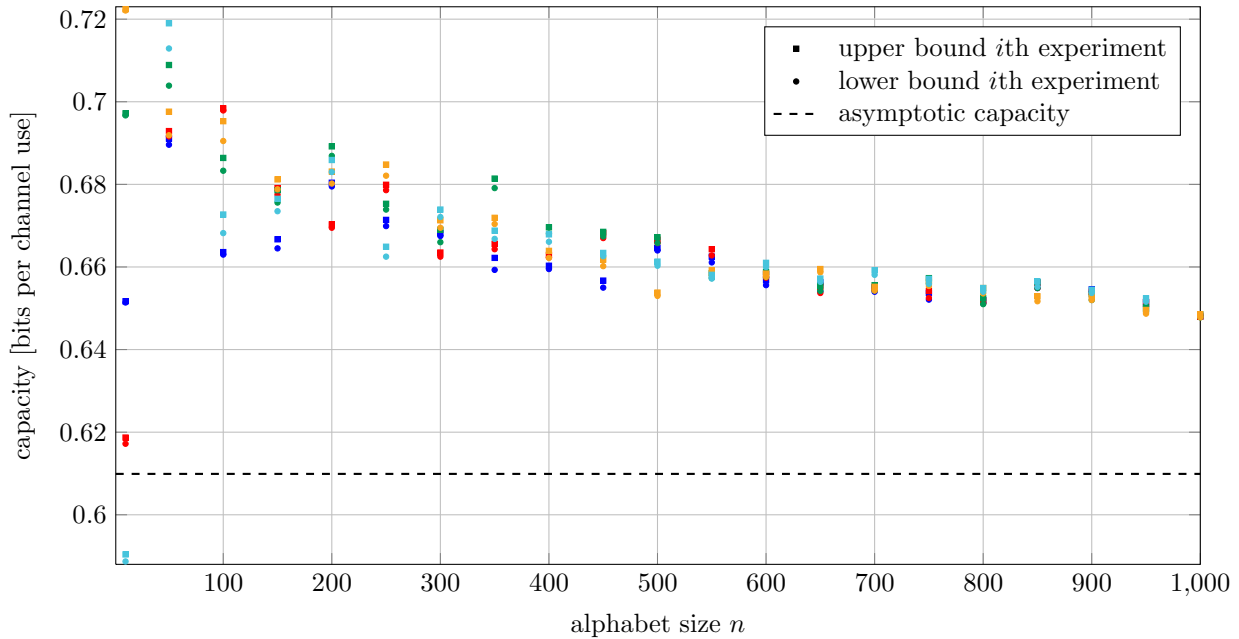

\end{myex}

\section{Application in Bayesian Experiment Design} \label{sec:application}

The main objective of optimal experiment design is, based on prior knowledge, to select a most informative experiment, where 
we restrict attention to a certain notion of information that traces back to Shannon~\cite{shannon48}; see \cite{ref:Lindley-56} for a comprehensive survey. This section will motivate the study of a convergence rate for the asymptotic capacity (under more restrictive assumptions on the channel model) that is the content of Section~\ref{sec:finiteSize}.

Let the random variable $X\in\mathcal{X}:=[n]$ describe a parameter to be determined with a prior probability distribution $p\in\Delta_{n}$ and let the random variable $Y\in\mathcal{Y}:=[m]$ denote an observation.
Furthermore, consider a family of experiments $(\W^{(\lambda,n)})_{\lambda \in \Lambda}$, where $\Lambda\subset \R^{d}$ characterizes the set of all admissible experiments, and each experiment $\W^{(\lambda,n)} \in \mathcal{M}_{n,m}$ is characterized by the conditional probabilities $\W^{(\lambda,n)}_{x,y}:=\mathds{P}^{(\lambda)}[Y=y|X=x]$ for all $x\in\mathcal{X}$ and $y\in\mathcal{Y}$.\footnote{Strictly speaking, an experiment consists of the tuple $\{ \mathcal{Y}, \Borelsigalg{\mathcal{Y}},\mathcal{X}, \W^{(\lambda,n)}, \mathds{P}^{(\lambda)} \}$ as pointed out in \cite{ref:Lindley-56}. Since the conditional probability $\W^{(\lambda,n)}$ is our optimization variable and since $\mathcal{X}$ and $\mathcal{Y}$ remain constant, with a slight abuse of notation, we call $\W^{(\lambda,n)}$ an experiment.} The task of optimal experiment design is, given a prior distribution $p\in\Delta_{n}$, to find the experiment that provides the highest average amount of information, as described by the mutual information between the parameter and the observation \cite[Definition~2]{ref:Lindley-56}, i.e., the goal is to find $\lambda^{\star}\in\Lambda$ such that $\I{p}{\W^{(\lambda^{\star},n)}} \geq \I{p}{\W^{(\lambda,n)}}$ for all $\lambda\in\Lambda$. This requires one to compute
\begin{equation} \label{eq:experimental:design}
\sup_{\lambda\in\Lambda} \I{p}{\W^{(\lambda,n)}}.
\end{equation}
The optimization problem \eqref{eq:experimental:design} in general is difficult to solve. Moreover, an evaluation of the objective function, the mutual information, for a given $\lambda$ has a computational complexity of $O(nm)$ and as such for large sets $\mathcal{X}$ and $\mathcal{Y}$ even solving \eqref{eq:experimental:design} for local optimality can be computationally demanding.

The task of designing optimal experiments has recently attracted interest in the context of biological systems, where understanding about the underlying biological mechanisms emerges through iterations of modelling and experiments. Since experiments are expensive an effective selection of informative experiments is essential, see \cite{ref:Busetto-13}. We will show in Section~\ref{sec:upper:bound:experiment} that the asymptotic capacity formula, given in Theorem~\ref{thm:main}, allows us to derive upper bounds on the expected information gain by an experiment for certain classes of (random) experiments. In addition, Theorem~\ref{thm:main} provides an efficient method to select suboptimal experiments, that are almost optimal in our numerical example, see Example~\ref{ex:expDesign}.
Let $(V_{x,y}^{(\lambda)})_{x\in[n], y\in[m]}$ be i.i.d.\ random variables for each $\lambda\in\Lambda\subset\R^d$ and consider a channel transition matrix $\W_{x,y}^{(\lambda,V,n)}=V_{x,y}^{(\lambda)} / \sum_{y \in [m]}V_{x,y}^{(\lambda)}$.

%

\subsection{Upper bound on maximum expected information gain}\label{sec:upper:bound:experiment}
In the limit, as $n\to\infty$, we can establish the following upper bound on the maximum expected information gain by an experiment.
\begin{myprop}[Upper bound on maximum expected information gain] \label{prop:upper:bound}
For the family of channels $( \W^{(\lambda,V,n)})_{\lambda \in \Lambda}$ introduced above that satisfy Assumptions~\ref{ass:input:output:size} and \ref{ass:convergence:rate}, we have with high probability
\begin{align} \label{eq:upper:bound:MI}
\lim_{n\to\infty} \sup_{\lambda\in\Lambda}  \I{p}{ \W^{(\lambda,V,n)}} &\leq \sup_{\lambda\in\Lambda} \frac{\Ent(V_{x,y}^{(\lambda)})}{\E{V_{x,y}^{(\lambda)}}}.
\end{align}
\end{myprop}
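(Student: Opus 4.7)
My plan is to chain together two standard inequalities and then promote them from pointwise to uniform in $\lambda$. First, for any fixed $\lambda\in\Lambda$ and any prior $p\in\Delta_n$, the mutual information is dominated by the channel capacity, so
\begin{equation*}
\I{p}{\W^{(\lambda,V,n)}} \;\leq\; C\bigl(\W^{(\lambda,V,n)}\bigr).
\end{equation*}
Taking the supremum over $\lambda$ on both sides reduces the problem to controlling $\sup_{\lambda\in\Lambda} C(\W^{(\lambda,V,n)})$. Since the channels $\W^{(\lambda,V,n)}$ are built from i.i.d.\ copies of $V_{x,y}^{(\lambda)}$ in the same way as in Theorem~\ref{thm:main}, a pointwise application of that theorem (or rather Theorem~\ref{thm:finiteSize}, which is what Assumption~\ref{ass:convergence:rate} is designed to support) gives $C(\W^{(\lambda,V,n)})\to \Ent(V_{x,y}^{(\lambda)})/\E{V_{x,y}^{(\lambda)}}$.

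The crux of the proof is therefore to upgrade this pointwise statement to a uniform one in $\lambda$. The plan is to use the exponential-in-$n$ concentration of $C(\W^{(\lambda,V,n)})$ around its asymptotic value that is promised by Theorem~\ref{thm:finiteSize}: fix $\varepsilon>0$ and, for each $\lambda\in\Lambda$, let $E_n(\lambda,\varepsilon)$ be the event
\begin{equation*}
\Bigl|\,C(\W^{(\lambda,V,n)}) - \tfrac{\mu_{2,n}(\lambda)}{\mu_{1,n}(\lambda)} + \log \mu_{1,n}(\lambda)\Bigr| \;\geq\; \varepsilon,
\end{equation*}
which under Assumption~\ref{ass:convergence:rate} has probability at most $c_1 e^{-c_2 n}$ for constants uniform in $\lambda$. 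If $\Lambda$ is finite a trivial union bound is enough; in general I would discretize $\Lambda$ into an $\varepsilon$-net $\Lambda_\varepsilon$ of cardinality subexponential in $n$, apply the union bound on $\Lambda_\varepsilon$, and then interpolate to all of $\Lambda$ using continuity (Lipschitz or otherwise) of $\lambda\mapsto C(\W^{(\lambda,V,n)})$ and of $\lambda\mapsto \Ent(V_{x,y}^{(\lambda)})/\E{V_{x,y}^{(\lambda)}}$. The former follows from the fact that mutual information is Lipschitz in the channel matrix (in total-variation norm, say) and from regularity of the map $\lambda\mapsto V_{x,y}^{(\lambda)}$ assumed in the model.

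On the high-probability event that all the deviations in the net are small simultaneously, one obtains for all $n$ large enough
\begin{equation*}
\sup_{\lambda\in\Lambda} C(\W^{(\lambda,V,n)}) \;\leq\; \sup_{\lambda\in\Lambda}\frac{\Ent(V_{x,y}^{(\lambda)})}{\E{V_{x,y}^{(\lambda)}}} + 2\varepsilon,
\end{equation*}
and combining this with the mutual-information-vs-capacity inequality above and then sending $n\to\infty$ followed by $\varepsilon\to 0$ yields the claimed bound. The main obstacle I expect is precisely this uniformization step: Theorem~\ref{thm:main} on its own only gives almost-sure convergence for a single $\lambda$, so without the exponential rate of Theorem~\ref{thm:finiteSize} the supremum could in principle diverge. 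Constructing the $\varepsilon$-net together with a quantitative modulus of continuity in $\lambda$ (which is where the precise content of Assumption~\ref{ass:convergence:rate} and the structure of $\Lambda$ become essential) is therefore the key technical step to get right.
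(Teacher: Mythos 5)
Your proposal follows essentially the same route as the paper: bound the mutual information by the capacity, then interchange $\sup_{\lambda}$ and $\lim_{n}$ using the uniform-in-$\lambda$ convergence supplied by the exponential concentration of Theorem~\ref{thm:finiteSize}, and finish with Theorem~\ref{thm:main}. The only difference is presentational — you make the uniformization explicit via an $\varepsilon$-net and union bound, whereas the paper asserts a uniform deviation bound $M_n\to 0$ on events $\Omega_n$ with $\Prob{\Omega_n}\to 1$ and then invokes a Rudin-style interchange-of-limits lemma — so your version is, if anything, more candid about where the pointwise-to-uniform step actually needs justification.
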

The upper bound provided by Proposition~\ref{prop:upper:bound} is particularly useful if the right-hand side of \eqref{eq:upper:bound:MI} admits a closed form solution, whereas the optimal information gain $\sup_{\lambda\in\Lambda}  \I{p}{ \W^{(\lambda,V,n)}}$ is difficult to compute (see Example~\ref{ex:expDesign} for more details). \\

Before proving Proposition~\ref{prop:upper:bound} we recall a preliminary standard result.
\begin{mylem}[Theorem~7.11 in \cite{ref:Rudin-76}] \label{lem:interchange:limits}
Suppose $X$ is a metric space, $E$ is a subset of $X$ and $x$  is a limit point of $E$. Suppose $f_{n}:X\to\R$ for each $n\in\mathbb{N}$ and $f:X\to\R$ are functions and $A_{n}$ are numbers. If $\lim_{n\to\infty} f_{n}(x) = f(x)$ uniformly in $X$ and $\lim_{y\to x}f_{n}(y) = A_{n}$ pointwise over $n \in \mathbb{N}$. Then
\begin{equation*}
\lim_{y\to x}\lim_{n\to \infty} f_{n}(y) = \lim_{n\to\infty} \lim_{y\to x}f_{n}(y).
\end{equation*}
\end{mylem}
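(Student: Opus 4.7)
The plan is to follow the classical three-step argument for interchanging limits via a Cauchy criterion. Let $E\subset X$ and the limit point $x$ be as stated, and write $y\to x$ to mean $y\in E$, $y\to x$. I will first show that the sequence $(A_{n})_{n\in\mathbb{N}}\subset\R$ is Cauchy in $\R$, hence converges to some $A\in\R$. For that, given $\varepsilon>0$, I would invoke uniform convergence $f_{n}\to f$ on $X$ to pick $N$ with $|f_{n}(y)-f_{m}(y)|<\varepsilon$ for all $n,m\geq N$ and all $y\in X$; then pass $y\to x$ along $E$ using the pointwise assumption $f_{n}(y)\to A_{n}$, $f_{m}(y)\to A_{m}$, and conclude $|A_{n}-A_{m}|\leq\varepsilon$. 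Denote the resulting limit by $A:=\lim_{n\to\infty}A_{n}$.

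Next I would show that $\lim_{y\to x}f(y)=A$. Given $\varepsilon>0$, the plan is the standard $\varepsilon/3$ split: use uniform convergence to pick $n$ large enough that $|f_{n}(y)-f(y)|<\varepsilon/3$ for every $y\in X$, simultaneously enforce $|A_{n}-A|<\varepsilon/3$, and then use the pointwise limit $\lim_{y\to x}f_{n}(y)=A_{n}$ to choose a neighborhood of $x$ in $E$ on which $|f_{n}(y)-A_{n}|<\varepsilon/3$. The triangle inequality
\begin{equation*}
|f(y)-A|\leq |f(y)-f_{n}(y)|+|f_{n}(y)-A_{n}|+|A_{n}-A|
\end{equation*}
then yields $|f(y)-A|<\varepsilon$ on that neighborhood, establishing $\lim_{y\to x}f(y)=A$. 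Combining both directions gives
\begin{equation*}
\lim_{y\to x}\lim_{n\to\infty}f_{n}(y)=\lim_{y\to x}f(y)=A=\lim_{n\to\infty}A_{n}=\lim_{n\to\infty}\lim_{y\to x}f_{n}(y),
\end{equation*}
which is the claim.

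This is a textbook statement (Rudin, Theorem~7.11), so no step is conceptually hard; the only piece requiring care is the bookkeeping in the Cauchy step, where one must be sure that the constant obtained from uniform convergence of $(f_{n})$ on $X$ survives the passage $y\to x$ along $E$, and that the limit point structure (rather than sequential continuity) is respected. Once $(A_{n})$ is identified as Cauchy in the complete space $\R$, everything else is routine $\varepsilon/3$ juggling. Since the lemma is quoted essentially verbatim from Rudin, the cleanest presentation is to cite \cite{ref:Rudin-76} rather than reproduce the proof in full.
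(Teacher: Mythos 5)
Your proof is correct and is exactly the classical Cauchy-plus-$\varepsilon/3$ argument from Rudin's Theorem~7.11; the paper itself offers no proof and simply cites \cite{ref:Rudin-76}, so your sketch matches the intended source verbatim. As you note, citing the reference is the appropriate presentation here.
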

\begin{proof}[Proof of Proposition~\ref{prop:upper:bound}]
We show that with high probability
\begin{align} \label{eq:upper:bound:MI:proof}
\lim_{n\to\infty} \sup_{\lambda\in\Lambda}  \I{p}{ \W^{(\lambda,V,n)}} &\leq \lim_{n\to\infty} \sup_{\lambda\in\Lambda} C( \W^{(\lambda,V,n)} ) = \sup_{\lambda\in\Lambda} \lim_{n\to\infty}   C( \W^{(\lambda,V,n)} ) = \sup_{\lambda\in\Lambda} \frac{\Ent(V_{x,y}^{(\lambda)})}{\E{V_{x,y}^{(\lambda)}}}.
\end{align}
The first inequality of \eqref{eq:upper:bound:MI:proof} is trivial and the last equality follows by Theorem~\ref{thm:main}. Therefore it remains to prove that the first equality in \eqref{eq:upper:bound:MI:proof} holds almost surely. 
Note first that the following property holds
\begin{enumerate}[(i)] 
\item \label{ass:uniform:convergence} The capacity of the channel $\W^{(\lambda,V,n)}$ converges uniformly in $\Lambda$ to its asymptotic capacity in probability, i.e.,
\begin{equation*}
\text{for all }\varepsilon>0, \quad \lim_{n\to\infty} \Prob{ \sup\limits_{\lambda\in\Lambda}  \left|C( \W^{(\lambda,V,n)} ) - \frac{\Ent(V^{(\lambda)}_{x,y})}{\E{V^{(\lambda)}_{x,y}}} \right|\geq\varepsilon} = 0,
\end{equation*}
\end{enumerate}
because by Theorem~\ref{thm:finiteSize} (whose derivation is provided in the next section), we know that for each $n$ there exists $M_{n}<\infty$ and  $N_{n}\leq 1$ as well as $\Omega_{n}\subset \Omega$ with $\Prob{\Omega_{n}}\geq N_{n}$ such that
\begin{equation}
\sup_{\lambda\in\Lambda}  \left|C( \W^{(\lambda,V,n)} ) - \frac{\Ent(V^{(\lambda)}_{x,y})}{\E{V^{(\lambda)}_{x,y}}}\right| \leq M_{n} \quad \textnormal{on }\Omega_{n}.
\end{equation}
Moreover, $M_{n}\to 0$ and $N_{n}\to 1$ as $n\to\infty$, which implies that 
\begin{equation*}
\text{for all }\varepsilon>0, \quad \lim_{n\to\infty} \Prob{ \sup\limits_{\lambda\in\Lambda}  \left|C( \W^{(\lambda,V,n)} ) - \frac{\Ent(V^{(\lambda)}_{x,y})}{\E{V^{(\lambda)}_{x,y}}} \right|\geq\varepsilon} = 0
\end{equation*}
and hence, property~\eqref{ass:uniform:convergence} holds. Note also that the following property holds trivially since $C( \W ) \leq \log( n \wedge m)$ for any channel matrix $\W\in\mathcal{M}_{n,m}$
\begin{enumerate}[(ii)] 
\item \label{ass:ii:family1}  $\sup\limits_{\lambda\in\Lambda} C( \W^{(\lambda,V,n)} ) < \infty$ almost surely for all $n\in\mathbb{N}$.
\end{enumerate}
Hence Lemma~\ref{lem:interchange:limits}, using the two properties \eqref{ass:uniform:convergence}, (\ref{ass:ii:family1}\ref{ass:ii:family1}), implies that with high probability
\begin{equation*}
 \lim_{n\to\infty} \sup_{\lambda\in\Lambda} C( \W^{(\lambda,V,n)} ) = \sup_{\lambda\in\Lambda} \lim_{n\to\infty}   C( \W^{(\lambda,V,n)} ),
\end{equation*}
which readily can be shown to imply the desired equality and therefore completes the proof.
\end{proof}


\subsection{Example: Constrained lognormal distribution} \label{ex:expDesign}
We consider the setting given in Example~\ref{ex:lognormal1} and introduce a parameter $\lambda:=(z,\sigma^{2})\in\R\times\Rsp$. For given constants $\ell_{i}, u_{i}$ for $i=1,2$ and $\ell_{1}>0$, we consider the family of experiments $(\W^{(\lambda,V,n)})_{\lambda\in\Lambda}$, where
\begin{equation}\label{eq:familiy:channels:ex:lognormal}
\Lambda = \left\{(z,\sigma^2) \in \R \times \Rp : \exp(z+\tfrac{\sigma^{2}}{2}) \in [\ell_1, u_1],  (\exp(\sigma^{2})-1)\exp(2z + \sigma^{2}) \in [\ell_2,u_2]\right\},
\end{equation}
and $\E{V_{x,y}^{(\lambda)}} = \exp(z+\tfrac{\sigma^{2}}{2})$ and $ \Var{V_{x,y}^{(\lambda)}} = (\exp(\sigma^{2})-1)\exp(2z + \sigma^{2})$. For this family of experiments Assumptions~\ref{ass:input:output:size}, \ref{ass:convergence:rate} clearly hold and an upper bound to the maximum expected information gain provided by an experiment, using Proposition~\ref{prop:upper:bound}, can be stated in closed form.
\begin{myprop} \label{prop:exp:design}
In the limit $n\to\infty$ an upper bound on the maximum information gain by an experiment from the family \eqref{eq:familiy:channels:ex:lognormal} is given with high probability by
\begin{equation*}
\lim_{n\to\infty} \sup_{\lambda\in\Lambda}  \I{p}{ \W^{(\lambda,V,n)}} \leq  (2\ln 2)^{-1} \ln\left( \frac{u_{2}}{\ell^{2}_{1}}+1\right).
\end{equation*}
\end{myprop}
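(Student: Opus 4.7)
The plan is to chain together three observations: (a) the uniform-convergence upper bound from Proposition~\ref{prop:upper:bound}, (b) the closed-form formula for the asymptotic capacity under a lognormal distribution from Example~\ref{ex:lognormal1}, and (c) an elementary manipulation of the two constraints defining $\Lambda$ to bound $\sigma^{2}$ in terms of $u_{2}$ and $\ell_{1}$.

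First I would verify that the family $(\W^{(\lambda,V,n)})_{\lambda\in\Lambda}$ satisfies the hypotheses of Proposition~\ref{prop:upper:bound}. Since $V_{x,y}^{(\lambda)}\sim\exp\mathcal{N}(z,\sigma)$ with the parameters constrained to the compact set $\Lambda$, the relevant moments are bounded uniformly in $\lambda$, so Assumption~\ref{ass:convergence:rate} is in force (and Assumption~\ref{ass:input:output:size} is inherited). Applying Proposition~\ref{prop:upper:bound} then gives, with high probability,
\begin{equation*}
\lim_{n\to\infty} \sup_{\lambda\in\Lambda} \I{p}{\W^{(\lambda,V,n)}} \leq \sup_{\lambda\in\Lambda} \frac{\Ent(V_{x,y}^{(\lambda)})}{\E{V_{x,y}^{(\lambda)}}}.
\end{equation*}
By the computation in Example~\ref{ex:lognormal1}, the right-hand side equals $\sup_{\lambda\in\Lambda} \sigma^{2}/(2\ln 2)$, so the task reduces to maximizing $\sigma^{2}$ subject to the two constraints defining $\Lambda$.

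The main (and only nontrivial) step is to turn the variance constraint $(\exp(\sigma^{2})-1)\exp(2z+\sigma^{2})\leq u_{2}$ into a bound on $\sigma^{2}$ alone. The key observation is that the mean constraint $\exp(z+\sigma^{2}/2)\geq\ell_{1}$ implies $\exp(2z+\sigma^{2})\geq \ell_{1}^{2}$, so
\begin{equation*}
(\exp(\sigma^{2})-1)\ell_{1}^{2} \;\leq\; (\exp(\sigma^{2})-1)\exp(2z+\sigma^{2}) \;\leq\; u_{2},
\end{equation*}
which rearranges to $\sigma^{2}\leq \ln\!\left(u_{2}/\ell_{1}^{2}+1\right)$. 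This bound is uniform over $\lambda\in\Lambda$, yielding
\begin{equation*}
\sup_{\lambda\in\Lambda} \frac{\sigma^{2}}{2\ln 2} \;\leq\; \frac{1}{2\ln 2}\ln\!\left(\frac{u_{2}}{\ell_{1}^{2}}+1\right),
\end{equation*}
and combining with the Proposition~\ref{prop:upper:bound} step completes the proof.

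I do not expect any serious obstacle: the heavy lifting (interchange of $\lim$ and $\sup$, together with uniform convergence) is already done in Proposition~\ref{prop:upper:bound}, and after inserting the closed-form moments from Example~\ref{ex:lognormal1} the remainder is a one-line algebraic manipulation of the two box constraints. The only small thing to double-check is that the lower constraint $\ell_{1}>0$ is indeed needed (to avoid dividing by zero) and that the upper constraints $u_{1}$ and $\ell_{2}$ play no role in the bound, which is consistent with the statement.
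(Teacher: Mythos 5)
Your proposal is correct and follows essentially the same route as the paper: apply Proposition~\ref{prop:upper:bound} together with the lognormal moments from Example~\ref{ex:lognormal1}, then reduce to maximizing $\sigma^{2}$ over $\Lambda$. The only cosmetic difference is in the last step, where the paper substitutes $\alpha:=\exp(2z+\sigma^{2})$ and identifies the unique optimizer $(\alpha,\sigma^{2})=(\ell_{1}^{2},\ln(u_{2}/\ell_{1}^{2}+1))$, while you derive the same bound $\sigma^{2}\leq\ln(u_{2}/\ell_{1}^{2}+1)$ directly from the two constraints, which suffices since only an upper bound is claimed.
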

\begin{proof}
According to Proposition~\ref{prop:upper:bound} and Example~\ref{ex:lognormal1} 
\begin{align}
\lim_{n\to\infty} \sup_{\lambda\in\Lambda}  \I{p}{ \W^{(\lambda,V,n)}} &\leq \lim_{n\to\infty} \sup_{\lambda\in\Lambda} C( \W^{(\lambda,V,n)} ) = \sup_{\lambda\in\Lambda} \lim_{n\to\infty}   C( \W^{(\lambda,V,n)} ) \nonumber \\
&=\left\{ \begin{array}{rll}
\max\limits_{\sigma^{2},z}&&\frac{\sigma^{2}}{2\ln 2} \\
\text{s.t.} 	&& \ell_{1}\leq \exp\left( z+\frac{\sigma^{2}}{2} \right) \leq u_{1} \\
			&& \ell_{2}\leq \left( \exp(\sigma^{2})-1 \right) \exp\left( 2z + \sigma^{2}\right) \leq u_{2}\\
			&& \sigma^{2}\in\Rp, \ z\in\R.
\end{array} \right.  \label{eq:proof:prop:exp:design}
\end{align}
By introducing the variable $\alpha:=\exp(2z+\sigma^{2})$, the optimization problem \eqref{eq:proof:prop:exp:design} can be rewritten as
\begin{align*}
\left\{ \begin{array}{rll}
\max\limits_{\sigma^{2},\alpha}&&\frac{\sigma^{2}}{2\ln 2} \\
\text{s.t.} 	&& \ell_{1}\leq \sqrt{\alpha} \leq u_{1} \\
			&& \ell_{2}\leq \left( \exp(\sigma^{2})-1 \right)\alpha \leq u_{2}\\
			&& \sigma^{2}\in\Rp, \ \alpha\in\Rp.
\end{array} \right. 
\end{align*}
The monotonicity of $\left( \exp(\sigma^{2})-1 \right)\alpha$ with respect to $\sigma^{2}$ implies that the optimizers are uniquely given by
$\alpha=\ell_{1}^{2}$ and $\sigma^{2}=\ln\left( \tfrac{u_{2}}{\ell_{1}^{2}}+1\right)$, which completes the proof.
\end{proof}

Let us consider the case where the prior distribution is uniform. In this case the upper bound \eqref{eq:upper:bound:MI} is tight, by following the proofs of Theorem~\ref{thm:main} and Proposition~\ref{prop:exp:design}. 
Figure~\ref{fig:plotExperimentMeanVar} depicts for different alphabet sizes $n$ in (a) the empirical mean of the maximum expected information gain (blue line) for $1000$ experiments, which in general is difficult to compute in particular for higher dimensional examples than Example~\ref{ex:expDesign}. The red line represents the empirical mean of the suboptimal expected information gain, that is given by evaluating the mutual information for the optimal parameters for the asymptotic capacity, derived in Proposition~\ref{prop:exp:design} and as such is computationally much cheaper.
The empirical variance of the maximum expected information gain (blue line) as well as the empirical variance of the suboptimal expected information gain (red line) are depicted in (b). 
\begin{figure}[!htb]
\hspace{-5mm}
    \subfloat[empirical mean]{\begin{tikzpicture}

\begin{axis}[%
		height=5.5cm,
		width=8cm,
ytick={0.76,0.77,0.78,0.79,0.8},
grid=major,
style={font=\scriptsize},
xmin=0,
xmax=3000,
xlabel={alphabet size $n$},
ymin=0.76,
ymax=0.80,
ylabel={capacity \bracket{bits per channel use}},
legend style={at={(0.63,0.37)},anchor=north,legend cell align=left, font=\scriptsize} 
]

\addplot [color=blue,solid,line width=0.8pt]
  table[row sep=crcr]{%
100	0.762108399513965\\
150	0.771814632463903\\
200	0.777646581927556\\
250	0.779684053112364\\
300	0.782020177974366\\
350	0.78354927136745\\
400	0.784448256568421\\
450	0.785793819798844\\
500	0.786182497237696\\
550	0.786494063900476\\
600	0.787103409191192\\
650	0.787216452634007\\
700	0.78787551534967\\
750	0.788218141962932\\
800	0.78841811681213\\
850	0.788637655138135\\
900	0.788841272893948\\
950	0.788967926353195\\
1000	0.789233474744941\\
1050	0.789352745559591\\
1100	0.789550561572549\\
1150	0.789605448544222\\
1200	0.789756147675632\\
1250	0.789842893844141\\
1300	0.789827789682466\\
1350	0.79004737774738\\
1400	0.790208894287152\\
1450	0.790245581682686\\
1500	0.790371079040314\\
1550	0.790368065909647\\
1600	0.79045210619709\\
1650	0.790496358617014\\
1700	0.790612036036704\\
1750	0.790510564099034\\
1800	0.790677619107404\\
1850	0.790752716154192\\
1900	0.790770163268169\\
1950	0.790820205121219\\
2000	0.790845468972332\\
2050	0.790876543727239\\
2100	0.790985538972348\\
2150	0.790950562652985\\
2200	0.791033346074424\\
2250	0.790991035983941\\
2300	0.791028247383146\\
2350	0.791062642059069\\
2400	0.791070518979082\\
2450	0.791136153496023\\
2500	0.791179363261587\\
2550	0.791151332616846\\
2600	0.791248997367975\\
2650	0.791221312979441\\
2700	0.791259666086976\\
2750	0.791258415548123\\
2800	0.791332137281057\\
2850	0.791316955652315\\
2900	0.791401111116308\\
2950	0.791379939029826\\
3000	0.791373407873675\\
};
\addlegendentry{optimal empirical mean};


\addplot [color=red,solid,line width=0.8pt]
  table[row sep=crcr]{%
100	0.763278758441075\\
150	0.772889147435455\\
200	0.77665424631823\\
250	0.779927834511907\\
300	0.781968205216288\\
350	0.783131708088643\\
400	0.784414241488204\\
450	0.785458293769529\\
500	0.786171224493547\\
550	0.786441354000282\\
600	0.787026684941193\\
650	0.787524968583831\\
700	0.787818892480694\\
750	0.788182916876982\\
800	0.788806739667847\\
850	0.78864239797069\\
900	0.788918491999829\\
950	0.789003381089714\\
1000	0.789295232390277\\
1050	0.789428525391136\\
1100	0.789589346728993\\
1150	0.789694961113276\\
1200	0.789712838518156\\
1250	0.78998764038838\\
1300	0.790123816032406\\
1350	0.790002938748658\\
1400	0.790150556925986\\
1450	0.790225708935697\\
1500	0.79022716138396\\
1550	0.790339313153371\\
1600	0.790421487443388\\
1650	0.790565876270982\\
1700	0.790566681844511\\
1750	0.790593580499492\\
1800	0.790531919138263\\
1850	0.790656262896865\\
1900	0.790782052963652\\
1950	0.790827307677327\\
2000	0.790808042115689\\
2050	0.790912039401512\\
2100	0.790888598498345\\
2150	0.790936623321527\\
2200	0.79100446964323\\
2250	0.790946285303355\\
2300	0.790990313191123\\
2350	0.790982399706956\\
2400	0.79109878790104\\
2450	0.791188548892717\\
2500	0.791148566800311\\
2550	0.791209295166598\\
2600	0.791164639953928\\
2650	0.791260503930899\\
2700	0.791253519971377\\
2750	0.791287847613786\\
2800	0.791211058305156\\
2850	0.791376080952753\\
2900	0.791301850162839\\
2950	0.791390623937265\\
3000	0.791374141665185\\
};

\addlegendentry{suboptimal empirical mean};

\addplot [color=black,dashed,line width=1.0pt]
  table[row sep=crcr]{%
    0	0.792481250360578\\  
  50	0.792481250360578\\  
100	0.792481250360578\\
150	0.792481250360578\\
200	0.792481250360578\\
250	0.792481250360578\\
300	0.792481250360578\\
350	0.792481250360578\\
400	0.792481250360578\\
450	0.792481250360578\\
500	0.792481250360578\\
550	0.792481250360578\\
600	0.792481250360578\\
650	0.792481250360578\\
700	0.792481250360578\\
750	0.792481250360578\\
800	0.792481250360578\\
850	0.792481250360578\\
900	0.792481250360578\\
950	0.792481250360578\\
1000	0.792481250360578\\
1050	0.792481250360578\\
1100	0.792481250360578\\
1150	0.792481250360578\\
1200	0.792481250360578\\
1250	0.792481250360578\\
1300	0.792481250360578\\
1350	0.792481250360578\\
1400	0.792481250360578\\
1450	0.792481250360578\\
1500	0.792481250360578\\
1550	0.792481250360578\\
1600	0.792481250360578\\
1650	0.792481250360578\\
1700	0.792481250360578\\
1750	0.792481250360578\\
1800	0.792481250360578\\
1850	0.792481250360578\\
1900	0.792481250360578\\
1950	0.792481250360578\\
2000	0.792481250360578\\
2050	0.792481250360578\\
2100	0.792481250360578\\
2150	0.792481250360578\\
2200	0.792481250360578\\
2250	0.792481250360578\\
2300	0.792481250360578\\
2350	0.792481250360578\\
2400	0.792481250360578\\
2450	0.792481250360578\\
2500	0.792481250360578\\
2550	0.792481250360578\\
2600	0.792481250360578\\
2650	0.792481250360578\\
2700	0.792481250360578\\
2750	0.792481250360578\\
2800	0.792481250360578\\
2850	0.792481250360578\\
2900	0.792481250360578\\
2950	0.792481250360578\\
3000	0.792481250360578\\
};
\addlegendentry{asymptotic capacity};

\end{axis}
\end{tikzpicture} }
    \subfloat[empirical variance]{\begin{tikzpicture}

\begin{axis}[%
		height=5.5cm,
		width=8cm,
ytick={0,0.0001},
grid=major,
style={font=\scriptsize},
xmin=0,
xmax=3000,
xlabel={alphabet size $n$},
ymin=0,
ymax=0.0001,
ylabel={capacity \bracket{bits per channel use}},
legend style={at={(0.605,0.98)},anchor=north,legend cell align=left, font=\scriptsize} 
]

\addplot [color=blue,solid,line width=0.8pt]
  table[row sep=crcr]{%
100	0.000317793216139653\\
150	0.000168175567452929\\
200	0.000101539495285729\\
250	5.99960875851999e-05\\
300	4.58293799840098e-05\\
350	3.65784518208415e-05\\
400	2.69549516836194e-05\\
450	2.28126115491737e-05\\
500	1.82770324838348e-05\\
550	1.56593150738714e-05\\
600	1.27092106351447e-05\\
650	1.016569001095e-05\\
700	8.86802745346026e-06\\
750	8.02136680300144e-06\\
800	7.28792016103299e-06\\
850	6.23316672510669e-06\\
900	5.57150772736437e-06\\
950	5.1629659186051e-06\\
1000	4.7604404875792e-06\\
1050	4.04716837868226e-06\\
1100	3.616625543631e-06\\
1150	3.48369989608125e-06\\
1200	3.18788516570263e-06\\
1250	3.2318704631601e-06\\
1300	2.57840497999944e-06\\
1350	2.27072893279362e-06\\
1400	2.3832675083039e-06\\
1450	1.94986453757354e-06\\
1500	1.97561214013453e-06\\
1550	2.09275092334472e-06\\
1600	1.67447951556089e-06\\
1650	1.7182168114801e-06\\
1700	1.59007439940689e-06\\
1750	1.59693485152821e-06\\
1800	1.37041615043426e-06\\
1850	1.394093197714e-06\\
1900	1.31499857998536e-06\\
1950	1.2749275194468e-06\\
2000	1.21614390104988e-06\\
2050	1.18969375289165e-06\\
2100	1.04524854112778e-06\\
2150	1.10110968575575e-06\\
2200	9.81767190638464e-07\\
2250	1.02675989075304e-06\\
2300	9.90057668995938e-07\\
2350	8.28397544331108e-07\\
2400	8.54958860194327e-07\\
2450	8.49054677186301e-07\\
2500	7.86907103523839e-07\\
2550	6.40497322609225e-07\\
2600	7.10615118655014e-07\\
2650	6.67312039734126e-07\\
2700	6.97983653658938e-07\\
2750	6.94512656366777e-07\\
2800	7.21470649707585e-07\\
2850	5.66640874452845e-07\\
2900	5.4992797479453e-07\\
2950	5.53806915503786e-07\\
3000	5.76857881250798e-07\\
};
\addlegendentry{optimal empirical variance};


\addplot [color=red,solid,line width=0.8pt]
  table[row sep=crcr]{%
100	0.000321725290238406\\
150	0.000155112512513657\\
200	9.17608864306167e-05\\
250	5.86331084824972e-05\\
300	4.44357881802405e-05\\
350	3.7638732268086e-05\\
400	2.58969390537176e-05\\
450	2.10042644264265e-05\\
500	1.60810340672412e-05\\
550	1.4936939438533e-05\\
600	1.12810830436536e-05\\
650	1.06256940693169e-05\\
700	9.29591094008269e-06\\
750	7.94980430512337e-06\\
800	7.23594336679583e-06\\
850	6.45913560633164e-06\\
900	5.78810771892764e-06\\
950	4.72019768570221e-06\\
1000	4.67790074879953e-06\\
1050	4.42561837008831e-06\\
1100	3.4264781345016e-06\\
1150	3.11402136900574e-06\\
1200	3.30427288718311e-06\\
1250	2.86335256546995e-06\\
1300	2.42380442083409e-06\\
1350	2.23531583990633e-06\\
1400	2.3167338812716e-06\\
1450	1.92779249402507e-06\\
1500	1.92741702371941e-06\\
1550	1.89930573797309e-06\\
1600	1.82752449219621e-06\\
1650	1.73562930270357e-06\\
1700	1.68802630232244e-06\\
1750	1.51213441583524e-06\\
1800	1.44101951202787e-06\\
1850	1.47449267247165e-06\\
1900	1.35238646243058e-06\\
1950	1.16804168398919e-06\\
2000	1.14408918950748e-06\\
2050	1.03126298982788e-06\\
2100	1.07543396832381e-06\\
2150	1.00369848577431e-06\\
2200	1.02024618973382e-06\\
2250	8.83469520982508e-07\\
2300	8.87968841901716e-07\\
2350	8.8708691415405e-07\\
2400	8.44806830394193e-07\\
2450	7.887145851219e-07\\
2500	7.59971862754684e-07\\
2550	6.94918363608735e-07\\
2600	6.63420355136418e-07\\
2650	6.01644135333382e-07\\
2700	6.35659444287138e-07\\
2750	6.58823468454524e-07\\
2800	5.69838643505703e-07\\
2850	5.67173508629303e-07\\
2900	4.61038715784029e-07\\
2950	5.92688752727595e-07\\
3000	5.25375761100662e-07\\
};\addlegendentry{suboptimal empirical variance};

\end{axis}
\end{tikzpicture}}
    \caption{For different alphabet sizes $n$, we plot in (a) the empirical mean of the maximum expected information gain (blue line) $\tfrac{1}{N} \sum_{i=1}^N \sup_{\lambda\in\Lambda}  I(p,\W_i^{(\lambda,V,n)})$, where $(\W_i^{(\lambda,V,n)})_{i=1}^N$ are independent channels and $N=1000$. The red line represents the empirical mean of the suboptimal expected information gain, that is given by $\tfrac{1}{N} \sum_{i=1}^N I(p, \W_i^{(\hat{\lambda},V,n)})$, where $\hat{\lambda}$ are the optimal parameters for the asymptotic capacity, derived in Proposition~\ref{prop:exp:design}.
(b) depicts the empirical variance of the maximum expected information gain (blue line) as well as the empirical variance of the suboptimal expected information gain (red line).   
    }
    \label{fig:plotExperimentMeanVar}
\end{figure}
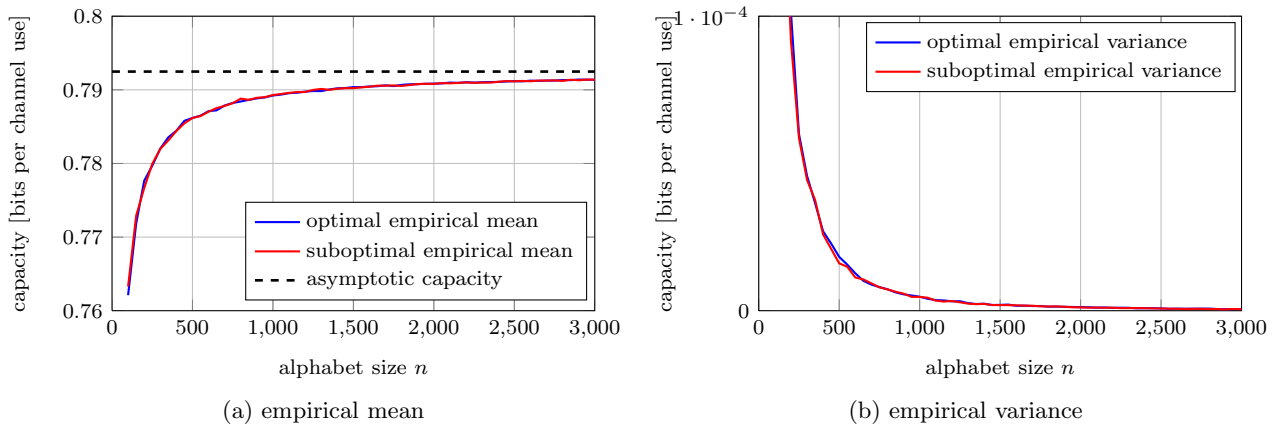

\section{Convergence Rate} \label{sec:finiteSize}
This section addresses how fast the capacity of a channel with the form introduced in Section~\ref{sec:main} converges to the asymptotic value predicted by Theorem~\ref{thm:main}. In addition, we consider a different model for the channel construction compared to Section~\ref{sec:main}.
Let $(V_{x,y})_{x\in [n],y\in[m]}$ be a sequence of independent nonnegative random variables such that the following assumption holds, where we use the notation $(x)_+^q = (0 \vee x )^q$.
\begin{myass}\label{ass:convergence:rate}
There exist positive numbers $K$ and $T$ such that for all $n,m\in\mathbb{N}$
\begin{enumerate}[(i)]
\item $\max \left\{\frac{1}{n} \sum\limits_{x\in[n]}\E{V_{x,y}^{2}}, \frac{1}{n} \sum\limits_{x\in[n]}\E{(V_{x,y}\log V_{x,y})^{2}} \right\} \leq K$ \ for all $y \in[m]$\label{ass:rate:i} \\
$\max \left\{\frac{1}{m} \sum\limits_{y\in[m]}\E{V_{x,y}^{2}}, \frac{1}{m} \sum\limits_{y\in[m]}\E{(V_{x,y}\log V_{x,y})^{2}} \right\} \leq K$ \ for all $x \in[n]$
\item $\max\left\{ \frac{1}{n} \sum\limits_{x\in[n]}\E{(V_{x,y})_{+}^{q}}, \frac{1}{n} \sum\limits_{x\in[n]}\E{(V_{x,y}\log V_{x,y})_{+}^{q}}\right\} \leq \frac{q!}{2}KT^{q-2}$ for all $q\geq 3, \, y \in[m]$\label{ass:rate:ii} \\
$\max\left\{ \frac{1}{m} \sum\limits_{y\in[m]}\E{(V_{x,y})_{+}^{q}}, \frac{1}{m} \sum\limits_{y\in[m]}\E{(V_{x,y}\log V_{x,y})_{+}^{q}}\right\} \leq \frac{q!}{2}KT^{q-2}$ for all $q\geq 3, \, x \in[n]$ 
\item $\frac{1}{m}\sum_{y\in[m]}\E{V_{x,y}}=\frac{1}{n}\sum_{x\in[n]}\E{V_{x,y}}$ for all $x\in [n]$ on the left hand side and all $y\in [m]$ on the right hand side \label{ass:rate:iii}
\item $V_{x,y}>0$ almost surely for all $x\in [n]$ and $y\in [m]$. \label{ass:rate:iv}
\end{enumerate}
\end{myass}
We denote by $\mu_{1,n}:=\tfrac{1}{m}\sum_{y\in[m]}\E{V_{x,y}}=\tfrac{1}{n}\sum_{x\in[n]}\E{V_{x,y}}$ and $\mu_{2,n}:=\tfrac{1}{m}\sum_{y\in[m]}\E{V_{x,y}\log V_{x,y}}$ and define the channel transition matrix $\W=(\W_{x,y})_{x\in [n], y\in[m]}$ by $\W_{x,y}=V_{x,y} / \sum_{y \in [m]}V_{x,y}$. Let $f : \Rp\times\Rp \to \Rp$ denote the function
\begin{equation} \label{eq:function:f}
f(t,n):= \exp \left(- \frac{nt^{2}}{ 2(K+Tt)} \right).
\end{equation}

The main difference between the random channel model considered in Section~\ref{sec:main} and the one in this section is that here we assume that the random variables $(V_{x,y})_{x\in[n],y \in [m]}$ are independent and such that Assumption~\ref{ass:convergence:rate} holds, whereas in Section~\ref{sec:main} we assume that the random variables $(V_{x,y})_{x\in[n],y \in [m]}$ are independent and \emph{identically distributed} and satisfy Assumption~\ref{ass:finite:moments}. Clearly Assumption~\ref{ass:convergence:rate} is stronger than Assumption~\ref{ass:finite:moments}, which allows us to state a rate of convergence. Note that Assumptions~\ref{ass:convergence:rate}\eqref{ass:rate:i} and \eqref{ass:rate:ii} are necessitated by the use of Bernstein's inequality (see Lemma~\ref{lem:bernstein}), Assumption~\ref{ass:convergence:rate}\eqref{ass:rate:iii} relates to the link with weakly symmetric channels and Assumption~\ref{ass:convergence:rate}\eqref{ass:rate:iv} significantly strengthens the positive mean assumption (Assumption~\ref{ass:finite:moments}).


\begin{mythm}[Rate of convergence] \label{thm:finiteSize}
Under Assumption~\ref{ass:convergence:rate}, the capacity of the DMC defined above satisfies for any $t\in\R_{>0}$
\begin{align*}
\Prob{\left|C(\Wn)-\left(\frac{\mu_{2,n}}{\mu_{1,n}}-\log \mu_{1,n} \right)\right|\geq t} & \leq \left( 2f(\alpha_{t/2},m)+f(\tfrac{t}{2L},m) \right) \vee \\
& \hspace{-13mm} \left( 2f(\alpha_{t/4},m)+f(\tfrac{t}{4L},m)+ f(\beta_{t/(2L)},n) + f(\beta_{t/(2L)},m) \right),
\end{align*}
with
\begin{align*}
\alpha_t &= \left \lbrace \begin{array}{ll}  
\tfrac{t \mu_{1,n}^2}{\mu_{1,n}(1+t)+\mu_{2,n}} & \textnormal{if } \mu_{1,n} + \mu_{2,n} \geq 0\\
\tfrac{t \mu_{1,n}^2}{\mu_{1,n}(1-t)+\mu_{2,n}} & \textnormal{otherwise}
\end{array} \right., \quad  \beta_t = \frac{t \mu_{1,n}}{2+t}, \quad  L=\frac{1}{a \ln 2} \quad  \textnormal{and} \\  
a&=\min\left\{ \frac{1}{m}\sum_{y=1}^{m}V_{x,y}, \mu_{1,n}, \sum_{k\in[n]}\frac{V_{k,y}}{\sum_{y\in[m]}V_{k,y}}, \frac{n}{m} \right\}.
\end{align*}
\end{mythm}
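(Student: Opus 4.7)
The plan is to re-apply the sandwich $C_{\LB}^{(p\sim\mathcal{U})}(\W) \leq C(\W) \leq C_{\UB}^{(\lambda=0)}(\W)$ used in the proof of Theorem~\ref{thm:main} and to quantify the rate at which each of the two deterministic bounds concentrates around $\tfrac{\mu_{2,n}}{\mu_{1,n}} - \log\mu_{1,n}$, taking Bernstein's inequality (Lemma~\ref{lem:bernstein}) as the only probabilistic input. Assumption~\ref{ass:convergence:rate}\eqref{ass:rate:i}--\eqref{ass:rate:ii} supply exactly the moment hypotheses that Bernstein requires for the row sums $\tfrac{1}{m}\sum_{y\in[m]}V_{x,y}$ and $\tfrac{1}{m}\sum_{y\in[m]}V_{x,y}\log V_{x,y}$, which concentrate at rate $f(\cdot,m)$, and for the column sum $\tfrac{1}{n}\sum_{x\in[n]}V_{x,y}$, which concentrates at rate $f(\cdot,n)$; part~\eqref{ass:rate:iii} equates the row and column means to the common value $\mu_{1,n}$, and \eqref{ass:rate:iv} makes the normalisation well-defined. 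The constants $\alpha_t$, $\beta_t$ and the Lipschitz constant $L=1/(a\ln 2)$ in the statement simply record how these Bernstein deviations propagate through the algebraic expressions for $C_{\UB}$ and $C_{\LB}$.

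For the upper bound, writing $S_x:=\sum_{y\in[m]} V_{x,y}$ and $T_x:=\sum_{y\in[m]} V_{x,y}\log V_{x,y}$, one has
\[
\sum_{y\in[m]}\W_{x,y}\log\W_{x,y} + \log m \;=\; \frac{T_x/m}{S_x/m} - \log\frac{S_x}{m}.
\]
A routine linearisation of $(T,S)\mapsto T/S$ around $(\mu_{2,n},\mu_{1,n})$ shows that whenever both $|S_x/m-\mu_{1,n}|\leq \alpha_{t/2}$ and $|T_x/m-\mu_{2,n}|\leq\alpha_{t/2}$ one has $|T_x/S_x - \mu_{2,n}/\mu_{1,n}|\leq t/2$; solving the resulting linear inequality recovers exactly the definition of $\alpha_t$, the case split tracking the sign of $\mu_{1,n}+\mu_{2,n}$ that appears in the numerator of the linearisation. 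On this event, $S_x/m$ lies in a range on which $\log$ is $L$-Lipschitz (by the definition of $a$), so a further Bernstein event $|S_x/m-\mu_{1,n}|\leq t/(2L)$ controls $|\log(S_x/m)-\log\mu_{1,n}|\leq t/2$. A union bound over these three Bernstein events, each with rate $f(\cdot,m)$, yields the first branch $2f(\alpha_{t/2},m)+f(t/(2L),m)$ of the claimed bound.

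For the lower bound, a short manipulation starting from the identity in Lemma~\ref{lem:LBdav} gives
\[
C_{\LB}^{(p\sim\mathcal{U})}(\W) \;=\; \tfrac{1}{n}\!\sum_{x\in[n]}\!\Bigl(\tfrac{T_x}{S_x}-\log\tfrac{S_x}{m}\Bigr) \;-\; \tfrac{1}{n}\!\sum_{x,y}\!\W_{x,y}\log\Bigl(\tfrac{m}{n}\!\sum_{k\in[n]}\W_{k,y}\Bigr),
\]
in which the first (``averaged row'') piece is controlled exactly as for the upper bound but with $t/4$ replacing $t/2$ to leave headroom for the new term, contributing $2f(\alpha_{t/4},m)+f(t/(4L),m)$. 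The genuinely new term involves the column sum $A_y:=\sum_k V_{k,y}/S_k$, which couples all $n$ rows through the self-normalising denominators $S_k$; the idea is to decouple by first controlling each $S_k/m$ via Bernstein in the $m$-direction and $\tfrac{1}{n}\sum_k V_{k,y}$ via Bernstein in the $n$-direction, so that $A_y\approx (m\mu_{1,n})^{-1}\!\sum_k V_{k,y}\approx n/m$. A short calculation then shows that simultaneous Bernstein deviations of size $\beta_{t/(2L)}$ on both quantities force $|\tfrac{m}{n}A_y-1|\leq t/(2L)$ -- this is exactly where the formula $\beta_t=t\mu_{1,n}/(2+t)$ drops out -- and one final application of the $\log$-Lipschitz bound translates this into $|\log(\tfrac{m}{n}A_y)|\leq t/2$, producing the remaining two contributions $f(\beta_{t/(2L)},n)+f(\beta_{t/(2L)},m)$ of the second branch.

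Combining the two estimates through $C_{\LB}^{(p\sim\mathcal{U})}(\W)\leq C(\W)\leq C_{\UB}^{(\lambda=0)}(\W)$ reduces the two-sided deviation of $C(\W)$ to the worse of the upper-branch and lower-branch bounds, which is precisely the $\vee$ in the theorem. The main obstacle is the column-sum step: because $A_y$ breaks the row-wise independence that makes the upper bound essentially one-dimensional, one is forced to control simultaneously the fluctuations in the $m$-direction (the denominators $S_k$) and in the $n$-direction (the column sum $\sum_k V_{k,y}$), and this is exactly why the final bound has to carry both an $f(\cdot,n)$ and an $f(\cdot,m)$ term; a secondary but tedious point is keeping the various linearisation errors small enough that the precise constants $\alpha_t$ and $\beta_t$ of the statement are recovered rather than weaker surrogates.
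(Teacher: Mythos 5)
Your proposal is correct and follows essentially the same route as the paper: sandwiching $C(\W)$ between $C_{\UB}^{(\lambda=0)}$ and $C_{\LB}^{(p\sim\mathcal{U})}$, applying Bernstein's inequality to the row sums $\tfrac1m\sum_y V_{x,y}$ and $\tfrac1m\sum_y V_{x,y}\log V_{x,y}$ and the column sums $\tfrac1n\sum_x V_{k,y}$, and propagating the deviations through the ratio (yielding $\alpha_t$, $\beta_t$) and the $L$-Lipschitz logarithm exactly as in Propositions~\ref{prop:UBfinite} and \ref{prop:LBfinite} and Lemmas~\ref{lem:basic}--\ref{lem:LBiii}. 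The decomposition of $C_{\LB}$ and the treatment of the coupled column-sum term via separate $n$- and $m$-direction concentration match the paper's argument.
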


\begin{myremark}[Exponential convergence]
Note that $\mu_{1,n}$ is strictly larger than zero for any $n\in\mathbb{N}$ by Assumption~\ref{ass:convergence:rate}\eqref{ass:rate:iv}. Moreover, Assumption~\ref{ass:convergence:rate}\eqref{ass:rate:i} implies that there exists a constant $S$ such that $\mu_{2,n}\leq S$ for any $n\in\mathbb{N}$. Therefore, the parameters $\alpha_t$ and $\beta_t$ in Theorem~\ref{thm:finiteSize} can be bounded from below independently of $n$. Assumption~\ref{ass:convergence:rate}\eqref{ass:rate:iv} further ensures that the parameter $a$ can be bounded from below independently of $n$ and as such the parameter $L$ is bounded from above and below independently of $n$. Hence, Theorem~\ref{thm:finiteSize} clearly implies exponential convergence in $n$.
\end{myremark}

Assume that as $n\rightarrow \infty$,  $\mu_{1,n}$ and $\mu_{2,n}$ converge and denote the limits by $\bar{\mu}_{1}:=\lim_{n\to\infty}\mu_{1,n}$ and $\bar{\mu}_{2}:=\lim_{n\to\infty}\mu_{2,n}$
\begin{mycor}[Asymptotic capacity] \label{cor:asymptoticCapacity}
Under Assumptions~\ref{ass:input:output:size} and \ref{ass:convergence:rate}, for $n\to \infty$, the capacity $C(\Wn)$ of the DMC defined above converges to $\tfrac{\bar{\mu}_{2}}{\bar{\mu}_{1}} - \log \bar{\mu}_{1}$ in probability.
\end{mycor}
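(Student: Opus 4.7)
The plan is to reduce convergence in probability for $C(\Wn)$ to the quantitative tail bound of Theorem~\ref{thm:finiteSize} via a triangle inequality that separates the error into a deterministic (sequence-of-means) contribution and a random (concentration) contribution. Setting $\phi(u,v) := v/u - \log u$, I write
\begin{equation*}
\left| C(\Wn) - \phi(\bar{\mu}_1,\bar{\mu}_2) \right| \leq \left| C(\Wn) - \phi(\mu_{1,n},\mu_{2,n}) \right| + \left| \phi(\mu_{1,n},\mu_{2,n}) - \phi(\bar{\mu}_1,\bar{\mu}_2) \right|.
\end{equation*}
The second summand is deterministic, so it suffices to drive both pieces to zero separately.

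For the deterministic term, Assumption~\ref{ass:convergence:rate}(iv) combined with the standing hypothesis guarantees $\bar{\mu}_1 > 0$, on a neighbourhood of which $\phi$ is continuous; together with $\mu_{1,n} \to \bar{\mu}_1$ and $\mu_{2,n} \to \bar{\mu}_2$ this yields $\phi(\mu_{1,n},\mu_{2,n}) \to \phi(\bar{\mu}_1,\bar{\mu}_2)$ without any probability arguments.

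For the random term, fix $\varepsilon > 0$ and apply Theorem~\ref{thm:finiteSize} with $t = \varepsilon$. By Assumption~\ref{ass:input:output:size}, $m = \lceil \gamma n \rceil$, so $m \to \infty$ in lockstep with $n$. Every summand of the upper bound has the form $f(s,k) = \exp\!\bigl(-k s^{2}/(2(K+Ts))\bigr)$ with $k \in \{n,m\}$ and $s$ drawn from the finite set $\{\alpha_{\varepsilon/2}, \alpha_{\varepsilon/4}, \varepsilon/(2L), \varepsilon/(4L), \beta_{\varepsilon/(2L)}\}$. As pointed out in the remark following Theorem~\ref{thm:finiteSize}, Assumption~\ref{ass:convergence:rate}(i),(iv) together with $\mu_{1,n}\to\bar{\mu}_1>0$ give positive, $n$-uniform lower bounds for $\alpha_t$ and $\beta_t$ and an $n$-uniform upper bound for $L$; consequently each argument $s$ is bounded away from $0$ uniformly in $n$ and every $f(s,\cdot)$ term decays to $0$ exponentially, completing step (b) and hence the corollary.

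The main technical delicacy is that $a$, and hence $L$, is a random functional of the $V_{x,y}$, not a deterministic constant, so the claim that $L$ is bounded above uniformly in $n$ really has to hold on a high-probability event. This is handled by combining Assumption~\ref{ass:convergence:rate}(i),(iv) with Bernstein-type concentration (Lemma~\ref{lem:bernstein}) to show that the empirical quantities $\tfrac{1}{m}\sum_y V_{x,y}$ and $\sum_{k\in[n]} V_{k,y}/\sum_{y\in[m]}V_{k,y}$ concentrate around values of order $\mu_{1,n}$, and that $n/m\to 1/\gamma>0$; the small exceptional event on which $a$ is atypically close to $0$ is absorbed into the overall probability bound via a union bound, preserving the exponential decay.
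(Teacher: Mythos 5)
Your argument is correct and takes essentially the same route as the paper, whose entire proof of this corollary is the one-liner ``follows directly from Theorem~\ref{thm:finiteSize}'': your triangle-inequality split into a deterministic term (continuity of $(u,v)\mapsto v/u-\log u$ at $\bar{\mu}_{1}>0$ plus $\mu_{i,n}\to\bar{\mu}_{i}$) and a concentration term (the exponential tail bound of the theorem, combined with the $n$-uniform lower bounds on $\alpha_t$, $\beta_t$ and the bound on $L$ noted in the remark following the theorem) is exactly the expansion that one-liner presupposes. Your closing observation that $a$, and hence $L$, is a random functional of the $V_{x,y}$ and therefore must be controlled on a high-probability event is a legitimate subtlety the paper glosses over, and the Bernstein-plus-union-bound fix you sketch is the appropriate way to close it.
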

\begin{proof}
Follows directly from Theorem~\ref{thm:finiteSize}.
\end{proof}
Since the exponential concentration provided in Theorem~\ref{thm:finiteSize} is summable, a direct application of the Borel-Cantelli Lemma \cite[Theorem~2.3.1]{durrett_book} allows us to improve Corollary~\ref{cor:asymptoticCapacity} to almost sure convergence.

\subsection{Proof of Theorem~\ref{thm:finiteSize}} \label{sec:proof:rate}

The structure of the proof is such that we prove separately convergence rates for the lower and upper bounds of Section~\ref{sec:main} (Propositions~\ref{prop:UBfinite} and \ref{prop:LBfinite}) respectively. The claim follows since the capacity is forced to be between the upper and lower bounds, hence converges at the worst among the two rates.

\begin{myprop} \label{prop:UBfinite}
A random channel $\W$ as introduced in this section with $C^{(\lambda=0)}_{\UB}(\W)$ given in \eqref{eq:defdUB} satisfies
\begin{equation*}
\Prob{\left|C^{(\lambda=0)}_{\UB}(\W)-\left(\frac{\mu_{2,n}}{\mu_{1,n}}-\log \mu_{1,n} \right)\right|\geq t}\leq  2f(\alpha_{t/2},n)+f(\tfrac{t}{2L},n) 
\end{equation*}
with 
\begin{equation*}
\alpha_t = \left \lbrace \begin{array}{ll}  
\tfrac{t \mu_{1,n}^2}{\mu_{1,n}(1+t)+\mu_{2,n}} & \textnormal{if } \mu_{1,n} + \mu_{2,n} \geq 0\\
\tfrac{t \mu_{1,n}^2}{\mu_{1,n}(1-t)+\mu_{2,n}} & \textnormal{otherwise}
\end{array} \right. \quad \textnormal{where} \quad L=\tfrac{1}{a \ln 2} \ \textnormal{and } a=\min\left\{ \tfrac{1}{m}\sum_{y=1}^{m}V_{x,y}, \mu_{1,n} \right\}.
\end{equation*}
\end{myprop}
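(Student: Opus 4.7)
The plan is to first reformulate the upper bound in terms of sample averages. Writing $A_x := \tfrac{1}{m}\sum_{y=1}^m V_{x,y}\log V_{x,y}$ and $B_x := \tfrac{1}{m}\sum_{y=1}^m V_{x,y}$, a direct algebraic manipulation gives $\sum_{y}\W_{x,y}\log \W_{x,y}+\log m = \tfrac{A_x}{B_x}-\log B_x$, so from \eqref{eq:defdUB} one has $C_{\UB}^{(\lambda=0)}(\W)=\max_{x\in[n]}\bigl(\tfrac{A_x}{B_x}-\log B_x\bigr)$. By Assumption~\ref{ass:convergence:rate}\eqref{ass:rate:iii} we have $\E{A_x}=\mu_{2,n}$ and $\E{B_x}=\mu_{1,n}$, so the target value $\tfrac{\mu_{2,n}}{\mu_{1,n}}-\log \mu_{1,n}$ is simply the deterministic evaluation of the above expression at the means, and the proof reduces to quantifying how fast the random pair $(A_x, B_x)$ concentrates around $(\mu_{2,n},\mu_{1,n})$.

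I would then invoke Bernstein's inequality (Lemma~\ref{lem:bernstein}), whose moment hypotheses are exactly Assumption~\ref{ass:convergence:rate}\eqref{ass:rate:i}--\eqref{ass:rate:ii}, to obtain for every $x\in[n]$ and every $s>0$ the tail bounds $\Prob{|A_x-\mu_{2,n}|\geq s}\leq 2f(s,m)$ and $\Prob{|B_x-\mu_{1,n}|\geq s}\leq 2f(s,m)$. By the triangle inequality, the deviation event $\bigl|\tfrac{A_x}{B_x}-\log B_x - \bigl(\tfrac{\mu_{2,n}}{\mu_{1,n}}-\log\mu_{1,n}\bigr)\bigr|\geq t$ forces at least one of the two sub-events $\bigl|\tfrac{A_x}{B_x}-\tfrac{\mu_{2,n}}{\mu_{1,n}}\bigr|\geq t/2$ or $|\log B_x-\log\mu_{1,n}|\geq t/2$ to hold, so it suffices to handle each sub-event separately using the Bernstein tails above.

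For the ratio sub-event I would use the identity $\tfrac{A_x}{B_x}-\tfrac{\mu_{2,n}}{\mu_{1,n}}=\tfrac{\mu_{1,n}(A_x-\mu_{2,n})-\mu_{2,n}(B_x-\mu_{1,n})}{\mu_{1,n}B_x}$ and invert it to find the largest threshold $\epsilon$ such that $\max(|A_x-\mu_{2,n}|,|B_x-\mu_{1,n}|)\leq\epsilon$ implies $\bigl|\tfrac{A_x}{B_x}-\tfrac{\mu_{2,n}}{\mu_{1,n}}\bigr|<t/2$; carefully tracking the signs in the numerator and the lower bound on $B_x$ in the denominator under the two regimes $\mu_{1,n}+\mu_{2,n}\geq 0$ and $\mu_{1,n}+\mu_{2,n}<0$ is exactly what produces the piecewise $\alpha_{t/2}$ in the statement. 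For the logarithmic sub-event, the mean value theorem applied to $\log$ on the interval between $B_x$ and $\mu_{1,n}$ gives $|\log B_x-\log\mu_{1,n}|\leq L|B_x-\mu_{1,n}|$ with $L=1/(a\ln 2)$, where $a=\min\{\tfrac{1}{m}\sum_{y}V_{x,y},\mu_{1,n}\}$ is the worst-case lower bound on the smaller endpoint (strict positivity of which is ensured by Assumption~\ref{ass:convergence:rate}\eqref{ass:rate:iv}); hence the event $|\log B_x-\log\mu_{1,n}|\geq t/2$ is contained in $|B_x-\mu_{1,n}|\geq t/(2L)$. A union bound across the two sub-events then gives the per-$x$ concentration $2f(\alpha_{t/2},m)+f(t/(2L),m)$.

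The final step is to upgrade this per-$x$ bound to a tail bound for $\max_{x\in[n]}$: the upper tail $\Prob{C_{\UB}^{(\lambda=0)}(\W)-\bigl(\tfrac{\mu_{2,n}}{\mu_{1,n}}-\log\mu_{1,n}\bigr)\geq t}$ is controlled by a union bound over $x\in[n]$, while the lower tail $\Prob{C_{\UB}^{(\lambda=0)}(\W)-\bigl(\tfrac{\mu_{2,n}}{\mu_{1,n}}-\log\mu_{1,n}\bigr)\leq -t}$ does not need such a factor since it requires the bad event to hold simultaneously for every $x$, so it suffices to apply the per-$x$ estimate for one chosen index. The main technical obstacle I foresee is the explicit piecewise derivation of $\alpha_t$: extracting the denominator $\mu_{1,n}(1\pm t)+\mu_{2,n}$ requires carefully identifying the worst-case signs of $A_x-\mu_{2,n}$ and $B_x-\mu_{1,n}$ while simultaneously keeping $\mu_{1,n}B_x$ bounded away from zero on the complement event. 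A closely related subtlety is that the log-Lipschitz constant $L$ depends on the random quantity $B_x$; defining $a$ as a minimum over the random average and the mean $\mu_{1,n}$ is precisely what makes the MVT bound valid on whichever side of $\mu_{1,n}$ the realization of $B_x$ lies.
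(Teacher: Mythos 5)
Your proposal follows essentially the same route as the paper's proof: rewrite the $x$-th row contribution as $A_x/B_x-\log B_x$, apply Bernstein's inequality (Lemma~\ref{lem:bernstein}, under Assumption~\ref{ass:convergence:rate}\eqref{ass:rate:i}--\eqref{ass:rate:ii}) to $A_x$ and $B_x$, convert the two tails into a tail for the ratio (the paper's Lemma~\ref{lem:division}, which is exactly where the piecewise $\alpha_t$ comes from) and for the logarithm (Lemma~\ref{lem:lipschitz}, giving $L=1/(a\ln 2)$), and combine via the triangle-inequality splitting of Lemma~\ref{lem:basic}. Two points of divergence deserve comment. First, the maximum over $x$: the paper replaces $\max_{x\in[n]}$ by the maximizing index $\overline{x}$ in \eqref{eq:maxxx} and applies the per-row bound at that index --- a delicate step, since $\overline{x}$ is random --- and thereby obtains a bound with no explicit factor of $n$. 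Your union bound over $x\in[n]$ for the upper tail is the more careful treatment, but it yields $n\bigl(2f(\alpha_{t/2},m)+f(\tfrac{t}{2L},m)\bigr)$ for that tail, i.e.\ an extra factor of $n$ relative to the stated inequality; your write-up does not reconcile this with the claimed bound, so as written the proposal establishes a (still exponentially decaying) weaker estimate rather than the proposition verbatim. Second, a small internal inconsistency in constants: you state the individual Bernstein tails as $2f(s,m)$ but then quote the combined per-row bound $2f(\alpha_{t/2},m)+f(\tfrac{t}{2L},m)$, which matches the one-sided constants $f(s,m)$ used in the paper's \eqref{eq:stepi}--\eqref{eq:stepii}; with your factors of $2$ the combination would be $4f(\alpha_{t/2},m)+2f(\tfrac{t}{2L},m)$. (The appearance of $f(\cdot,n)$ rather than $f(\cdot,m)$ in the proposition statement is an inconsistency of the paper itself --- the row averages run over the $m$ output symbols, and both your derivation and the paper's proof produce $f(\cdot,m)$, consistent with Theorem~\ref{thm:finiteSize}.)
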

\begin{proof}
See Appendix~\ref{app:proof:rate}.
\end{proof}

\begin{myprop} \label{prop:LBfinite}
A random channel $\W$ of the form introduced in this section with $C_{\LB}^{(p \sim \mathcal{U})}(\W)$ given in \eqref{eq:dav} satisfies
\begin{align*}
&\Prob{\left|C^{(p\sim\mathcal{U})}_{\LB}-\left(\frac{\mu_{2,n}}{\mu_{1,n}}-\log \mu_{1,n} \right)\right|\geq t} \leq \\
 &\hspace{30mm}2f(\alpha_{t/4},m)+f(\tfrac{t}{4L},m)+f(\beta_{t/(2L)},n)+f(\beta_{t/(2L)},m)
\end{align*}
with 
\begin{align*}
\alpha_t &= \left \lbrace \begin{array}{ll}  
\tfrac{t \mu_{1,n}^2}{\mu_{1,n}(1+t)+\mu_{2,n}} & \textnormal{if } \mu_{1,n} + \mu_{2,n} \geq 0\\
\tfrac{t \mu_{1,n}^2}{\mu_{1,n}(1-t)+\mu_{2,n}} & \textnormal{otherwise}
\end{array} \right., \quad \beta_t = \frac{t \mu_{1,n}}{2+t}, \quad L=\tfrac{1}{a \ln 2} \quad \textnormal{and}\\
a&=\min\left\{ \sum_{k\in[n]}\frac{V_{k,y}}{\sum_{y\in[m]}V_{k,y}}, \frac{n}{m} \right\}.
\end{align*}
\end{myprop}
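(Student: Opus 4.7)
My plan mirrors Proposition~\ref{prop:UBfinite} by decomposing $C_{\LB}^{(p\sim\mathcal{U})}(\W)-(\tfrac{\mu_{2,n}}{\mu_{1,n}}-\log\mu_{1,n})$ into two pieces. Using $\sum_y \W_{x,y}=1$ and $\sum_y q_y=1$ (with $q_y:=\tfrac{1}{n}\sum_x \W_{x,y}$), the formula from Lemma~\ref{lem:LBdav} rearranges to
\begin{equation*}
C_{\LB}^{(p\sim\mathcal{U})}(\W)\ =\ \underbrace{\log m + \tfrac{1}{n}\textstyle\sum_{x,y}\W_{x,y}\log\W_{x,y}}_{=:A}\ -\ \underbrace{\textstyle\sum_{y}q_y \log\tfrac{m\sum_k\W_{k,y}}{n}}_{=:B}.
\end{equation*}
The target value $\tfrac{\mu_{2,n}}{\mu_{1,n}}-\log\mu_{1,n}$ is the asymptotic value of $A$, while $B\to 0$ (as in the proof of Lemma~\ref{lem:LBdav}), so a union bound reduces matters to separately controlling $\Prob{|A-\tfrac{\mu_{2,n}}{\mu_{1,n}}+\log\mu_{1,n}|\ge t/2}$ and $\Prob{|B|\ge t/2}$.

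For $A$, sandwich the row-average between $\min_x$ and $\max_x$ of $\sum_y\W_{x,y}\log\W_{x,y}$. Both extrema admit the same per-row Bernstein analysis as in the proof of Proposition~\ref{prop:UBfinite}, since that argument invokes Assumption~\ref{ass:convergence:rate}\eqref{ass:rate:i}--\eqref{ass:rate:ii} only row-by-row. Applying Proposition~\ref{prop:UBfinite} with $t\mapsto t/2$ then produces $2f(\alpha_{t/4},m)+f(\tfrac{t}{4L},m)$ for this half. For $B$, the local Lipschitz estimate $|\log u-\log v|\le|u-v|/(\min\{u,v\}\ln 2)$ with $u=\tfrac{m}{n}\sum_k\W_{k,y}$ and $v=1$, combined with the definitions of $a$ and $L$ in the proposition, yields $|B|\le L\cdot\max_{y\in[m]}|\sum_k\W_{k,y}-n/m|$. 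Since $\W_{k,y}=V_{k,y}/S_k$ with $S_k:=\sum_{y'}V_{k,y'}$ is a ratio of dependent variables, Bernstein cannot be applied to $\W$ directly; I would route the analysis through the unnormalised matrix via the events
\begin{equation*}
E_1=\bigcap_{k\in[n]}\{|S_k-m\mu_{1,n}|\le m\beta_{t/(2L)}\},\quad E_2=\bigcap_{y\in[m]}\{|\textstyle\sum_k V_{k,y}-n\mu_{1,n}|\le n\beta_{t/(2L)}\}.
\end{equation*}
By Bernstein (Lemma~\ref{lem:bernstein}), with means located by Assumption~\ref{ass:convergence:rate}\eqref{ass:rate:iii} and moments controlled by Assumption~\ref{ass:convergence:rate}\eqref{ass:rate:i}--\eqref{ass:rate:ii}, one gets $\Prob{E_1^c}\le f(\beta_{t/(2L)},m)$ and $\Prob{E_2^c}\le f(\beta_{t/(2L)},n)$; and on $E_1\cap E_2$ the pinch $S_k\in m\mu_{1,n}(1\pm\beta/\mu_{1,n})$ and $\sum_k V_{k,y}\in n\mu_{1,n}(1\pm\beta/\mu_{1,n})$ forces $\sum_k\W_{k,y}\in \tfrac{n}{m}\cdot\tfrac{1\pm\beta/\mu_{1,n}}{1\mp\beta/\mu_{1,n}}$. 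The specific form $\beta_t=t\mu_{1,n}/(2+t)$ is precisely the one that turns this into $|\sum_k\W_{k,y}-n/m|\le t/(2L)$, hence $|B|\le t/2$ on $E_1\cap E_2$.

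\textbf{Main obstacle.} The delicate step is the control of $B$: the column sums $\sum_k\W_{k,y}$ are sums of ratios of dependent random variables (the numerator $V_{k,y}$ appears in its own denominator $S_k$), so Bernstein cannot be invoked on $\W$ directly. Routing through the unnormalised matrix $(V_{x,y})$ via the two auxiliary events $E_1$ and $E_2$ circumvents this, but at the cost of invoking Assumption~\ref{ass:convergence:rate}\eqref{ass:rate:iii} to align both marginal means of $V$ with $\mu_{1,n}$, and of the specific calibration $\beta_t=t\mu_{1,n}/(2+t)$, which is dictated by how the deviations from $E_1$ and $E_2$ compound through the quotient $V_{k,y}/S_k$; the randomness of $L$ (through $a$) is absorbed by the worst-case choice in its definition.
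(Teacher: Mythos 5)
Your proposal is correct and follows essentially the same route as the paper: the same split of \eqref{eq:dav} into the row-entropy part (sandwiched between $\min_x$ and $\max_x$ and handled by the Proposition~\ref{prop:UBfinite} machinery at level $t/2$) and the output-marginal part (controlled by pushing Bernstein through the unnormalised row and column sums of $V$, exactly as in Lemmas~\ref{lem:LBi}--\ref{lem:LBiii}, with the same calibration $\beta_t$ and the Lipschitz constant $L$). Your ``good events'' $E_1,E_2$ are just a repackaging of the paper's $U_{y,\LB}/U_{y,\UB}$ sandwich, and the uniformity-over-indices step you elide there is elided in the paper's own proof in the same way.
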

\begin{proof}
See Appendix~\ref{app:proof:rate}.
\end{proof}

\begin{proof}[Proof of Theorem~\ref{thm:finiteSize}]
Theorem~\ref{thm:finiteSize} follows directly from Propositions~\ref{prop:UBfinite} and \ref{prop:LBfinite} as by definition $C^{(p\sim\mathcal{U})}_{\LB}(\W) \leq C(\W) \leq C^{(\lambda=0)}_{\UB}(\W)$ almost surely. 
\end{proof}

\section{Conclusion and Discussion} \label{sec:discussion}
In this article we studied the capacity of discrete memoryless channels whose channel transition matrix consists of entries that are nonnegative i.i.d.\ random variables $V$ before being normalized. It was shown that under some mild assumptions on the distribution of the random variables, the capacity of such a channel as the dimension goes to infinity converges to the \emph{asymptotic capacity} given by
$\tfrac{\mu_2}{\mu_1}-\log \mu_1$ almost surely and in $\Lp{2}$, where $\mu_{1}:=\E{V}$ and $\mu_{2}:=\E{V\log V}$. Interestingly, for some distributions, e.g., the uniform and exponential distribution, the asymptotic capacity is a constant. Furthermore, we have shown that the capacity of these random channels converges exponentially to its asymptotic value in probability.
Finally, we provided an interpretation of the asymptotic capacity as an upper bound to the maximum expected information gain in the context of Bayesian optimal experiment design.

For future work we aim to investigate if the asymptotic capacity of a random channel determined by Theorem~\ref{thm:main} has an operational meaning in other scenarios, e.g., in the setup of fading channels or in Bayesian estimation.
Furthermore, it would be interesting to study the variance of the capacity of such random channels and its decay rate.


\appendix

\section{Measurability of the capacity}\label{app:measurability}
We show that the the capacity $C(\Wn)$ of such a (random) DMC as well as the optimal input distribution are random variables.
\begin{mylem}[Measurability] \label{lem:measurability}
For a channel constructed as explained above the mapping $C:\mathcal{M}_{n,m}\to\Rp$ given by $C(\Wn)=\max_{p\in\Delta_{n}}\I{p}{\Wn}$ is measurable.
Furthermore, the (set-valued) mapping $p^{\star}:\mathcal{M}_{n,m}\rightrightarrows \Delta_{n}$, $p^{\star}(\Wn)=\arg \max_{p\in\Delta_{n}} \I{p}{\Wn}$, describing the optimal input distribution, is measurable.
\end{mylem}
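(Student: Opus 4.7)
The plan is to deduce both claims from Berge's maximum theorem, once joint continuity of the mutual information on the compact set $\Delta_n \times \mathcal{M}_{n,m}$ is established. Continuity of the value function then yields Borel measurability of $C$ for free, while a closed-graph argument combined with a standard result on compact-valued correspondences into Polish spaces gives measurability of $p^{\star}$.

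First I would verify that $(p,\W) \mapsto \I{p}{\W}$ is jointly continuous on $\Delta_n \times \mathcal{M}_{n,m}$. This is cleanest via the decomposition
\[
\I{p}{\W} \ = \ H(p\W) - \sum_{x \in [n]} p_x H(\W_{x, \cdot}),
\]
where $H$ denotes the Shannon entropy on the appropriate simplex. With the convention $0 \log 0 := 0$ the entropy is continuous on every finite simplex, the output distribution $p\W \in \Delta_m$ depends linearly (hence continuously) on $(p, \W)$, and each row $\W_{x, \cdot}$ depends continuously on $\W$. Composing, $I$ is continuous on the compact product $\Delta_n \times \mathcal{M}_{n,m}$.

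Next I would apply Berge's maximum theorem with the trivially continuous constant constraint correspondence $\W \mapsto \Delta_n$. This immediately delivers that the value function $C(\W) = \max_{p \in \Delta_n} \I{p}{\W}$ is continuous on $\mathcal{M}_{n,m}$, and in particular Borel measurable. Simultaneously, the theorem guarantees that the argmax correspondence $p^{\star}$ is upper hemicontinuous with nonempty compact values.

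To conclude measurability of $p^{\star}$ itself, I would pass through its graph: since both $I$ and $C$ are continuous, the set
\[
\mathrm{Gr}(p^{\star}) \ = \ \left\{ (\W, p) \in \mathcal{M}_{n,m} \times \Delta_n \;:\; \I{p}{\W} = C(\W) \right\}
\]
is closed, hence Borel. Because $\Delta_n$ is a compact metric (hence Polish) space and $p^{\star}$ has nonempty compact values, a standard result on correspondences with Borel graph into a Polish space (e.g.\ Theorem~18.10 in Aliprantis--Border, or Himmelberg's theorem) implies that $p^{\star}$ is Borel measurable in the sense of the introduction. The hard part, if any, is purely conceptual: identifying the right notion of measurability for set-valued maps and citing the correct selection-type theorem. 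The analytic content reduces to the continuity of $I$, which is elementary.
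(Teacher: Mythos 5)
Your proof is correct, but it takes a genuinely different route from the paper. The paper does not argue via continuity at all: it writes $C(\Wn)=\max_{p\in\R^n}\{I(p,\Wn)+\hat{\delta}_{\Delta_n}(p)\}$ with an indicator of the simplex, observes that $p\mapsto \I{p}{\Wn}$ is concave and continuous for (almost) every fixed $\Wn$ so that $I$ is a \emph{normal integrand} in the sense of Rockafellar--Wets, and then invokes their Theorem~14.37 to get measurability of both the value function and the argmax map in one stroke. Your approach instead establishes joint continuity of $(p,\W)\mapsto \I{p}{\W}$ on the compact product $\Delta_n\times\mathcal{M}_{n,m}$ (which is correct with the $0\log 0=0$ convention, via $\I{p}{\W}=H(p\W)-\sum_x p_x H(\W_{x,\cdot})$) and then applies Berge's maximum theorem plus a closed-graph/Himmelberg-type argument for the correspondence. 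What your route buys is a strictly stronger conclusion for the first claim --- $C$ is in fact \emph{continuous} on $\mathcal{M}_{n,m}$, not merely Borel measurable --- at the cost of having to separately justify why a compact-valued correspondence with closed graph into the Polish space $\Delta_n$ is measurable in the intended sense (lower inverses of open sets are Borel; this follows since the lower inverse of a compact set is a closed projection and open sets in $\Delta_n$ are countable unions of compacta, so the step is fine but should be spelled out). The paper's route is shorter as a citation but proves less about $C$; either is acceptable here since only measurability is used downstream.
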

\begin{proof}
Note that we have
\begin{equation*}
C(\Wn)=\max_{p\in\R^{n}}\{I(p,\Wn) + \hat{\delta}_{\Delta_{n}}(p)\}, \quad \textnormal{where} \quad \hat{\delta}_{\Delta_{n}}(p)=\left\{ \begin{array}{ll} 0, &\text{if }p\in\Delta_{n} \\ -\infty, &\text{otherwise.} \end{array} \right.
\end{equation*}
Since the mapping $p\mapsto \I{p}{\Wn}$ is concave and continuous for almost any $\Wn$, $I$ is a normal integrand \cite[Proposition~14.39]{ref:Rockafellar-97}.
Then, as shown in \cite[Example~14.32]{ref:Rockafellar-97}, $I(p,\Wn) + \hat{\delta}_{\Delta_{n}}(p)$ is a normal integrand and as such the measurability of the mappings $\Wn\mapsto C(\Wn)$ 
and $\Wn\mapsto p^{\star}(\Wn)$
follows by \cite[Theorem~14.37]{ref:Rockafellar-97}; see \cite[Definition~14.1]{ref:Rockafellar-97} for a definition of measurability of a set-valued mapping.
\end{proof}

\section{Additional examples}\label{app:additional:examples}

\begin{myex}[Uniform distribution] \label{ex:uniform}
Consider a DMC as defined above, where the elements $V_{x,y}$ are uniformly distributed with support $[0,A]$ for some $A> 0$. It can be seen by the homogeneity property in Remark~\ref{rmk:properties:asym:cap}, that the asymptotic capacity cannot depend on $A$. More precisely, for $n \to \infty$ the capacity converges to $1-\frac{1}{2 \ln 2}$ almost surely and in $\Lp{2}$, since for $V_{x,y}\sim\mathcal{U}([0,A])$, we have $\mu_1=\E{V_{x,y}}=\tfrac{A}{2}$ and $\mu_2=\E{V_{x,y}\log V_{x,y}}=\tfrac{A(2 \ln(A) - 1)}{4 \ln 2}$. 
\end{myex}

\begin{myex}[Gamma distribution] \label{ex:gamma}
Consider a DMC as defined in Section~\ref{sec:main} using a gamma distribution with shape and scale parameter $k>0$ and $\theta>0$. For $n \to \infty$ its capacity converges to $\tfrac{\psi(1+k)}{\ln 2}-\log k$ almost surely and in $\Lp{2}$, where $\psi(\cdot)$ denotes the \emph{digamma} function. This is a direct consequence of Theorem~\ref{thm:main}, since for $V_{x,y} \sim \Gamma(k,\theta)$ we have $\mu_1=\E{V_{x,y}}=k \theta$ and $\mu_2=\E{V_{x,y}\log V_{x,y}}=k \theta \tfrac{\ln\psi(1+k) + \ln \theta}{\ln 2}$.
We note that $\alpha V_{x,y} \sim \Gamma(k,\alpha\theta)$ for positive $\alpha$, which by the homogeneity property (cf.~Remark~\ref{rmk:properties:asym:cap}) implies that the asymptotic capacity cannot depend on $\theta$.
\end{myex}

\begin{myex}[Chi-squared distribution] \label{ex:chiSquared}
Consider a DMC as defined in Section~\ref{sec:main} using a chi-squared distribution with degrees of freedom $k\in \mathbb{N}$. For $n \to \infty$ its capacity converges to $1 + \tfrac{1}{\ln 2} \psi(1+\tfrac{k}{2})-\log k$ almost surely and in $\Lp{2}$, where $\psi(\cdot)$ denotes the \emph{digamma} function. This is a direct consequence of Theorem~\ref{thm:main}, since for $V_{x,y} \sim \chi^2(k)$ we have $\mu_1=\E{V_{x,y}}=k$ and $\mu_2=\E{V_{x,y}\log V_{x,y}}=k + \tfrac{k}{\ln 2}\psi(1+\tfrac{k}{2})$.
\end{myex}

\begin{myex}[Beta distribution] \label{ex:Beta}
Consider a DMC as defined in Section~\ref{sec:main} using a beta distribution with shape parameters $\alpha,\beta >0$. For $n \to \infty$ its capacity converges to $\tfrac{H_{\alpha}-H_{\alpha+\beta}}{\ln 2} - \log \tfrac{\alpha}{\alpha+\beta}$ almost surely and in $\Lp{2}$, where $H_n$ denotes the $n$-th harmonic number. This is a direct consequence of Theorem~\ref{thm:main}, using that for $V_{x,y} \sim \textnormal{beta}(\alpha,\beta)$ we have $\mu_1=\E{V_{x,y}}=\tfrac{\alpha}{\alpha+\beta}$ and $\mu_2=\E{V_{x,y}\log V_{x,y}}=\tfrac{\alpha}{(\alpha+\beta)\ln 2}(H_{\alpha}-H_{\alpha+\beta})$.
\end{myex}
\section{Two technical lemmas}

\begin{mylem} \label{lemma:ImpLimit}
Let $ X_1,X_2,\ldots, X_n$ be i.i.d.\ nonnegative random variables with $\E{X_i}=:\mu_1 > 0$, $\E{X_i \log X_i}=:\mu_2$ and $\E{(X_i \log X_i)^{2}}< \infty$. Let $Y_i = \frac{X_i}{\sum_{j=1}^n X_j}$ then as $n\to \infty$, $\sum_{i=1}^n Y_i \log Y_i + \log n \to \tfrac{\mu_2}{\mu_1} - \log \mu_1$ almost surely and in $\Lp{2}$.
\end{mylem}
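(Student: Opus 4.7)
\emph{Reduction.} Observing that $Y_i = X_i/S_n$ with $S_n := \sum_{j=1}^n X_j$, a direct manipulation yields
\[
\sum_{i=1}^n Y_i \log Y_i + \log n = \frac{A_n}{B_n} - \log B_n,
\]
where $A_n := \tfrac{1}{n}\sum_{i=1}^n X_i \log X_i$ and $B_n := \tfrac{1}{n} S_n$. It therefore suffices to show that the right-hand side converges to $c := \mu_2/\mu_1 - \log \mu_1$ both almost surely and in $\Lp{2}$.

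\emph{Almost sure convergence.} The hypothesis $\E{(X_i \log X_i)^2}<\infty$ combined with Cauchy--Schwarz gives $\E{|X_i \log X_i|}<\infty$, so the classical strong law of large numbers yields $A_n \to \mu_2$ and $B_n \to \mu_1$ almost surely. Because $\mu_1>0$, applying the continuous mapping theorem to $(a,b) \mapsto a/b - \log b$ at $b = \mu_1$ produces the almost sure claim.

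\emph{$\Lp{2}$ convergence.} The standing hypothesis also forces $\E{X_i^2}<\infty$: on $\{X_i\geq 2\}$ one has $X_i^2 \leq (X_i \log X_i)^2$ (since $\log 2 = 1$), while on $\{X_i<2\}$ one has $X_i^2\leq 4$. Hence $X_i$ and $X_i\log X_i$ both lie in $\Lp{2}$, and the $\Lp{2}$ law of large numbers delivers $A_n \to \mu_2$ and $B_n \to \mu_1$ in $\Lp{2}$. The main obstacle will be that the map $(a,b)\mapsto a/b - \log b$ is singular at $b=0$, so $\Lp{2}$ convergence does not pass through automatically. I would therefore split expectation along the good event $G_n := \{B_n \geq \mu_1/2\}$ and its complement. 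On $G_n$ a standard quotient estimate together with the mean value theorem for $\log$ produces a bound of the form
\[
\left|\frac{A_n}{B_n} - \log B_n - c\right| \leq K_1|A_n - \mu_2| + K_2|B_n - \mu_1|,
\]
with $K_1,K_2$ depending only on $\mu_1,\mu_2$, so the $\Lp{2}$-contribution from $G_n$ vanishes by the previous step. On $G_n^c$ I would exploit the deterministic identity
\[
\sum_{i=1}^n Y_i\log Y_i + \log n = \log n - H(Y),
\]
where $H(Y)$ is the Shannon entropy of the probability vector $Y$, which forces the left-hand side to lie in $[0,\log n]$ pathwise. Combined with Chebyshev's bound $\Prob{G_n^c} \leq 4\Var{X_1}/(n\mu_1^2) = O(1/n)$, the $\Lp{2}$-contribution from $G_n^c$ is of order $(\log n)^2/n$ and vanishes as $n\to\infty$. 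Summing the two contributions finishes the argument.
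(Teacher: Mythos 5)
Your proof is correct, and the reduction is identical to the paper's: writing the quantity as $A_n/B_n-\log B_n$ with $A_n=\tfrac1n\sum X_i\log X_i$, $B_n=\tfrac1n\sum X_j$, deducing $\E{X_i^2}<\infty$ from $\E{(X_i\log X_i)^2}<\infty$, and concluding the almost sure statement from the strong law plus continuity at $b=\mu_1>0$. Where you genuinely diverge is the $\Lp{2}$ part. The paper disposes of it in one sentence --- ``the convergence in $\Lp{2}$ follows by using the $\Lp{2}$-weak law instead of the strong law'' --- which only gives $A_n\to\mu_2$ and $B_n\to\mu_1$ in $\Lp{2}$ and silently passes this through the nonlinear, singular-at-zero map $(a,b)\mapsto a/b-\log b$. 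You correctly flag this as the real obstacle and close it: a Lipschitz-type estimate on the good event $\{B_n\geq\mu_1/2\}$ (which works, since there $|A_n/B_n-\mu_2/\mu_1|\leq \tfrac{2}{\mu_1}|A_n-\mu_2|+\tfrac{2|\mu_2|}{\mu_1^2}|B_n-\mu_1|$ and $\log$ is $\tfrac{2}{\mu_1\ln 2}$-Lipschitz on $[\mu_1/2,\infty)$), combined with the pathwise bound $\sum_i Y_i\log Y_i+\log n=\log n-H(Y)\in[0,\log n]$ and Chebyshev's estimate $\Prob{B_n<\mu_1/2}=O(1/n)$ on the bad event, giving an $O((\log n)^2/n)$ contribution. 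So your argument is strictly more complete than the paper's on this point; what the paper's terseness buys is brevity, at the cost of a step that is not literally justified as written. (One caveat you inherit from the paper rather than introduce: if $\Prob{X_1=0}>0$ then the event $\{S_n=0\}$ has positive probability for finite $n$ and the $Y_i$ are undefined there; neither your proof nor the paper's addresses this, and it would need to be excluded by assumption or handled by convention.)
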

\begin{proof}
Let $\xi_n := \tfrac{1}{n} \sum_{j=1}^n X_j$ and $Z_i:=X_i \log X_i$. We then can write
\begin{align}
\sum_{i=1}^n Y_i \log Y_i +Ê\log n &= \sum_{i=1}^n \frac{X_i}{\sum_{j=1}^n X_j} \log\left(\frac{X_i}{\sum_{j=1}^n X_j} \right) + \log n \nonumber \\
&=\frac{1}{n} \sum_{i=1}^n \frac{X_i}{\xi_n} \log \left( \frac{X_i}{n \, \xi_n} \right) + \log n  \nonumber\\
&= \frac{1}{\xi_n} \left( \frac{1}{n} \sum_{i=1}^n Z_i - \log(n\, \xi_n) \frac{1}{n} \sum_{i=1}^n X_i \right) + \log n  \nonumber \\
&= \frac{1}{\xi_n} \frac{1}{n} \sum_{i=1}^n  Z_i - \log \xi_n. \label{eq:apAA}
\end{align}
Note that $\E{(X_{i}\log X_{i})^{2}}<\infty$ implies that $X_{i}$ has a finite second moment.
Using the strong law of large numbers \cite[Theorem~2.4.1]{durrett_book}, it follows that for $n\to \infty$, $\xi_n \to \mu_1$ almost surely and $\tfrac{1}{n} \sum_{i=1}^n Z_i \to \mu_2$ almost surely. The convergence in $\Lp{2}$ follows by using the $\Lp{2}$-weak law \cite[Theorem 2.2.3.]{durrett_book} instead of the strong law of large numbers.
\end{proof}


\begin{mylem} \label{lemma:normalization}
Let $\{X_{i,j}\}_{i \in [n],\, j\in[m]}$ be i.i.d.\ random variables taking values on $\Rp$, with $\E{ X_{i,j}}=:\mu> 0$ and $\E{X_{i,j}^{2}}<\infty$. If $Y_{i,j}=\frac{ X_{i,j}}{\sum_{k=1}^m X_{i,k}}$ then for $n\to \infty$, where $m:=\ceil{\gamma n}$ for some $\gamma\in\Rsp$, $\sum_{i=1}^n Y_{i,j} \to \tfrac{1}{\gamma}$ almost surely and in $\Lp{2}$ for every $j \in [m]$. 
\end{mylem}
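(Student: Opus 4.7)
The plan is to prove $L^{2}$ and almost sure convergence separately, exploiting two facts: for each fixed $n$, the variables $Y_{1,j},\ldots,Y_{n,j}$ are independent in $i$ (since distinct rows use disjoint samples $X_{i,\cdot}$); and the symmetry identity $\E{Y_{1,j}\,f(S_{1})}=(1/m)\E{f(S_{1})}$ for any measurable $f$, where $S_{1}:=\sum_{k=1}^{m}X_{1,k}$, which follows from column-exchangeability together with $\sum_{k=1}^{m}Y_{1,k}=1$. In particular $\E{Y_{1,j}}=1/m$, so $\E{\sum_{i=1}^{n}Y_{i,j}}=n/m\to 1/\gamma$ deterministically, and the problem reduces to controlling $\Var{\sum_{i=1}^{n}Y_{i,j}}=n\,\Var{Y_{1,j}}$.

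For the $L^{2}$ statement I would show $\Var{Y_{1,j}}=O(1/m^{2})$, which together with $n/m\to 1/\gamma$ yields $\E{(\sum_{i}Y_{i,j}-1/\gamma)^{2}}=O(1/n)$. To bound $\E{Y_{1,j}^{2}}$, I would split on the event $\{S_{1}\geq m\mu/2\}$. On this event $Y_{1,j}\leq 2X_{1,j}/(m\mu)$, contributing at most $4\,\E{X_{1,j}^{2}}/(m\mu)^{2}=O(1/m^{2})$. On the complementary event, use $Y_{1,j}^{2}\leq Y_{1,j}$ together with the symmetry identity and the Chebyshev estimate $\Prob{S_{1}<m\mu/2}=O(1/m)$ (which uses only the finite second moment) to obtain a contribution of order $O(1/m^{2})$ as well.

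For the almost sure statement I would use the decomposition
\begin{equation*}
\sum_{i=1}^{n}Y_{i,j}=\frac{n}{m\mu}\cdot\frac{1}{n}\sum_{i=1}^{n}X_{i,j}+R_{n},\qquad R_{n}=\frac{1}{m\mu}\sum_{i=1}^{n}X_{i,j}\cdot\frac{\mu-\xi_{i}}{\xi_{i}},
\end{equation*}
with $\xi_{i}:=S_{i}/m$. The first term converges to $1/\gamma$ almost surely by the strong law of large numbers for the i.i.d.\ column $(X_{i,j})_{i\geq 1}$ combined with $n/m\to 1/\gamma$, so the goal reduces to showing $R_{n}\to 0$ almost surely. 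For this I would fix $\delta>0$ and split the sum into a good part over $\{i:|\xi_{i}-\mu|\leq\delta\}$ and a bad part over $\{i:|\xi_{i}-\mu|>\delta\}$. On the good part $|\mu-\xi_{i}|/\xi_{i}\leq\delta/(\mu-\delta)$, giving a bound of $(\delta/(\mu(\mu-\delta)))\cdot(n/m)\bar{X}^{(j)}_{n}$ whose $\limsup$ is $\delta/(\gamma(\mu-\delta))$ almost surely, which vanishes as $\delta\to 0$. The bad part I would control by computing its second moment via the symmetry identity (yielding an $O(1/n)$ bound), then applying Borel-Cantelli along a sparse subsequence $n_{k}=k^{2}$ and filling the gaps between consecutive subsequence indices by directly bounding the increments.

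The technically most delicate point is the almost sure step: the array is triangular (the distribution of $Y_{i,j}$ depends on $n$ through $m=\ceil{\gamma n}$), so the classical SLLN does not apply; moreover, under only a finite second moment, the Chebyshev-type tail $O(1/n)$ on the full sum is not summable, which is why a direct Borel-Cantelli argument fails and the subsequence-plus-interpolation strategy is needed. The bad-row contribution is the bottleneck, since an $O(1)$ expected number of rows with $|\xi_{i}-\mu|>\delta$ persists for every $n$; here the saving grace is that the column-symmetry identity forces $\E{Y_{i,j}\mid i\text{ bad}}=1/m$, so the expected bad contribution is still of order $1/n$ rather than of constant order.
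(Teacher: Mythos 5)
Your proposal is correct in its main lines but follows a genuinely different route from the paper. The paper's proof is a short sandwich argument: it rewrites $\sum_{i}Y_{i,j}=\frac{1}{\gamma n+\varepsilon_{n}}\sum_{i}X_{i,j}\big/\big(\tfrac{1}{m}\sum_{k}X_{i,k}\big)$ and bounds the row-wise denominators by the minimum, respectively maximum, of the $n$ row averages $\tfrac{1}{m}\sum_{k}X_{\ell,k}$, then invokes the strong law (and $\Lp{2}$-weak law) for both the numerator $\tfrac{1}{n}\sum_{i}X_{i,j}$ and the extremal row average, with only a one-line remark that the law of large numbers applies to the latter ``because the number of sequences is linear in the sequence length.'' That step is really a \emph{uniform} law of large numbers over $n$ independent row averages of length $m=\ceil{\gamma n}$, which under a bare second-moment assumption is not immediate (Chebyshev plus a union bound gives only $O(n/m)=O(1)$). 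Your argument avoids this entirely: the exchangeability identity $\E{Y_{1,j}f(S_{1})}=\tfrac{1}{m}\E{f(S_{1})}$ pins down the mean exactly, the truncation on $\{S_{1}\geq m\mu/2\}$ gives $\Var{Y_{1,j}}=O(1/m^{2})$ and hence the $\Lp{2}$ claim by row-independence, and the linearization of $1/\xi_{i}$ around $1/\mu$ reduces the almost-sure claim to the classical SLLN plus a remainder controlled by second moments. Your diagnosis of the delicate point is also accurate: the array is triangular and the $O(1/n)$ tail is not summable, so the subsequence-plus-interpolation step is genuinely needed; it goes through cleanly if you exploit that each $Y_{i,j}$ is nonincreasing in $m$, so that for $n_{k}\leq n\leq n_{k+1}$ one can sandwich $\sum_{i=1}^{n}Y^{(n)}_{i,j}$ between $\sum_{i=1}^{n_{k}}Y^{(n_{k+1})}_{i,j}$ and $\sum_{i=1}^{n_{k+1}}Y^{(n_{k})}_{i,j}$, both of which concentrate with summable error along $n_{k}=k^{2}$. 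In short, your proof is longer but more robust and more careful about exactly the point the paper glosses over; the paper's is shorter but leans on an unproved uniform convergence of the $n$ row averages.
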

\begin{proof}
By assumption we can write 
\begin{align}
\sum_{i=1}^n Y_{i,j} &= \sum_{i=1}^n \frac{ X_{i,j}}{\sum_{k=1}^m  X_{i,k}} \nonumber\\
&= \frac{1}{\gamma n + \varepsilon_{n}} \sum_{i=1}^n \frac{ X_{i,j}}{\frac{1}{m}\sum_{k=1}^m  X_{i,k}}, \label{eq:idk}
\end{align}
where $\varepsilon_{n}:= \ceil{\gamma n}-\gamma n\in[0,1)$ for all $n\in\mathbb{N}$.
We can bound \eqref{eq:idk} from above as
\begin{align}
\frac{1}{\gamma n + \varepsilon_{n}} \sum_{i=1}^n \frac{ X_{i,j}}{\frac{1}{m}\sum_{k=1}^m  X_{i,k}} &\leq \frac{1}{\gamma n + \varepsilon_{n}} \sum_{i=1}^n \frac{ X_{i,j}}{\frac{1}{m} \min \limits_{\ell \in [n]} \sum_{k=1}^m  X_{\ell,k}} \nonumber \\
 &= \frac{1}{\gamma} \frac{ \frac{1}{1+\frac{\varepsilon_{n}}{\gamma n}}\frac{1}{n} \sum_{i=1}^n X_{i,j}}{\frac{1}{m} \sum_{k=1}^m X_{\underline{\ell},k}}, \label{eq:ub}
\end{align}
where $\underline{\ell} \in \arg \min _{\ell\in [n]} \sum_{k=1}^m  X_{\ell,k}$. By the strong law of large numbers \cite[Theorem 2.4.1]{durrett_book} respectively the $\Lp{2}$-weak law \cite[Theorem 2.2.3.]{durrett_book}, the right hand side of \eqref{eq:ub} converges to $1/\gamma$ almost surely, respectively in $\Lp{2}$ for $n\to \infty$.
We can also bound \eqref{eq:idk} from below by
\begin{align}
 \frac{1}{\gamma n + \varepsilon_{n}} \sum_{i=1}^n \frac{ X_{i,j}}{\frac{1}{m}\sum_{k=1}^m  X_{i,k}} &\geq \frac{1}{\gamma n + \varepsilon_{n}} \sum_{i=1}^n \frac{ X_{i,j}}{\frac{1}{m} \max \limits_{\ell \in [n]} \sum_{k=1}^m  X_{\ell,k}}  \nonumber\\
 &= \frac{1}{\gamma} \frac{ \frac{1}{1+\frac{\varepsilon_{n}}{\gamma n}}\frac{1}{n} \sum_{i=1}^n X_{i,j}}{\frac{1}{m} \sum_{k=1}^m X_{\overline{\ell},k}}, \label{eq:lb}
\end{align}
where $\overline{\ell} \in\arg \max _{\ell\in [n]} \sum_{k=1}^n  X_{\ell,k}$. 
Again the strong law of large numbers and the $\Lp{2}$-weak law ensure that the right hand side of \eqref{eq:lb} converges to $1/\gamma$ almost surely and in $\Lp{2}$ for $n \to \infty$. This then implies that \eqref{eq:idk} converges to $1/\gamma$ almost surely and in $\Lp{2}$ for $n\to \infty$ which proves the assertion. 
Note that the law of large numbers in \eqref{eq:lb} works there because the number of sequences is linear in the sequence length.
\end{proof}

\section{Proof of Theorem~\ref{thm:finiteSize}} \label{app:proof:rate}
As already sketched in Section~\ref{sec:proof:rate}, the proof of Theorem~\ref{thm:finiteSize} is implied by convergence rate statements for the upper and lower bounds of Section~\ref{sec:main} (Propositions~\ref{prop:UBfinite} and \ref{prop:LBfinite}) respectively. This appendix provides a proof for the mentioned two propositions. To prove Proposition~\ref{prop:UBfinite} we need a few preparatory lemmas.
\begin{mylem}[{Bernstein's inequality \cite[Corollary~2.11]{ref:boucheron-13}}] \label{lem:bernstein}
Let $X_{1},\hdots, X_{n}$ be independent real-valued random variables. Assume that there exist positive numbers $K$ and $T$ such that $\sum_{i=1}^{n}\E{X_{i}^{2}}\leq K$ and
\begin{equation*}
\sum_{i=1}^{n}\E{(X_{i})_{+}^{q}}\leq \frac{q!}{2}KT^{q-2} \quad \text{for all integers } q\geq 3.
\end{equation*}
If $S=\tfrac{1}{n}\sum_{i=1}^{n}(X_{i}-\E{X_{i}})$, then for all $t>0$
\begin{equation*}
\Prob{S\geq t} \leq \exp\left( \frac{-n^{2}t^{2}}{2(K+Tnt)} \right).
\end{equation*}

\end{mylem}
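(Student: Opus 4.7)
The plan is to apply the classical Cram\'er--Chernoff strategy for deriving Bernstein-type tail bounds. First, I would fix a parameter $\lambda\in(0,1/T)$ and use the exponential Markov inequality together with independence of the $X_i$'s to write
\begin{equation*}
\Prob{S\geq t}\leq e^{-\lambda n t}\,\E{e^{\lambda n S}} = e^{-\lambda n t}\prod_{i=1}^n \E{e^{\lambda(X_i-\E{X_i})}}.
\end{equation*}
Taking logarithms, the task reduces to controlling the quantity $\sum_{i=1}^n \log \E{e^{\lambda(X_i-\E{X_i})}}$ in terms of the Bernstein moment data.

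Next, I would estimate each log-moment generating factor by Taylor expansion: writing $e^x=1+x+\sum_{q\geq 2}x^q/q!$ and exploiting the vanishing of the first-order term after centering, the hypothesis $\sum_i \E{(X_i)_+^q}\leq \frac{q!}{2}K T^{q-2}$ for $q\geq 3$ together with $\sum_i \E{X_i^2}\leq K$ should produce the classical estimate
\begin{equation*}
\sum_{i=1}^n \log\E{e^{\lambda(X_i-\E{X_i})}}\leq \frac{K\lambda^2/2}{1-T\lambda},\qquad 0<\lambda<\tfrac{1}{T},
\end{equation*}
where the factor $(1-T\lambda)^{-1}$ arises from summing the geometric-type series $\sum_{q\geq 0}(T\lambda)^q$ generated by the Bernstein moment condition (here one also uses $\log(1+u)\leq u$ to pass from the MGF to its logarithm).

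Finally, I would optimize the resulting Chernoff exponent $\lambda n t-\tfrac{K\lambda^2/2}{1-T\lambda}$ over $\lambda\in(0,1/T)$. A routine calculus computation shows that the supremum is attained at $\lambda^\star = nt/(K+T n t)$ and equals $n^2 t^2/(2(K+Tnt))$, which upon substitution into the Chernoff bound yields the claimed tail estimate.

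The main obstacle I anticipate is the passage from the one-sided \emph{raw}-moment hypothesis on $(X_i)_+^q$ to a clean log-MGF bound for the \emph{centered} variables $X_i-\E{X_i}$. Since centering alters the moments, one must argue that the Taylor coefficients $\E{(X_i-\E{X_i})^q}$ for $q\geq 2$ are controlled by $\E{(X_i)_+^q}$ up to lower-order corrections absorbed by the second-moment bound; the necessary inequality, typically proved by splitting into $X_i\geq 0$ and $X_i<0$ and using $e^y-1-y\leq \tfrac{y^2}{2}$ for $y\leq 0$, is standard in the concentration-of-measure literature and carried out in detail in \cite{ref:boucheron-13}.
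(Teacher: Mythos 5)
Your proposal is correct and follows the same Cram\'er--Chernoff/Bernstein-moment argument as the cited source: the paper gives no proof of this lemma, only the reference to \cite[Corollary~2.11]{ref:boucheron-13}, and your outline (centering via $\log\E{e^{\lambda(X_i-\E{X_i})}}\leq \E{e^{\lambda X_i}-\lambda X_i-1}$, splitting on the sign of $X_i$, and summing the geometric series from the moment condition to get $\tfrac{K\lambda^2/2}{1-T\lambda}$) is exactly that proof. One minor imprecision: the supremum of $\lambda nt-\tfrac{K\lambda^2/2}{1-T\lambda}$ over $\lambda\in(0,1/T)$ is \emph{not} attained at $\lambda=nt/(K+Tnt)$ and in fact strictly exceeds $n^2t^2/(2(K+Tnt))$; however, substituting that particular $\lambda$ evaluates the exponent to exactly $n^2t^2/(2(K+Tnt))$, which is all that is needed for the stated upper bound.
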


\begin{mylem} \label{lem:basic}
Let $X$ and $Y$ be two random variables, $\eta_1$ and $\eta_2$ be two real constants such that $\Prob{|X-\eta_1|\geq t} \leq g_1(t)$ and $\Prob{|Y-\eta_2|\geq t} \leq g_2(t)$ for two functions $g_i : \R \to \Rp$, $i\in \{1,2 \}$. Then, $\Prob{\left|X+Y-\eta_1-\eta_2 \right|\geq t} \leq g_1(\tfrac{t}{2}) + g_2(\tfrac{t}{2})$.
\end{mylem}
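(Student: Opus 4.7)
The plan is to use a simple combination of the triangle inequality with a union bound. First I would observe that by the triangle inequality
\begin{equation*}
|X+Y-\eta_1-\eta_2| \leq |X-\eta_1| + |Y-\eta_2|,
\end{equation*}
so the event $\{|X+Y-\eta_1-\eta_2|\geq t\}$ is contained in the event $\{|X-\eta_1|+|Y-\eta_2|\geq t\}$. In turn, the latter event is contained in $\{|X-\eta_1|\geq t/2\} \cup \{|Y-\eta_2|\geq t/2\}$, since if both summands were strictly smaller than $t/2$ their sum would be strictly smaller than $t$.

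Next I would apply monotonicity of the probability measure together with the union bound, yielding
\begin{equation*}
\Prob{|X+Y-\eta_1-\eta_2|\geq t} \leq \Prob{|X-\eta_1|\geq t/2} + \Prob{|Y-\eta_2|\geq t/2}.
\end{equation*}
The hypotheses on $g_1$ and $g_2$ then immediately give the claimed bound $g_1(t/2) + g_2(t/2)$.

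There is no real obstacle here: the lemma is a standard two-term concentration splitting argument, and its role in the paper is purely to serve as a convenient combinator in the subsequent proofs of Propositions~\ref{prop:UBfinite} and \ref{prop:LBfinite}, where sums of several concentrated quantities must be controlled by individual Bernstein-type tail estimates. The only thing worth being careful about is that the inclusion of events uses a non-strict inequality on one side and a strict one on the other, but both translations preserve the union-bound step, so the statement holds as written.
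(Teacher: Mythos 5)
Your proof is correct and is essentially identical to the paper's own argument: the paper likewise passes through the event $\{|X-\eta_1|+|Y-\eta_2|\geq t\}$ via the triangle inequality, includes it in $\{|X-\eta_1|\geq t/2\}\cup\{|Y-\eta_2|\geq t/2\}$, and applies the union bound. No issues.
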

\begin{proof}
Consider the following four events $A_t:=\{|X-\eta_1|\geq t \}$, $B_t:=\{|Y-\eta_2|\geq t \}$, $C_t:=\{|X+Y-\eta_1 - \eta_2|\geq t \}$ and $D_t:=\{|X-\eta_1|+|Y-\eta_2|\geq tÊ\}$. Using the triangle inequality and the union bound we find
\begin{equation*}
\Prob{C_t} \leq \Prob{D_t} \leq \Prob{A_{t/2} \cup B_{t/2}} \leq \Prob{A_{t/2}} + \Prob{B_{t/2}} \leq g_1(\tfrac{t}{2}) + g_2(\tfrac{t}{2}).
\end{equation*}
\end{proof}
\begin{mylem} \label{lem:division}
Let $X$, $Y$ be random variables and $\eta_1 \in \R$, $\eta_2 \in \Rsp$ constants such that $\Prob{|X-\eta_1|\geq t}\leq g_1(t)$ and $\Prob{|Y-\eta_2|\geq t}\leq g_2(t)$, for two functions $g_i : \R \to \Rp$, $i \in \{1,2\}$ and $t\in \Rp$. Then 
\begin{equation*}
\Prob{\left|\frac{X}{Y}-\frac{\eta_1}{\eta_2} \right|\geq t} \leq g_1(\alpha_t) + g_2(\alpha_t)  \quad \textnormal{with} \quad \alpha_t = \left \lbrace \begin{array}{ll}  
\tfrac{t \eta_2^2}{\eta_2(1+t)+\eta_1} & \textnormal{if } \eta_1 + \eta_2 \geq 0\\
\tfrac{t \eta_2^2}{\eta_2(1-t)+\eta_1} & \textnormal{otherwise}.
\end{array} \right.
\end{equation*}
\end{mylem}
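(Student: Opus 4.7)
The plan is to reduce the probabilistic statement to a deterministic containment of events and then invoke the union bound with the two supplied tail estimates $g_1$ and $g_2$. Concretely, I will prove
\[
\{|X/Y - \eta_1/\eta_2| \geq t\} \;\subseteq\; \{|X - \eta_1| \geq \alpha_t\} \cup \{|Y - \eta_2| \geq \alpha_t\},
\]
after which
\[
\Prob{|X/Y - \eta_1/\eta_2| \geq t} \leq \Prob{|X - \eta_1| \geq \alpha_t} + \Prob{|Y - \eta_2| \geq \alpha_t} \leq g_1(\alpha_t) + g_2(\alpha_t)
\]
is immediate. To establish the inclusion I will argue by contraposition: assume $|X-\eta_1|<\alpha_t$ and $|Y-\eta_2|<\alpha_t$ and show that $|X/Y-\eta_1/\eta_2|<t$.

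The workhorse is the algebraic identity obtained by placing over a common denominator: setting $\delta_1 := X-\eta_1$ and $\delta_2 := Y-\eta_2$,
\[
\frac{X}{Y} - \frac{\eta_1}{\eta_2} = \frac{\delta_1\eta_2 - \delta_2\eta_1}{Y\eta_2},
\]
which requires $Y\neq 0$. Since $\eta_2>0$ and $|\delta_2|<\alpha_t$, it suffices to verify $\alpha_t<\eta_2$, which will hold in both branches of the definition of $\alpha_t$ provided $t>0$ is such that the denominators in the formula are positive. Under the standing assumption $|\delta_i|<\alpha_t$, the triangle inequality yields $|\delta_1\eta_2-\delta_2\eta_1|<\alpha_t(\eta_2+|\eta_1|)$ and $|Y\eta_2|>(\eta_2-\alpha_t)\eta_2$, so
\[
\left|\frac{X}{Y}-\frac{\eta_1}{\eta_2}\right| < \frac{\alpha_t(\eta_2+|\eta_1|)}{(\eta_2-\alpha_t)\eta_2}.
\]
Enforcing that the right-hand side is at most $t$ becomes a linear inequality in $\alpha_t$, namely $\alpha_t\bigl((1+t)\eta_2+|\eta_1|\bigr)\leq t\eta_2^2$, whose solution is the desired threshold for $\alpha_t$.

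The two cases in the statement arise from being slightly more careful than the crude bound above: instead of using $|\delta_2 \eta_1|\leq|\delta_2||\eta_1|$, I will keep $\delta_1\eta_2-\delta_2\eta_1$ intact and identify which joint sign pattern of $(\delta_1,\delta_2)$ actually maximizes the ratio, since when $\eta_1$ and $\eta_2$ have opposite signs the worst-case $\delta_2$ pushes $Y$ \emph{up} rather than down, softening the denominator; this is precisely what the dichotomy on $\eta_1+\eta_2\geq 0$ encodes. The main obstacle is therefore not conceptual but bookkeeping: tracking the sign of $\eta_1$, verifying in each case that the denominator of $\alpha_t$ is strictly positive so that the rearrangement is valid, and checking the technical side condition $\alpha_t<\eta_2$ so that the identity above is legitimately applicable. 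Once the algebra in each case is resolved, the union bound finishes the proof.
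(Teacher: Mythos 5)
Your plan is essentially the paper's own proof: both arguments reduce the claim to the event inclusion $\{|X-\eta_1|<\alpha_t\}\cap\{|Y-\eta_2|<\alpha_t\}\subseteq\{|X/Y-\eta_1/\eta_2|<t\}$ followed by De Morgan and the union bound, and both establish the inclusion by controlling the worst-case deviation of $X/Y$ over the rectangle $[\eta_1-\alpha_t,\eta_1+\alpha_t]\times[\eta_2-\alpha_t,\eta_2+\alpha_t]$ and then inverting the relation between $\alpha_t$ and $t$. The only cosmetic difference is that the paper evaluates the deviation directly at the corners $(\eta_1+t,\eta_2-t)$ and $(\eta_1-t,\eta_2+t)$, whereas you route through the common-denominator identity and the triangle inequality.

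The step you defer as ``bookkeeping'' is, however, where the real content sits, and your description of it is not quite right. First, the dichotomy in the lemma is on the sign of $\eta_1+\eta_2$, not on whether $\eta_1$ and $\eta_2$ have opposite signs. Second, if you carry out your own refinement carefully you will find it cannot produce the stated two-branch formula: since $X/Y$ is monotone in each argument separately (for $Y>0$), the exact maximum of $|X/Y-\eta_1/\eta_2|$ over the rectangle is attained at a corner, and comparing all four corners gives $\alpha(\eta_2+|\eta_1|)/\bigl(\eta_2(\eta_2-\alpha)\bigr)$ --- i.e.\ your ``crude'' bound is already sharp. Inverting it yields $\alpha_t=t\eta_2^2/\bigl(\eta_2(1+t)+|\eta_1|\bigr)$, which coincides with the lemma's first branch whenever $\eta_1\geq 0$ (the regime relevant to the paper's applications with $\mu_{2,n}\geq 0$) but does not reproduce the second branch; the paper arrives at that branch by comparing only the two anti-diagonal corners. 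So your approach is sound and yields a valid inequality of the advertised shape, but the promised sign analysis will not ``resolve'' to the exact $\alpha_t$ of the statement when $\eta_1+\eta_2<0$, and you should either prove the lemma with $|\eta_1|$ in the denominator or restrict to $\eta_1\geq 0$.
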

\begin{proof}
Consider the three events $A_t:=\{|X-\eta_1|\geq t \}$, $B_t:=\{|Y-\eta_2|\geq t \}$ and $C_t:=\{|\tfrac{X}{Y}-\tfrac{\eta_1}{\eta_2}|\geq t \}$. We first show that $A_t^{\setC} \cap B_t^{\setC} \subseteq C_{\gamma_t}^{\setC}$, with $\gamma_t:=\tfrac{t(\eta_1 +\eta_2)}{\eta_2(\eta_2-t)}$ for $t\neq \eta_{2}$ and $\eta_1 + \eta_2 \geq 0$.
Given $A_t^{\setC} \cap B_t^{\setC}$  it follows that $X\in[\eta_1-t,\eta_1+t]$ and $Y\in[\eta_2-t,\eta_2+t]$. Given $A_t^{\setC} \cap B_t^{\setC}$, two possible extreme values of $|\tfrac{X}{Y}-\tfrac{\eta_1}{\eta_2}|$ are
\begin{align} 
\left| \frac{\eta_1+t}{\eta_2-t}-\frac{\eta_1}{\eta_2} \right| &= \left| \frac{t(\eta_2+\eta_1)}{\eta_2(\eta_2-t)} \right| =:\gamma_t \quad  \textnormal{ for $t\neq \eta_{2}$ \,\,\,\, and} \label{eq:smaller} \\
\left| \frac{\eta_1}{\eta_2}-\frac{\eta_1-t}{\eta_2+t} \right| &= \left| \frac{t(\eta_1+\eta_2)}{\eta_2(\eta_2+t)} \right| =:\gamma'_t, \label{eq:beta}
\end{align} 
where it is immediate that for $\eta_1 + \eta_2 \geq 0$, we have $\gamma_t \geq \gamma'_t$. We thus have $|\tfrac{X}{Y}-\tfrac{\eta_1}{\eta_2}|\leq \gamma_t$, which implies by definition that $A_t^{\setC} \cap B_t^{\setC} \subseteq C_{\gamma_t}^{\setC}$ and thus
\begin{equation}
\Prob{C_{\gamma_t}^{\setC}} \geq \Prob{A_t^{\setC} \cap B_t^{\setC}}. \label{eq:firststep}
\end{equation}
Using \eqref{eq:firststep}, de Morgan's law and the union bound we find
\begin{align}
\Prob{C_{\gamma_t}} &= 1-\Prob{C_{\gamma_t}^{\setC}} \leq 1- \Prob{A_t^{\setC} \cap B_t^{\setC}} =\Prob{A_t \cup B_t}  \leq \Prob{A_t} + \Prob{B_t}.  \label{eq:final}
\end{align}
Solving $\gamma_t=\left| \tfrac{t(\eta_1 +\eta_2)}{\eta_2(\eta_2-t)}\right|$ for $t\neq\eta_{2}$ and inserting it into \eqref{eq:final} proves the assertion for $\eta_1 + \eta_2 \geq 0$. If $\eta_1 + \eta_2 < 0$, we have $\gamma'_t \geq \gamma_t$. Following the same lines as above, i.e., solving $\gamma'_t=\left| \tfrac{t(\eta_1 +\eta_2)}{\eta_2(\eta_2+t)}\right|$ for $t$ and inserting it into \eqref{eq:final} proves the assertion for $\eta_1 + \eta_2 < 0$.
\end{proof}
\begin{mylem} \label{lem:lipschitz}
Let $X$ be a random variable and $\eta$ be a constant such that $X,\eta \in [\alpha,\infty)$ for $\alpha >0$ and $\Prob{|X-\eta|\geq t}\leq g(t)$ for some function $g:\Rp \to \Rsp$. Then $\Prob{|\log X - \log \eta|\geq t} \leq g(\tfrac{t}{L})$ with $L=\tfrac{1}{\alpha \ln(2)}$.
\end{mylem}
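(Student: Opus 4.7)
The plan is to exploit the fact that the logarithm, when restricted away from zero, is a Lipschitz function, and then transfer the tail bound on $|X-\eta|$ through this deterministic pointwise estimate.

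First I would establish the Lipschitz constant. Since $\log$ denotes the base-$2$ logarithm, its derivative is $\frac{d}{dx}\log x = \frac{1}{x\ln 2}$, which on the interval $[\alpha,\infty)$ is bounded in absolute value by $\frac{1}{\alpha\ln 2} = L$. By the mean value theorem (or equivalently integrating the derivative), for any $x,y\in[\alpha,\infty)$ one has $|\log x - \log y|\leq L\,|x-y|$.

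Next I would apply this pointwise to the random variable $X$ and the constant $\eta$, both of which lie in $[\alpha,\infty)$ by hypothesis: almost surely $|\log X - \log \eta|\leq L\,|X-\eta|$. This means the event $\{|\log X - \log \eta|\geq t\}$ is contained in the event $\{L\,|X-\eta|\geq t\} = \{|X-\eta|\geq t/L\}$. Taking probabilities and using the given tail bound then yields
\begin{equation*}
\Prob{|\log X - \log \eta|\geq t} \leq \Prob{|X-\eta|\geq t/L} \leq g(t/L),
\end{equation*}
which is exactly the claimed inequality.

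There is essentially no obstacle here; the only point that requires a moment of care is that the constant $L$ must be derived with the correct base of the logarithm (base $2$ throughout the paper, as stated in the Notation paragraph), which accounts for the $\ln 2$ in the denominator of $L$. The use of the constant $\alpha>0$ as the uniform lower bound on both $X$ (almost surely) and $\eta$ is what prevents the derivative from blowing up, and hence what makes the Lipschitz constant finite.
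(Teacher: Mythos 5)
Your proof is correct and follows essentially the same route as the paper: both establish that $x\mapsto\log x$ is Lipschitz on $[\alpha,\infty)$ with constant $L=\tfrac{1}{\alpha\ln 2}$ and then pass the tail bound through the resulting event inclusion. Your write-up is simply a more detailed version (via the mean value theorem) of the paper's one-line argument.
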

\begin{proof}
The function $h:[\alpha,\infty) \to \R$ for $\alpha >0$ that maps $x \mapsto \log x$ is known to be Lipschitz continuous with Lipschitz constant $L=\tfrac{1}{\alpha \ln 2}$. By definition of Lipschitz continuity we obtain
\begin{align*}
\Prob{|\log X - \log \eta|\geq t}  \leq \Prob{|X-\eta|\geq \frac{t}{L}} \leq g\left(\frac{t}{L}\right).
\end{align*}
\end{proof}

\begin{proof}[Proof of Proposition~\ref{prop:UBfinite}]
Let $\{V_{x,y}\}_{x \in [n],y \in [m]}$, $\mu_{1,n}$ and $\mu_{2,n}$ as defined above and let $Z_{x,y}:=V_{x,y} \log V_{x,y}$.
According to Assumption~\ref{ass:convergence:rate} and Bernstein's inequality (Lemma~\ref{lem:bernstein}),
\begin{equation}
\Prob{\left|\frac{1}{m} \sum_{y=1}^m V_{x,y} - \mu_{1,n} \right| \geq t} \leq \exp\left( \frac{-m^{2}t^{2}}{2(K+Tmt)} \right) =f(t,m) \quad \forall x \in [n], \label{eq:stepi}
\end{equation}
and
\begin{equation}
\Prob{\left|\frac{1}{m} \sum_{y=1}^m Z_{x,y} - \mu_{2,n} \right| \geq t} \leq \exp\left( \frac{-m^{2}t^{2}}{2(K+Tmt)} \right)=f(t,m) \quad \forall x \in [n], \label{eq:stepii}
\end{equation}
According to Lemma~\ref{lem:division}, \eqref{eq:stepi} and \eqref{eq:stepii} imply that
\begin{equation}
\Prob{\left|\frac{\frac{1}{m}\sum_{y=1}^m Z_{x,y}}{\frac{1}{m}\sum_{y=1}^m V_{x,y}}-\frac{\mu_{2,n}}{\mu_{1,n}} \right|\geq t} \leq 2f(\alpha_t,m) \quad \forall x\in [n], \label{eq:stepiii}
\end{equation}
for
\begin{equation*}
\alpha_t = \left \lbrace \begin{array}{ll}  
\tfrac{t \mu_{1,n}^2}{\mu_{1,n}(1+t)+\mu_{2,n}} & \textnormal{if } \mu_{1,n} + \mu_{2,n} \geq 0\\
\tfrac{t \mu_{1,n}^2}{\mu_{1,n}(1-t)+\mu_{2,n}} & \textnormal{otherwise.}
\end{array} \right.
\end{equation*}
Lemma~\ref{lem:lipschitz} together with \eqref{eq:stepi} gives
\begin{align}
\Prob{\left|\log\left(\frac{1}{m} \sum_{y=1}^m V_{x,y} \right) - \log \mu_{1,n}\right|\geq t} \leq f(\tfrac{t}{L},m) \quad \forall x \in [n], \label{eq:stepiiii}
\end{align}
with $L=\tfrac{1}{a \ln 2}$ and $a=\min\left\{ \tfrac{1}{m}\sum_{y=1}^{m}V_{x,y}, \mu_{1,n} \right\}$.
Finally, using the definition of $C^{(\lambda=0)}_{\UB}(\W)$ given in \eqref{eq:defdUB} we find
\begin{align}
&\Prob{\left|C^{(\lambda=0)}_{\UB}(\W)-\left(\frac{\mu_{2,n}}{\mu_{1,n}}-\log \mu_{1,n}\right) \right| \geq t} \nonumber \\
&\hspace{10mm}= \Prob{\left| \max_{x\in [n]}\left \lbrace \sum_{y=1}^m \W_{x,y} \log \W_{x,y} \right \rbrace + \log m - \left(\frac{\mu_{2,n}}{\mu_{1,n}}-\log \mu_{1,n} \right) \right| \geq t} \nonumber\\
&\hspace{10mm}=\Prob{\left| \max_{x\in [n]}\left \lbrace \sum_{y=1}^m \frac{V_{x,y}}{\sum_{k=1}^m V_{x,k}} \log \left( \frac{V_{x,y}}{\sum_{\ell=1}^m V_{x,\ell}}\right) \right \rbrace + \log m - \left(\frac{\mu_{2,n}}{\mu_{1,n}}-\log \mu_{1,n} \right) \right| \geq t} \nonumber\\
&\hspace{10mm}=\Prob{\left|  \sum_{y=1}^m \frac{V_{\overline{x},y}}{\sum_{k=1}^m V_{\overline{x},k}} \log \left( \frac{V_{\overline{x},y}}{\sum_{\ell=1}^m V_{\overline{x},\ell}}\right)  + \log m - \left(\frac{\mu_{2,n}}{\mu_{1,n}}-\log \mu_{1,n} \right) \right| \geq t} \label{eq:maxxx} \\
&\hspace{10mm}= \Prob{\left|\frac{\frac{1}{m}\sum_{y=1}^m Z_{\overline{x},y}}{\frac{1}{m}\sum_{y=1}^m V_{\overline{x},y}} - \frac{\mu_{2,n}}{\mu_{1,n}}- \left(\log\Biggl(\frac{1}{m} \sum_{y=1}^m V_{\overline{x},y} \Biggr)- \log \left(\mu_{1,n}\right)\right)\right|\geq t} \label{eq:basicc}\\
&\hspace{10mm} \leq 2f(\alpha_{t/2},m)+ f(\tfrac{t}{2L},m), \label{eq:ddavv}
\end{align}
where in \eqref{eq:maxxx} $\overline{x}$ denotes the $x\in[n]$ that achieves the maximum. Equation \eqref{eq:basicc} follows by recalling that $Z_{x,y}:= V_{x,y} \log V_{x,y}$.
The inequality finally uses \eqref{eq:stepiii}, \eqref{eq:stepiiii} and Lemma~\ref{lem:basic}.
\end{proof}

We next derive a few preparatory lemmas that are used to prove Proposition~\ref{prop:LBfinite}.
\begin{mylem} \label{lem:LBi}
Using the notation introduced above, for all $y \in [m]$,
\begin{equation*}
\Prob{\left|\sum_{k \in [n]} \W_{k,y} -\frac{n}{m} \right|\geq t} \leq f(\beta_t,n)+f(\beta_t,m), \quad \textnormal{with} \quad \beta_t:=\frac{t \mu_{1,n}}{2+t}.
\end{equation*}
\end{mylem}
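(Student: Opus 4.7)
The objective is to show a concentration inequality for the column sum $\sum_{k \in [n]} \W_{k,y}$ around its limiting value $n/m$. Comparing the expression $\beta_t = \tfrac{t\mu_{1,n}}{2+t}$ with the formula in Lemma~\ref{lem:division}, one notices that $\beta_t$ is precisely the parameter $\alpha_t$ obtained from Lemma~\ref{lem:division} when $\eta_1 = \eta_2 = \mu_{1,n}$. Likewise, the two summands $f(\beta_t, n)$ and $f(\beta_t, m)$ strongly suggest that the proof combines two Bernstein-type concentration inequalities, one for a sum of $n$ terms and one for a sum of $m$ terms, both of which concentrate around the common value $\mu_{1,n}$.

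Concretely, the plan is as follows. First, I would invoke Bernstein's inequality (Lemma~\ref{lem:bernstein}) for the column average. By Assumption~\ref{ass:convergence:rate}(i)--(iii), the sample mean $\tfrac{1}{n}\sum_{k=1}^n V_{k,y}$ is a normalized sum of $n$ independent nonnegative random variables with mean $\mu_{1,n}$ whose moments satisfy the hypotheses of Lemma~\ref{lem:bernstein}, yielding
\begin{equation*}
\Prob{\left|\tfrac{1}{n}\sum_{k=1}^n V_{k,y} - \mu_{1,n}\right| \geq s} \leq f(s, n).
\end{equation*}
Second, for the row average $\tfrac{1}{m}S_k = \tfrac{1}{m}\sum_{y'=1}^m V_{k,y'}$, Bernstein's inequality analogously gives
\begin{equation*}
\Prob{\left|\tfrac{1}{m}S_k - \mu_{1,n}\right| \geq s} \leq f(s, m),
\end{equation*}
where again Assumption~\ref{ass:convergence:rate}(iii) is what guarantees the mean equals $\mu_{1,n}$. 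Third, I would rewrite the quantity of interest as
\begin{equation*}
\tfrac{m}{n}\sum_{k=1}^n \W_{k,y} \;=\; \tfrac{1}{n}\sum_{k=1}^n \frac{V_{k,y}}{S_k/m},
\end{equation*}
so that the target $n/m$ corresponds, after this rescaling, to the ratio $\mu_{1,n}/\mu_{1,n} = 1$. An application of Lemma~\ref{lem:division} with $\eta_1 = \eta_2 = \mu_{1,n}$ then produces exactly the parameter $\beta_t = \tfrac{t\mu_{1,n}}{2+t}$ and the bound $f(\beta_t, n) + f(\beta_t, m)$.

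The main technical obstacle is that $\sum_{k=1}^n V_{k,y}/S_k$ does not literally factor as a ratio of a single $n$-sum over a single $m$-sum, because each term carries its own row-specific denominator $S_k$. To bridge this gap I would sandwich the column sum, using the monotonicity of $x \mapsto 1/x$ on $(0,\infty)$ (valid by Assumption~\ref{ass:convergence:rate}(iv)), by two ratios of the desired form, for instance
\begin{equation*}
\frac{\tfrac{1}{n}\sum_{k=1}^n V_{k,y}}{\tfrac{1}{m}S_{k_{\max}}} \;\leq\; \tfrac{m}{n}\sum_{k=1}^n\W_{k,y} \;\leq\; \frac{\tfrac{1}{n}\sum_{k=1}^n V_{k,y}}{\tfrac{1}{m}S_{k_{\min}}},
\end{equation*}
where $k_{\min}$ and $k_{\max}$ index the rows with minimum and maximum $S_k$. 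Both bounding ratios are of the form analyzed in Step~2, so Lemma~\ref{lem:division} applied to each, combined with the Bernstein inequalities of Steps~1--2, should yield the claim. Reconciling this sandwich with the stated probability bound (which has no union-bound factor over rows) is the delicate step and will require exploiting the identical expected row sums guaranteed by Assumption~\ref{ass:convergence:rate}(iii), so that a single realization of the concentration event on any fixed row suffices to control both $S_{k_{\min}}$ and $S_{k_{\max}}$ simultaneously through the column-sum concentration.
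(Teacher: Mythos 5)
Your proposal follows essentially the same route as the paper's proof: sandwich $\sum_{k}\W_{k,y}$ between two ratios whose denominators are the maximal and minimal row averages, apply Bernstein's inequality (Lemma~\ref{lem:bernstein}) to the column average ($n$ terms) and the row average ($m$ terms), both centred at $\mu_{1,n}$ via Assumption~\ref{ass:convergence:rate}(iii), and combine them through Lemma~\ref{lem:division} with $\eta_1=\eta_2=\mu_{1,n}$, which gives exactly $\beta_t$ and the bound $f(\beta_t,n)+f(\beta_t,m)$. The ``delicate step'' you flag at the end---that the extremal row average indexed by a random $\arg\max$/$\arg\min$ is not a plain sum of independent variables---is handled no more carefully in the paper itself, which simply applies the Bernstein bound to the rows $\overline{k}$ and $\underline{k}$ without a union bound over rows.
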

\begin{proof}
Fix an arbitrary $y\in [n]$ and define
\begin{equation*}
U_y:=\sum_{k \in [n]} \W_{k,y}=\frac{n}{m}\frac{1}{n} \sum_{k \in [n]} \frac{V_{k,y}}{\frac{1}{n}\sum_{\ell \in [n]}V_{k,\ell}}.
\end{equation*}
With $\overline{k}\in \arg \max_{k \in [n]} \sum_{\ell \in [m]} V_{k,\ell}$ and $\underline{k}\in \arg \min_{k \in [n]} \sum_{\ell \in [m]} V_{k,\ell}$ we can bound $U_y$ from below and above by
\begin{align*}
U_{y,\LB}=\frac{n}{m}\frac{1}{n} \sum_{k \in [n]} \frac{V_{k,y}}{\frac{1}{m}\sum_{\ell \in [m]}V_{\overline{k},\ell}} \quad \textnormal{and} \quad 
U_{y,\UB}=\frac{n}{m}\frac{1}{n} \sum_{k \in [n]} \frac{V_{k,y}}{\frac{1}{m}\sum_{\ell \in [m]}V_{\underline{k},\ell}}.
\end{align*}
Lemma~\ref{lem:bernstein} and Lemma~\ref{lem:division} give
\begin{align*}
\Prob{\left|U_{y,\LB}-\frac{n}{m} \right|\geq t} & = \Prob{\left|\frac{n}{m}\frac{\frac{1}{n}\sum_{k\in[n]}V_{k,y}}{\frac{1}{m}\sum_{\ell \in [m]}V_{\overline{k},\ell}}-\frac{n}{m} \right|\geq t}\\
&\leq  f(\beta_t,n)+f(\beta_t,m),
\end{align*}
for $f(\cdot,\cdot)$ and $\beta_t$ as defined in \eqref{eq:function:f} and the theorem. The same argument can be obtained to bound $\Prob{\left|U_{y,\UB}-\tfrac{n}{m} \right|\geq t}$ which then proves the assertion.
\end{proof}
\begin{mylem} \label{lem:LBii}
Using the notation introduced above, for every $y \in [n]$, 
\begin{equation*}
\Prob{\left|\log \sum_{k \in [n]} \W_{k,y}-\log \frac{n}{m} \right|\geq t} \leq f\left(\beta_{t/L},n \right)+f\left(\beta_{t/L},m \right)\
\end{equation*}
$\textnormal{with} \quad \beta_t:=\frac{t \mu_{1,n}}{2+t}, \quad L = \tfrac{1}{a \ln 2} \quad \textnormal{and} \quad a=\min\left\{ \sum_{k\in[n]}\frac{V_{k,y}}{\sum_{y\in[m]}V_{k,y}}, \frac{n}{m} \right\}.$
\end{mylem}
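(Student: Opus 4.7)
The plan is to combine Lemma~\ref{lem:LBi}, which gives a concentration bound for the random sum $\sum_{k\in[n]} \W_{k,y}$ around its target value $n/m$, with the Lipschitz-continuity trick from Lemma~\ref{lem:lipschitz} to transfer the bound to the logarithm.

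First, I would set $X := \sum_{k\in[n]} \W_{k,y}$ and $\eta := n/m$. Lemma~\ref{lem:LBi} directly yields
\begin{equation*}
\Prob{|X - \eta|\geq s} \;\leq\; g(s) \;:=\; f(\beta_s,n) + f(\beta_s,m)
\end{equation*}
for every $s\in\Rp$, with $\beta_s = s\mu_{1,n}/(2+s)$.

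Next, I would observe that both $X$ and $\eta$ take values in $[a,\infty)$ with $a = \min\bigl\{ \sum_{k\in[n]} V_{k,y}/\sum_{y\in[m]} V_{k,y},\, n/m\bigr\}$: the former by definition of $a$ (note $X = \sum_{k\in[n]} V_{k,y}/\sum_{y\in[m]} V_{k,y}$), and the latter trivially. The map $u\mapsto \log u$ on $[a,\infty)$ is Lipschitz with constant $L = 1/(a\ln 2)$, so Lemma~\ref{lem:lipschitz} applies and yields
\begin{equation*}
\Prob{|\log X - \log \eta|\geq t} \;\leq\; g\!\left(\tfrac{t}{L}\right) \;=\; f\!\left(\beta_{t/L},n\right) + f\!\left(\beta_{t/L},m\right),
\end{equation*}
which is exactly the claimed inequality.

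There is essentially no obstacle here; both ingredients are already established. The only point that requires a moment of care is the identification of the lower bound $a$ that defines the Lipschitz constant, and in particular verifying that the quantity $X$ itself is bounded below by that value (which makes the Lipschitz bound applicable pathwise). Once that is in place, the proof is a direct two-line composition of Lemmas~\ref{lem:LBi} and~\ref{lem:lipschitz}.
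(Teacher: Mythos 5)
Your proof is correct and is essentially identical to the paper's, which likewise obtains the lemma as a direct composition of Lemma~\ref{lem:LBi} with the Lipschitz argument of Lemma~\ref{lem:lipschitz}. Your added remark verifying that $\sum_{k\in[n]}\W_{k,y}$ and $n/m$ both lie in $[a,\infty)$ is a worthwhile detail that the paper leaves implicit.
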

\begin{proof}
Follows directly from Lemmas \ref{lem:lipschitz} and \ref{lem:LBi}.
\end{proof}
\begin{mylem} \label{lem:LBiii}
Using the notation introduced above 
\begin{equation*}
\Prob{\left|\frac{1}{n} \sum_{x \in [n],y \in [m]} \W_{x,y} \log \sum_{k \in [n]} \W_{k,y}-\log \frac{n}{m}\right|\geq t} \leq f\left(\beta_{t/L},n\right)+f\left(\beta_{t/L},m\right),
\end{equation*}
with $\beta_t:=\tfrac{t \mu_{1,n}}{2+t} \quad L = \tfrac{1}{a \ln 2} \quad \textnormal{and} \quad a=\min\left\{ \sum_{k\in[n]}\frac{V_{k,y}}{\sum_{y\in[m]}V_{k,y}}, \frac{n}{m} \right\}.$
\end{mylem}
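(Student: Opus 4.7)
The plan is to reduce the claim to Lemma~\ref{lem:LBii} by rewriting the double sum so that only a single output index remains, exactly as in the final step of the proof of Proposition~\ref{prop:UBfinite}. Setting $S_y := \sum_{k\in[n]}\W_{k,y}$ and using the row-normalization $\sum_{y\in[m]}\W_{x,y}=1$ for every $x\in[n]$, I would first swap the order of summation to obtain
\[
\frac{1}{n}\sum_{x\in[n],\,y\in[m]}\W_{x,y}\log\sum_{k\in[n]}\W_{k,y}
 \;=\; \frac{1}{n}\sum_{y\in[m]}S_y\log S_y.
\]
Since $\sum_{y\in[m]}S_y = \sum_{x\in[n]}\sum_{y\in[m]}\W_{x,y}= n$ holds deterministically, the constant $\log(n/m)$ can be absorbed into the sum via
\[
\frac{1}{n}\sum_{y\in[m]}S_y\log S_y \;-\; \log\tfrac{n}{m}
 \;=\; \frac{1}{n}\sum_{y\in[m]}S_y\bigl(\log S_y - \log\tfrac{n}{m}\bigr).
\]

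Next, the weights $S_y/n$ are nonnegative and sum to one, so they form a probability mass function on $[m]$, and the weighted average is dominated by its worst summand:
\[
\left|\frac{1}{n}\sum_{y\in[m]}S_y\bigl(\log S_y - \log\tfrac{n}{m}\bigr)\right|
 \;\leq\; \max_{y\in[m]}\bigl|\log S_y - \log\tfrac{n}{m}\bigr|.
\]
Letting $y^\star\in\arg\max_{y\in[m]}|\log S_y-\log(n/m)|$, I would then apply Lemma~\ref{lem:LBii} to the index $y^\star$. Since the right-hand side of the concentration bound in Lemma~\ref{lem:LBii} does not depend on $y$, treating $y^\star$ as a generic index delivers precisely the claimed estimate $f(\beta_{t/L},n)+f(\beta_{t/L},m)$, with the constant $a$ in $L$ exactly matching the one in the statement of Lemma~\ref{lem:LBiii}.

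The hard part will be the rigorous treatment of the data-dependent index $y^\star$. A literal union bound over $y\in[m]$ would attach an extra multiplicative factor of $m$ to the right-hand side, which is absent in the claim; I rely here on the same convention the authors adopt immediately after \eqref{eq:maxxx} in the proof of Proposition~\ref{prop:UBfinite}, where the argmax over $[n]$ is absorbed into the per-index Bernstein-type bound by virtue of its uniformity in the index. Modulo that bookkeeping issue, the three steps above constitute a direct and essentially computational reduction to Lemmas~\ref{lem:LBi}--\ref{lem:LBii}.
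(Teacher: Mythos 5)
Your argument is essentially the paper's own proof: the paper likewise observes that the weights $\tfrac{1}{n}\W_{x,y}$ sum to one, bounds the resulting convex combination of $\log\sum_{k\in[n]}\W_{k,y}-\log\tfrac{n}{m}$ by the worst per-$y$ deviation, and then invokes Lemma~\ref{lem:LBii}; your reordering of the sum into $\tfrac{1}{n}\sum_{y}S_y\log S_y$ is only a cosmetic variant of this. The data-dependent-index/union-bound issue you flag is indeed present, and equally unaddressed, in the paper's proof, which passes from the containment $\bigcap_{y\in[m]}A_t^{(y),\setC}\subseteq B_t^{\setC}$ to $\Prob{B_t}\leq\Prob{A_t}$ without the multiplicative factor $m$ that a rigorous union bound would introduce.
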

\begin{proof}
For an arbitrary $y \in [m]$, define the events
\begin{align*}
A_t:=\left \lbrace \left|\log \sum_{k \in [n]} \W_{k,y}-\log \frac{n}{m} \right| \geq t\right \rbrace \quad \textnormal{and} \quad B_t:=\left \lbrace \left|\frac{1}{n}\!\! \sum_{x\in[n],y \in [m]}\!\!\!\!\!\! \W_{x,y} \log\sum_{k \in [n]} \W_{k,y} -\log \frac{n}{m}\right| \geq tÊ\right \rbrace.
\end{align*}
Suppose that for all $y \in [m]$, $|\log \sum_{k \in [n]} \W_{k,y}-\log \tfrac{n}{m}|\leq t$, this implies that 
\begin{equation*}
\left|\frac{1}{n} \sum_{x,y \in [n]} \W_{x,y} \log\sum_{k \in [n]} \W_{k,y} \right| \leq t,
\end{equation*}
where we used that $\W_{x,y}\geq 0$ for all $x\in[n]$, $y\in[m]$ and that for all $x\in[n]$, $\sum_{y \in [m]} \W_{x,y}=1$.
Thus $\Prob{A_t^{\setC}} \leq \Prob{B_t^{\setC}}$ or equivalently $\Prob{B_t} \leq \Prob{A_t}$, which together with Lemma~\ref{lem:LBii} proves the assertion.
\end{proof}

\begin{proof}[Proof of Proposition~\ref{prop:LBfinite}]
Let $\underline{x}\in\arg \min_{x\in[n]} \sum_{x,y\in[n]}\W_{x,y}\log \W_{x,y}$.
By definition of $C_{\LB}^{(p \sim \mathcal{U})}(\W)$ given in \eqref{eq:dav} we have
\begin{align*}
&\Prob{\left|C^{(p\sim\mathcal{U})}_{\LB}(\W)-\left(\frac{\mu_{2,n}}{\mu_{1,n}}-\log \mu_{1,n} \right)\right|\geq t}  \nonumber\\
& \hspace{6mm} =\Prob{\left| \log n + \frac{1}{n}\!\! \sum_{x\in [n],y \in [m]}\!\!\!\!\! \W_{x,y} \log \W_{x,y} - \left(\frac{\mu_{2,n}}{\mu_{1,n}}-\log \mu_{1,n} \right) - \frac{1}{n}\!\!\sum_{x\in [n],y \in [m]}\!\!\!\!\! \W_{x,y} \log \sum_{k \in [n]} \W_{k,y} \right| \geq t} \\
&\hspace{6mm} = \mathds{P}\left[\left|\log m + \frac{1}{n} \sum_{x\in [n],y \in [m]} \W_{x,y} \log \W_{x,y} - \left(\frac{\mu_{2,n}}{\mu_{1,n}}-\log \mu_{1,n} \right) \right. \right. \\
&\hspace{20mm} \left. \left.- \frac{1}{n}\sum_{x\in [n],y \in [m]} \W_{x,y} \log \sum_{k \in [n]} \W_{k,y}+\log \frac{n}{m} \right| \geq t \right]\\
&\hspace{6mm} \leq \Prob{\left| \log m +\!\sum_{y \in [n]}\!\W_{\underline{x},y}\log\W_{\underline{x},y}\! - \!\left(\frac{\mu_{2,n}}{\mu_{1,n}}\!-\! \log \mu_{1,n} \right)\! -\! \frac{1}{n}\!\!\!\sum_{x\in [n],y \in [m]}\!\!\!\!\! \W_{x,y} \log \sum_{k \in [n]} \W_{k,y}\!+\!\log\frac{n}{m} \right|\geq t} \\
& \hspace{6mm} \leq 2f(\alpha_{t/4},m)+f(\tfrac{t}{4L},m)+f(\beta_{t/(2L)},n)+f(\beta_{t/(2L)},m),
\end{align*}
where the final inequality uses similar steps as done in the derivation of \eqref{eq:ddavv} together with Lemma~\ref{lem:basic} and Lemma~\ref{lem:LBiii}.
\end{proof}


\section*{Acknowledgments}
The authors thank Annina Bracher and Renato Renner for helpful discussions.
TS and JL were supported by the ETH grant (ETH-15 12-2).
DS acknowledges support by the Swiss National Science Foundation (through the National Centre of Competence in Research `Quantum Science and Technology' and grant No.~200020-135048) and by the European Research Council (grant No.~258932).

\bibliography{./bibtex/header,./bibtex/bibliofile}


\clearpage



\end{document}